\newtheorem{theo}{Theorem}
\newtheorem{defi}{Definition}
\newtheorem{prop}{Proposition}
\newtheorem{cor}{Corollary}
\newtheorem{lem}{Lemma}
\newtheorem{exm}{Example}
\newcommand{\expan}{E_{\theta}}
\newcommand{\subg}{S}
\begin{document}
\title{Distributed Compression for Computation and Bounds on the Optimal Rate}

\author{Mohammad~Reza~Deylam~Salehi and Derya~Malak\thanks{M.~R.~Deylam~Salehi and D. Malak are with the Communication Systems Department, EURECOM, Biot Sophia Antipolis, FRANCE (Emails: \{deylam, malak\}@eurecom.fr). An early version of this work appeared in Proc., Allerton 2023~\cite{10313467}. This work is partially supported by a Huawei France-funded Chair towards Future Wireless Networks. 

Co-funded by the European Union (ERC, SENSIBILITÉ, 101077361). Views and opinions expressed are however those of the authors only and do not necessarily reflect those of the European Union or the European Research Council. Neither the European Union nor the granting authority can be held responsible for them.} 
}
\maketitle
 
\begin{abstract} 
We address the problem of distributed computation of arbitrary functions of two correlated sources $X_1$ and $X_2$, residing in two distributed source nodes, respectively. We exploit the structure of a computation task by coding source characteristic graphs (and multiple instances using the $n$-fold OR product of this graph with itself). For regular graphs and general graphs,  we establish bounds on the optimal rate---characterized by the chromatic entropy for the $n$-fold graph products---that allows a receiver for asymptotically lossless computation of arbitrary functions over finite fields. For the special class of cycle graphs (i.e., $2$-regular graphs), we establish an exact characterization of chromatic numbers and derive bounds on the required rates. Next, focusing on the more general class of $d$-regular graphs, we establish connections between $d$-regular graphs and expansion rates for $n$-fold graph powers using graph spectra. Finally, for general graphs, we leverage the Gershgorin Circle Theorem (GCT) to provide a characterization of the spectra, which allows us to build new bounds on the optimal rate. Our codes leverage the spectra of the computation and provide a graph expansion-based characterization to efficiently/succinctly capture the computation structure, providing new insights into the problem of distributed computation of arbitrary functions.
\end{abstract}

\begin{IEEEkeywords}
Distributed functional compression, characteristic graph, graph entropy, graph spectrum, chromatic number, graph expansion, Gershgorin circle.
\end{IEEEkeywords}

%%%%%%%%%%%%%%%%%%%%%%%%%%%%%%%%%%%%%%%%%%%%%%%%%%%%%%%%%% intro-sec1
\section{Introduction}
\label{sec:intro}

Data compression is the process of using fewer bits than the original size of the source, which is given by Shannon's entropy of a source~\cite{shannon1948mathematical} in the case of a point-to-point communication model. In the setting of distributed sources,  where the goal is to recover them jointly at a receiver, the Slepian-Wolf theorem gives the fundamental limits of rate for compression~\cite{slepian1973noiseless}. In the case when the receiver wants to compute a deterministic function of distributed sources, a further reduction in compression is possible via accounting for the structure of the function as well as the structure of the joint source distribution~\cite{AO96}. This approach is known as distributed functional compression, in which a function represents an abstraction of a particular task, the sources separately compress their data and send \emph{color encodings} of their data to a common receiver, and the receiver, from the obtained colors, recovers the desired function of the sources. This approach differs from the conventional approach~\cite{slepian1973noiseless}, where the receiver jointly decodes source sequences.

\subsection{Motivation and Literature Review}
Let us begin with an example. Consider a college student database with information including the rental records, demographics and health of individuals. The Ministry of Science wants to offer housing aid to a particular group of students by requiring information solely on the rental contracts and payslips of the students, and without disclosing their personal data, due to privacy and redundancy constraints. 
In data compression, storing and sending a stream of data that is not always required is not practical from the perspectives of storage and transmission. 
The scenario of student housing is an example of realizing functional compression, which avoids compressing and transmitting large volumes of all available data, and is instead tailored to the specifics of the function, i.e., a student's eligibility for getting housing aid or not.

In Shannon's breakthrough work in \cite{shannon1948mathematical}, the function to be recovered at the receiver is the identity function of the source variable, i.e., the source itself. Generalizing the noiseless coding of a discrete information source, given in~\cite{shannon1948mathematical}, to distributed compression and joint decoding of two jointly distributed and finite alphabet $X_1$ and $X_2$, Slepian-Wolf theorem gives a theoretical lower bound for the lossless coding rate of distributed coding of such sources~\cite{slepian1973noiseless},  
where the two data sequences of memoryless correlated sources with finite alphabets ${\bf X}_1^n$, and ${\bf X}_2^{n}$ are obtained by $n$ repeated independent drawings from a discrete bivariate distribution. Practical implementation schemes for Slepian-Wolf compression have been proposed, including \cite{pawlak2003postfiltering,ho2006random,ahlswede2000network}. In functional compression of distributed sources $X_1$, and $X_2$, however, the goal is to compress the sources separately while ensuring that a deterministic function $f(X_1, X_2)$ of these sources can be calculated by a user.
Prior attempts on functional compression can be categorized into works focusing on lossless and zero error compression of functions,~\cite{shannon1956zero,shannon1948mathematical, AO96,doshi2007distributed,doshi2010functional,slepian1973noiseless, OR01, feizi2014network, malak2022fractional,korner1979encode,ahlswede1975source,coleman2006low,han1987dichotomy,pradhan2003distributed, malak2024multi, malak-func, tanha2024influence,khalesi2024perfect, witsenhausen1976zero}, and those for which the compression schemes tolerate distortion for lossy reconstruction~\cite{wyner1976rate,feng2004functional,yamamoto1982wyner,berger1989multiterminal,barros2003rate,wagner2008rate, rebollo2009t}.

Several special cases of distributed compression have been studied. In~\cite{ahlswede1975source}, Ahlswede and K{\"o}rner have determined the rate region for a distributed compression setting where separate encoders encode the $n$-th realizations of correlated sources $X_1$ and $X_2$ observed by sources one and two respectively, and a receiver aims to only recover ${\bf X}_2^n$. 
K{\"o}rner and Marton, in~\cite{korner1979encode}, have dealt with the problem of distributed encoding of two binary sources $X_1$ and $X_2$ to compute their modulo-two sum, i.e., $f(X_1, X_2)=(X_1+X_2)\mod 2$, at the receiver. 
In~\cite{yamamoto1982wyner}, Yamamoto has extended the Wyner-Ziv model~\cite{wyner1976rate}, which has addressed lossy source coding with side information at the decoder, to a setting where the decoder estimates a function $f(X_1, X_2)$ of the source $X_1$ given side information $X_2$. In~\cite{han1987dichotomy}, Han and Kobayashi have established an achievable distributed functional reconstruction scheme, which depends on the structure of $f(X_1, X_2)$, and the joint distribution of $(X_1, X_2)$.
Building on~\cite{yamamoto1982wyner}, optimal coding schemes and achievable rate regions have been derived for lossless and lossy compression of source $X_1$ for distributed computation of $f(X_1, X_2)$ given side information $X_2$~\cite{OR01,shirani2021new,doshi2010functional, yuan2022lossy}, for distributed compression of sources $X_1$ and $X_2$ for the computation of $f(X_1, X_2)$~\cite{feizi2014network, SefTch2011}. 
More specifically, in~\cite{OR01}, Orlitsky and Roche have provided a single letter characterization for general functions of two variables using the notion of~\emph{source characteristic graphs} (confusion graphs) introduced by K{\"o}rner~\cite{korner1973coding}, where the vertices are the possible realizations of a source and the edges capture the function structure. 
%%%%%%%%%%%%%%%%%%%%%%%%%%%%%

%%%%%%%%%%%%%%%%%%%%%%%%%%%%%%%%%%%%%%%%%%%%%%%%%%%%%%%%%%%%%%%%%%%%%%%%
\subsection{Overview and Contributions}
\label{contribution}

In this work, we design a coding framework for the problem of distributed functional compression with two distributed sources having access to $X_1$ and $X_2$, respectively, each with a finite alphabet, and a receiver that wants to reconstruct the function $f(X_1, X_2)$ in an asymptotically lossless manner. To capture the structure of the function $f$ in compression, we exploit the notion of \emph{source characteristic graphs}. Given source variable $X_1$, the limits of color reuse are determined by chromatic numbers $\chi(G)$~\cite{witsenhausen1976zero}\footnote{In~\cite{witsenhausen1976zero}, Witsenhausen has considered zero-error compression using characteristic graphs with side information.}, and the fundamental limits of compression rate are characterized by K{\"o}rner's graph entropy $G_{X_1}$~\cite{korner1973coding}.
To address different computation scenarios, we examine several characteristic graph topologies, including cyclic graphs, denoted by $C_{i}$ where $i$ is the number of vertices, their generalizations to $d$-regular graphs denoted by $G_{d, V}$, and general characteristic graphs denoted by $G$, as motivated next.

%%%%%%%%%%%%%%%%%%%%%%%%%%%%%%%
 
Cycle graphs (or cyclic graphs) appear in many practical scenarios, such as periodic functions, and mod functions, which are widely used in cryptography, computer algebra and science, and musical arts \cite{pettofrezzo1970elements}. In cryptography, Caesar Ciphers, Rivest-Shamir-Adleman (RSA) algorithm~\cite{paar2009understanding}, Diffie-Hellman \cite{diffie1976multiuser}, as well as Advanced Encryption Standard (AES) \cite{selent2010advanced}, International Data Encryption Algorithm (IDEA), are widely used for secure data transmission~\cite{basu2011international}. The calculation of checksums within serial numbers is another application of interest \cite{knill1981applications}. For example, ISBNs (International Standard Book Numbers) use mod 11 arithmetic for 10-digit ISBNs or mod 10 for 13-digit ISBNs to detect errors. 
In addition, International Bank Account Numbers (IBANs) use mod 97 arithmetic to identify mistakes in bank account numbers entered by users. Cyclic characteristic graphs allow for a more efficient reuse of colors compared to acyclic graphs with a higher average degree. 
Furthermore, cycles and their products --- built to capture a source sequence ${\bf X}_1^n$ --- have good connectivity properties, enabling an exact characterization of their chromatic numbers (and bounding their graph entropies).

$d$-regular graphs have broad applications ranging from representing network topologies to modeling social networks, coding theory for constructing error-correcting codes~\cite{friedman2005generalized}, random walks and Markov chains in analyzing state transitions~\cite{berestycki2018random}, spectral graph theory providing insights into graphs' characteristics~\cite{kahale1991better, kahale1992second}, and fault-tolerant systems \cite{friedman2005generalized, berestycki2018random, felber2013survey, ganesh2005effect,tentes2009expander, SZABO200797}. In general, $d$-regular graphs help model frameworks for structured data, such as graph neural networks~\cite{you2021identity}, making them one of the main topics of investigation in this paper. 
%%%%%%%%%%%%%%%%%%%%%%%%%%%%%%%

The main contributions of this paper can be summarized as follows:

\begin{itemize}
\item {\emph{Cyclic characteristic graphs:}} We derive exact expressions for the degree of a vertex ${x}^n\in [V^n]$ in the $n$-fold OR product $C_i^n$ (as detailed by Alon and Orlitsky in~\cite{AO96}) of cycles $C_i$ (where $i=V$, see Proposition~\ref{prop-even-cycle-chorom}), denoted by $deg(x)$, and for the chromatic number of even cycles $C_{2k}$, denoted by $\chi({C_{2k}})$ where $k\in \mathbb{Z}^+$. Then, we devise a polynomial-time\footnote{Finding a minimum entropy coloring in general graphs is an NP-hard problem~\cite{cardinal2008tight}.} achievable coloring scheme for odd cycles $C_{2k+1}$, leveraging the structure of $C_i$, and its OR products (see Proposition~\ref{prop-chromatic_number_cycles-odd}). Given $C_i$, we investigate the largest eigenvalue of its adjacency matrix, and using that, we present bounds on the chromatic number~(see Proposition~\ref{prop-eig-bound-chro-cycle}). We also provide bounds on K{\"o}rner's graph entropy of $C_i$ (see Proposition~\ref{prop-upper_entrop-cycle}).

\item {\emph{$d$-regular characteristic graphs:}} We characterize the exact degree of a vertex and the chromatic number of $d$-regular graphs, denoted by $G_{d, V}$, and their $n$-fold OR products $G_{d, V}^n$ (see Propositions~\ref{prop-deg-regular} and~\ref{prop-chrom-d-regular}). Additionally, given a $d$-regular graph, the concept of graph expansion helps determine how the corresponding OR products are related. Capturing the structure of the OR products graphs, we then present a lower bound on the expansion rate of $G_{d, V}^n$ (see Proposition~\ref{prop-expansion-G_dv}).

\item {\emph{General characteristic graphs:}} Given a general graph, $G(\mathcal{V},\mathcal{E})$, we calculate the degree of each vertex for its $n$-fold OR product (see Corollary~\ref{cor-degree-or-power-general-graphs}). We present upper and lower bounds on the expansion rate (see Corollary~\ref{cor-expansion_upper_bound_and _lower_bound}). We then investigate the entropy of general characteristic graphs (see Proposition~\ref{prop_entropy_odd_cycles_MIS}). We derive bounds on the largest eigenvalue (see Corollary~\ref{Cor-bound-chor-using-deg-max-min}), the chromatic number~(see Corollary~\ref{cor-general_graph-approximation}), using the adjacency matrix of the $n$-fold OR product graph and the famous \emph{Gershgorin Circle Theorem} (GCT), which is a theorem that identifies the range of the eigenvalues for a given square matrix~\cite{tretter2008spectral}. We use GCT to bound eigenvalues of the adjacency matrix of a given graph $n$-fold OR product, via exploiting the structure of OR products (see Theorem~\ref{theo-eig-any-func}, and Corollary~\ref{cor:iter-gresh-eigen}). 
\end{itemize}

By leveraging chromatic numbers and graph entropy bounds, our results, as outlined above, illustrate the connection between graph structures and entropy-based communication cost and provide insights for functional compression in distributed settings.
%%%%%%%%%%%%%%%%%%%%%%%%%%%%%%%%%%

%%%%%%%%%%%%%%%%%%%%%%%%%%%%%%%%%%
\subsection{Organization}
\label{org}
The rest of this manuscript is organized as follows. In Section~\ref{sec:model}, we review the literature and provide a technical preliminary on graphs, their valid colorings, and the $n$-fold OR products of characteristic graphs. In Section~\ref{sec:results}, we present the main results of the paper, including the achievable coloring schemes, the bounds on the degrees, eigenvalues, and the chromatic numbers for cycles, $d$-regular graphs, and general graphs. We further derive upper and lower bounds on the graph entropy and expansion rate for the $n$-fold OR product of characteristic graphs. 
In Section~\ref{sec:conclusion}, we summarize our key results and outline potential exploration directions. Proofs of the main results are presented in the Appendix.

%%%%%%
\subsection{Notation}
\label{notation}
Letter $X$ denotes a discrete random variable with distribution $p(x)$ over the finite alphabet $\mathcal{X} $, where $x$ is a realization of $X$, and ${\bf X}^n=X_1, X_2 \cdots, X_n$ is an independent and identically distributed (i.i.d.) sequence where each element is distributed according to $p(x)$. We denote matrices and vectors by boldface letters, e.g., ${\bf A }$. 
The joint distribution of the source variables $X_1$ and $X_2$ is denoted by
$p(x_1,x_2)$. We denote by $G_{X_1}$ and $G_{X_2}$ the characteristic graphs that sources $X_1$ and $X_2$ build for computing a given function $f(X_1, X_2)$, respectively.
We use the boldface notation ${\bf x}_1^n=x_{11},\dots, x_{1n}$, to represent the length $n$-th realization of $X_1$, and similarly for $X_2$. We let $[n]=\{1,\dots,n\}$ for $n\in\mathbb{Z}^+$, and $[a,b]=\{a, a+1,\cdots, b\}$  for $a,b\in\mathbb{Z}^+$.

Given a graph $G(\mathcal{V}, \mathcal{E})$, we denote by $\chi({G})$ and $\chi_f(G)$ its chromatic and fractional chromatic numbers, respectively. We denote by $deg({x_k})$ the degree of the vertex ${x_k}\in [V]$, 
$d_{\max} =\max_{k \in[V]} deg({x}_k)$ is the maximum vertex degree of $G$, and $d_{\rm avg}(x_k)$ denotes the average degree over $x_k \in [V]$ in $G^1$. We denote by $C_i=G(\mathcal{V},\,\mathcal{E})$ a cycle graph with $i=V$ vertices, by $C_{i}^{n}=G(\mathcal{V}^n, \mathcal{E}^n)$ its $n$-fold OR product, and by $\mathcal{C}^j_{i}(l)$ the set of distinct colors in sub-graph $l\in [V]$ of the $j$-fold OR product of $C_i$. 
We denote the coloring distribution of $G$ by $\mathcal{C}_{G}$, and the set of distinct colors of $G$ by $\mathcal{C}(G)$. We denote a $d$-regular graph on $V$ vertices by $G_{d, V}$. We denote a complete graph with $i=V$ vertices and its $n$-fold OR product by $K_i$ and $K_i^n$, respectively. We denote $\mathcal{C}_{G}$ as the PMF of a valid coloring of a graph $G$.

Given a graph $G(\mathcal{V},\mathcal{E})$, we denote by ${\bf J}_V$ and ${\bf I}_V$ an all-one and identity matrices of size $V\times V$ each, by ${\bf A}_f$ the adjacency matrix, where ${\bf A}_f=(a_{xx'})_{1\leq { x},\,{x}'\leq V}$ is a symmetric $(0,1)$-matrix with zeros on its diagonal, i.e., $a_{xx}=0$, and  $a_{xx'}=1$ indicates that two distinct vertices ${x},\,{x}'\in [V]$ are adjacent, and $a_{xx'}=0$ when there is no edge between them. %and also, for any general matrix ${\bf A}$. 
The trace of ${\bf A}_f$ is denoted by $trace({\bf A}_f)$. The largest and the smallest eigenvalues of ${\bf A}_f$ are denoted by $\lambda_1({\bf A}_f)$, and $\lambda_V({\bf A}_f)$, respectively, $\Xi ({\bf A}_f)$ is the set of all eigenvalues of ${\bf A}_f$, and $\vartheta(G^{j})$ is the set of distinct eigenvalues of the adjacency matrix of $G^j$, i.e., ${\bf A}_{f}^{j}$. 
Exploiting GCT to characterize the eigenvalues of ${\bf A}_f$, the $k$-th interval that contains an eigenvalue is denoted by $\delta_k$, $k \in [V]$. LHS and RHS represent the left and right-hand sides of an equation, respectively.
%%%%%%%%%%%%%%%%%%%%%%%%%%%%%%%%%%%%%%%

%%%%%%%%%%%%%%%%%%%%%%%%%%%%%%%%%%%

\section{Technical Preliminary}
\label{sec:model}

This section introduces the fundamental concepts related to graph theory, such as degrees, independent sets, paths, cycles, $d$-regular graphs, and the notion of graph expansion~\cite{havel1955remark,beigel1999finding,chen1997graph, korner1973coding,nagle1966ordering}. Furthermore, it discusses the concepts of characteristic graphs, the $n$-fold OR products of characteristic graphs, and traditional and fractional coloring of graphs~\cite{nagle1966ordering,nilli1991second,alon2002graph}.

%%%%%
\subsection{Source Characteristic Graphs and Their OR Products}
\label{characteristic_graphs}

A graph is represented by $G(\mathcal{V}, \mathcal{E})$, where $\mathcal{V}=[V]$ denotes the set of its vertices, with cardinality $|\mathcal{V}|=V$, and $\mathcal{E}$ is the set of its edges, with cardinality $|\mathcal{E}|=E$.

\begin{defi}[Degree of a vertex \cite{havel1955remark}]
\label{def-Degree-of-a-vertex}
Given $G(\mathcal{V}, \mathcal{E})$, the degree of a vertex ${x}_k\in[V]$ for $k\in[V]$, represented by $deg({x}_k)$, is the number of edges it is connected to, i.e., the number of neighbors of ${x_k}\in [V]$. The average degree across nodes in $G$ is denoted by ${d}_{\rm avg}=\frac{\sum_{x_k \in [V]} deg (x_k)}{V}$.  
\end{defi}

We next introduce the concept of an independent set, which plays a critical role in determining a valid coloring of characteristic graphs that we detail in Section~\ref{coloring}.

\begin{defi}[Independent set, and maximal  independent set \cite{beigel1999finding}]
\label{Independent set, and maximum independent set}
An independent set, $\text{IS}_G$, in $G(\mathcal{V}, \mathcal{E})$ is a subset of vertices of $\mathcal{V}$, such that no two are adjacent. A maximal independent set, $\text{MIS}_G$, is an independent set in $G$ that is not a subset of any other independent set of $G$.  
\end{defi}

%%%%%%%%%%%%%%%%%%%%%%%%%%%%%%%%%

In distributed functional compression with $M$ source nodes, each holding $X_k\in\mathcal{X}_k$, $k\in[M]$, a receiver aims to reconstruct $f(X_1,X_2,\dots,X_M)$. To aid in distinguishing function outcomes, each source $k$ builds a characteristic graph $G_{X_k}$ with vertex set $\mathcal{X}_k$ and edges determined by the function and the joint source PMF. Next, we define characteristic graphs for a bivariate setting.

\begin{defi}[Source characteristic graphs \cite{korner1973coding}]
\label{characteristic graphs}
Let $X_1$ and $X_2$ be two distributed source variables with a joint distribution $p(x_1,x_2)$. 
Source one builds a characteristic graph $G_{X_1}=G(\mathcal{V}, \mathcal{E})$ for distinguishing the outcomes of a function $f(X_1,\, X_2)$, where $\mathcal{V}=\mathcal{X}_1$, and an edge $(x_1^1, x_1^2)\in  \mathcal{E}$ if and only if there exists a $x_2^1 \in \mathcal{X}_2$ such that $p(x^1_1, x^1_2)\,\cdot\, p(x^2_1, x^1_2) > 0 $ and $f(x^1_1, x^1_2)\neq f(x^2_1, x^1_2)$, i.e., these two vertices of $G_{X_1}$ should be distinguished. 
\end{defi}

%%%%%%%%%%%%%%%%%%%%%

\begin{defi}[Entropy of a characteristic graph~\cite{korner1973coding}]
Given a source random variable $X_1$ with  characteristic graph $G_{X_1}=G(\mathcal{V}, \mathcal{E})$, the entropy of $G_{X_{1}}$ is defined as
\begin{align}
\label{eq-graph_entropy-korner}
H_{G_{X_1}}(X_1)= \min\limits_{X_1\in U_1\in \text{MIS}_{G_{X_1}}} I(X_1; U_1)\ ,
\end{align}
where $\text{MIS}_{G_{X_1}}$ represents the set of all MISs of $G_{X_1}$ \cite{AO96}. The notation $X_1 \in U_1 \in \text{MIS}_{G_{X_1}}$ indicates that the minimization is performed over all distributions $p(u_1, x_1)$ such that $p(u_1, x_1) > 0$ implies $x_1 \in u_1$, where $U_1$ is an MIS of $G_{X_1}$.
\end{defi}

From (\ref{eq-graph_entropy-korner}), it follows that $H_{G_{X_1}}(X_1)\leq H(X_1)$, yielding savings over~\cite{shannon1948mathematical}. Next, we introduce a path, which refers to a sequence of edges connecting a subset of vertices within a graph.
%%%%%%%%%%%%%%%%%%%%%%%%%%%%%%%%

\begin{defi}[Path and Hamiltonian path~\cite{bermond1979hamiltonian}]
\label{def-path}
Given an undirected graph $G(\mathcal{V}, \mathcal{E})$, a path is a sequence of vertices starting and ending with distinct vertices, where each pair of consecutive vertices is connected by an edge, and no vertex is repeated. A path that includes every vertex of a graph exactly once is called a Hamiltonian path.
\end{defi}

%%%%%%%%%%%%%%%%%%%%%%%%%%
We next define the class of $d$-regular graphs that embeds the special case of cycles. 

\begin{defi}[$d$-regular graphs~\cite{chen1997graph,nagle1966ordering}]
\label{d-regular graph, and Cycle graphs}
A $d$-regular graph $G_{d, V}(\mathcal{V}, \mathcal{E})$ is a graph where each vertex has the same degree $d$, i.e., $d=deg(x_k)$ for all $x_k\in [V]$. 
A $d$-regular graph $G_{d, V}$, where $d, V \in \mathbb{Z}^+$, satisfies $V \geq d+1$. Furthermore, if $d$ is odd, the total number of vertices $V$ must be even \cite{axenovich2014lecture}. Cyclic graphs are $2$-regular graphs with a Hamiltonian path.  
\end{defi}

We next define the expansion rate of graphs.

%%%%%%%%%%%%%%
\begin{defi}[Expansion rate~\cite{bang2008digraphs}]
\label{defini-expansion}

An undirected expander graph $G$ is a graph having relatively few edges in comparison to its number of vertices while maintaining {\emph{strong connectivity properties}},  which ensures that each vertex is reachable by paths from at least $2$ directions~\cite{bang2008digraphs}. The expansion of $G$ with respect to a subset of its vertices $Y \subseteq [V]$ is determined as follows:
\begin{align}
\label{eq-graph_expansion_formula}
\expan(G)=\frac{|N_G(Y)|}{|Y|} \ ,
\end{align}
where $N_G(Y)=\{u \in [V], u \notin Y: \exists v \in Y, (v,u) \in\mathcal{E}\}$ denotes the set of neighbors of $Y$.
\end{defi}
%%%%%%%%%%%%%%%%%%%%%%%%%%%%%%%%%%%
We next illustrate the concept of characteristic graphs with an example.

\begin{exm}
\label{characteristic graph ex} 
Consider the problem of distributed functional compression of $f(X_1, X_2)=(X_1+X_2)\mod 2$, with two source variables $X_1$ and $X_2$ and one receiver, where $X_1$ is uniform over the alphabet $\mathcal{X}_1=\{0,1,2,3\}$, and $X_2$ is uniform over $\mathcal{X}_2=\{0,1\}$. 

For even-valued, i.e., $X_1\in \{0,2\}$, the output is $f(X_1,X_2)=X_2$, and for odd values, i.e., when $X_1\in \{ 1,3\}$, we have $f(X_1,X_2)=(X_2+1)\mod 2$. In the characteristic graph built for $X_1$, namely $G_{X_1}$, we do not need to distinguish the elements of $\{0,2\}$ from each other, and similarly for the elements of $\{1,3\}$. However, these two sets must be distinguished, which is possible via using two distinct colors $B$ and $O$. To that end,  we assign the elements of $\{0,2\}$ and $\{1,3\}$ colors $B$ and $O$, respectively. Similarly, the outcome $X_2=1$ is assigned $R$, and $X_2=0$ is assigned $Y$. We illustrate the coloring of $G_{X_{1}}$ and $G_{X_2}$ in Figure~\ref{fig:funcom}. Transmission of an assigned color pair from distributed sources to a common receiver according to the described rule is sufficient for the receiver to determine the corresponding outcome of $f$ via a look-up table~\cite{feizi2014network}. The scheme satisfies the necessary condition since both $G_{X_1}$ and $G_{X_2}$ require at least 2 colors (1 bit per source). Using fewer colors by assigning the same color to adjacent vertices violates the condition $f(x^1_1, x^1_2) \neq f(x^2_1, x^1_2)$ when $p(x^1_1, x^1_2)\cdot p(x^2_1, x^1_2)> 0$.
\end{exm}
 
\begin{figure}[h!]
\centerline{\includegraphics[scale=0.25]{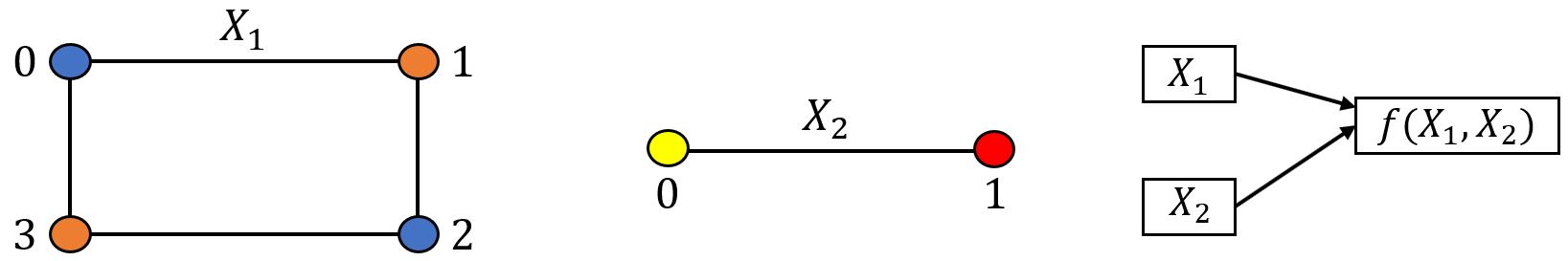}}
\caption{Distributed functional compression with two sources and a receiver, where $G_{X_1}$ is cyclic.}
\label{fig:funcom}
\end{figure}

To determine the fundamental limits of lossless compression of a source sequence~\cite{AO96}, we exploit the notion of \emph{$n$-fold OR products} of graphs\footnote{Here, we exclusively focus on the OR product graphs for realizing lossless compression of multi-letter schemes~\cite{korner1998zero}. If the goal is instead to perform zero-error source compression, one needs to employ the $n$-fold AND products of graphs~\cite{koulgi2003zero, tuncel2007kraft}.}, as introduced next.

Given a source variable $X$, we next detail the construction for the $n$-fold OR products graphs, by generalizing the rule in Definition~\ref{characteristic graphs} for building $G_X$ to multiple source instances ${\bf X}^n$.

%%%%%%%%%%%%%%%%%%%%%%%%%%%%%%%
\begin{defi}[$n$-fold OR product graph \cite{nagle1966ordering,nilli1991second,alon2002graph}]
\label{OR power}
For $n>1$, the $n$-fold OR product of $G_{X}=G(\mathcal{V}, \mathcal{E})$ is represented as $G_{X}^n=G(\mathcal{V}^{n},\mathcal{E}^{n})$, where $\mathcal{V}^n=\mathcal{X}^n$, and given two distinct vertices ${\bf x}_1^n=x_{11}^1,\dots, x_{1n}^1\in[V^n]$, and ${\bf x}_2^n=x_{21}^1,\dots, x_{2n}^1 \in[V^n]$, it holds that $({\bf x}_1^n,{\bf x}_2^n)\in \mathcal{E}^n$, when $\exists$ at least one $k\in [n]$ such that $({x}_{1k}^{1}, {x}_{1k}^{2})\in \mathcal{E}$.  
\end{defi}
%%%%%%%%%%%%%

$n$-fold OR products have been extensively used in asymptotically lossless compression of distributed sources for computing functions, see e.g., \cite{feizi2014network, malak2022fractional, OR01}. Building on \cite{korner1973coding}, the authors in \cite{OR01} have demonstrated that given two distributed source variables $X_1$ and $X_2$, the lowest sum rate needed for distributed computing of $f(X_1, X_2)$ could be achieved by encoding the $n$-fold OR product graphs $G_{{X}_1}^n$ and $G_{{X}_2}^n$, in the limit as $n$ goes to infinity.

%%%%%%%%%%%%%%%%%%%%%%%%%%%%
We next describe how to determine the $n$-fold OR product $G_{X}^n$ of a characteristic graph iteratively, from the $(n-1)$-fold OR product graph $G_{X}^{n-1}$.

%%%%%%%%%%%%%%%%%%%%%%%%%
\begin{defi}[Sub-graphs of $G_{{X}}^n$]
\label{def-sub-graph}
Given an $n$-fold OR product graph $G_{X}^n$, we denote the collection of graphs $\{G_{X}^{n}(l)\}_{l\in [V]}$ in $G_{X}^n$ as the sub-graphs of $G_{X}^n$, where each of $G_{X}^n(l)$ represents $(n-1)$-fold OR product of $G_X$ with itself.
\end{defi}

We next define \emph{'full connection'} that refers to links between sub-graphs in an OR product.

%also known as a complete bipartite connection
\begin{defi}[Full connection between two graphs]
\label{def-full-conec}
Given $G_1(\mathcal{V}_1,\mathcal{E}_1)$ and $G_2(\mathcal{V}_2,\mathcal{E}_2)$, if for each ${x}_{1k} \in [V_1]$, and each ${x}_{2t} \in [V_2]$, there exists an edge $({x}_{1k}, {x}_{2t})$, i.e., the bipartite graph formed between $[V_1]$ and $[{V}_2]$ is complete, we describe the two graphs as having a full connection.  
\end{defi}
%%%%%%%%%%%%%%%%%%%%%%%%%%%%%%%

With these principles established, we next detail their application to functional compression.

%%%%%%%%
\subsection{Coloring of Characteristic Graphs}
\label{coloring}

A valid (proper) vertex coloring of $G_{X_1}$ assigns a color to each vertex such that adjacent vertices receive distinct colors, indicating which source realizations need different codes (colors). Non-adjacent vertices may share the same color, which is known as~\emph{traditional graph coloring}.
A valid coloring that achieves the minimum entropy among all valid colorings gives a lower bound to the compression rate for the lossless reconstruction of the desired function. The minimum number of colors required to achieve a valid coloring of $G_{X_1}$ is called the chromatic number\footnote{In general, the problem of determining $\chi({G_{X_1}})$ is NP-complete \cite{cardinal2008tight}.}, $\chi (G_{X_1})$. \emph{Fractional graph coloring} generalizes the concept of traditional coloring by assigning a fixed number of distinct colors from a set of available colors to each ${x}_k\in[V]$ such that adjacent vertices have non-overlapping sets of colors~\cite{scheinerman2011fractional,malak2022fractional}. Moreover, given the connection between the coloring distribution and the minimum entropy of graphs~\cite{korner1973coding}, the fractional chromatic number of $G$, which is a lower bound on the chromatic number of $G$, is important to investigate. 

We next detail how to obtain a valid $b$-fold coloring out of $a$ available colors.

\begin{defi}[The fractional chromatic number]
\label{def-fractional-coloring}    
A valid $b$-fold coloring assigns sets of distinct colors with cardinality $b$ to each vertex such that adjacent vertices receive disjoint sets of $b$ colors. A valid $a:b$ coloring is a valid $b$-fold coloring that uses a total of $a$ available colors. 
The notation $\chi_b({G})$ represents the $b$-fold chromatic number of $G$, with the smallest $a$ number of colors, such that an $a:b$ coloring exists. The fractional chromatic number of $G$ is given as
 \begin{align}
 \label{eq-proof-theorem-converse-fractional-number}
\chi_f({G})=\lim_{b\to \infty} \,\frac{\chi_{b}({G})}{b}=\inf_{b} \,\frac{\chi_{b}({G})}{b} \ ,
 \end{align}
where $\chi_{b}({G}) \in \mathbb{Z}^+$, and the second equality follows from the subadditivity of ${\chi_{b}({G})}$~\cite{scheinerman2011fractional}. 
\end{defi}

Given a graph $G$, the fractional chromatic number of its $n$-fold product is computed as~\cite{malak2022fractional}:
\begin{align}
\label{eq-n-fold-fractional}
\chi_f({G^n}) = (\chi_f({G}))^n \ .
\end{align}
Traditional coloring is a special case of the valid $a:b$ coloring of $G^n$ where $b=1$ and $a=\chi({G^{n}})$.

To investigate a general characteristic graph $G$ and its chromatic number and entropy, we aim to leverage the eigenvalue relationships of its adjacency matrix. Therefore, we define the GCT for square and block matrix representations to address coloring in general characteristic graphs.
%%%%%%%%%%%%%%%%%%%%%%%%%%%%%%%%%%%%%%%%%%%%%%%%%%%%%%
\subsection{Gershgorin Circle Theorem (GCT)}
\label{prelem}
We next introduce the GCT for the eigenvalues of square matrices and later establish their connection to the chromatic number in Section~\ref{sec:Eig-Adj-Mat-Cycl-Graph-}.

%%%%%%%
\begin{defi}[GCT for square matrices\cite{tretter2008spectral}]
 \label{rem-gersh-circle}
Given a square matrix ${\bf A} \in \mathbb{C}^{V \times V}$ with elements $a_{kt}$, where $k,t\in [V]$, we define $D_k$ as a circle that contains the eigenvalue $\lambda_k({\bf A})$ as follows:
\begin{align}
 \label{eq-def-gersh-circle}
 D_k= \{ {\lambda_k({\bf A})} \in \mathbb{C}\: : |{\lambda_k({\bf A})}-a_{kk}|\leq \sum_{t\neq k} |a_{kt}| \} \ ,
\end{align}
where $\sum_{t\neq k} |a_{kt}|$ is the sum of the absolute values of the non-diagonal entries in the $k$-th row of ${\bf A}$, and the set of all eigenvalues $\Xi({\bf A})=\{\lambda_k({\bf A}): k\in [V]\}$ satisfies $\Xi({\bf A}) \in D_1\cup D_2 \cup \cdots D_V$.
\end{defi}

%%%%%%%%%%%%%%%%%%%%%%%%%%%%%%%%%%%%%%%%%%%%%%

Given that ${\bf A}_f^n$ is a block matrix, we next define GCT for block matrices.

\begin{defi}[GCT for block matrices \cite{tretter2008spectral, salas1999gershgorin}]
\label{def-gresh-block} 
Consider a symmetric matrix ${\bf A} \in \mathbb{R}^{mn \times mn}$, composed of $n$ block matrices, where each block matrix is denoted by ${\bf A}_{kt} \in \mathbb{R}^{m \times m}$ for $k,t \in [m]$. Let $\Xi({\bf A})$ represent the set of all eigenvalues of ${\bf A}$. The circle corresponding to eigenvalue $\lambda_k$ of the block matrix ${\bf A}$ is then defined as follows:
\begin{align}
\label{eq-gresh-block-matrix}
D^b_k = \{ \: \lambda_k({\bf A}) \in \Xi({\bf A}) \: :  \,  \, \left| \lambda_k({\bf A}) - \lambda_l({\bf A}_{kk}) \right| \leq \sum_{t=1, \, t \neq k}^{n} |{\bf A}_{kt}| \},
\end{align}
where $l\in [n]$, and the set of eigenvalues of ${\bf A}$, i.e., $\Xi({\bf A})$, satisfy
\begin{align}
\label{eq-block-gresh-disc-union}
    \Xi({\bf A})\in \cup_{k=1}^{n}D^b_{k} \ .
\end{align}
\end{defi}

From (\ref{eq-gresh-block-matrix}), we deduce that the regions (circles) covering the eigenvalues of ${\bf A}$ are centered at eigenvalues of ${\bf A}_{kk}$, and circles' radius are enlarged by the size of the non-diagonal matrices ${\bf A}_{kt}$. Given an $n$-fold OR product graph with an adjacency matrix ${\bf A}_f^n$, where ${\bf A}_f^n$ is a ${V^n\times V^n}$ binary and symmetric matrix, the circles $D^b_{k}$ in (\ref{eq-gresh-block-matrix}) can be simplified to block intervals $\delta^b_k$.
%%%%%%%%%%%%%%%%%%%%%%%%%%%%%%%%

%%%%%%%%%%%%%%%%%%%%%%%%%%%%%%% 

%%%%%%%%%%%%%%%%%%%%%%%%%%%%%%%%

%%%%%%%%%%%%%%%%%%%%%%%%%%%%%%%
%%%%%
\section{Bounds On Cyclic and Regular Graphs}
\label{sec:results}

We here detail characteristic graphs that are $d$-regular and derive lower and upper bounds on their chromatic numbers and graph entropies. Our novel contributions include the characterization of chromatic numbers for $n$-fold OR products of cycles, as well as a novel coloring scheme for OR products of odd cycles, as detailed in Section~\ref{sec:coloring_cyclic_graphs}. 
In Section~\ref{entropy-cycles-and-general-graph}, we bound the entropy of characteristic graphs for cycles. 
Given a cyclic graph, in Section~\ref{sec:Eig-Adj-Mat-Cycl-Graph-}, we analyze its key properties using the eigenvalues of its adjacency matrix, and in Section~\ref{sec:choromatic-using-eigenvalue-cycles}, we bound its chromatic number. 
In Section~\ref{sec:sub-d-reg}, we characterize the degrees and chromatic numbers for regular graphs and their OR products. Section~\ref{sec:graph-expansion} details the expansion rates of OR products of regular graphs, %exploiting the connectivity properties of OR products, 
with implications on the fundamental limits of compressibility of such graphs.

%%%%%%%%%%%%%%%%%%%%%%%%%%%%%%%%%%%%%%%%%%%%%%%%

\subsection{Coloring Cyclic Graphs}
\label{sec:coloring_cyclic_graphs}

Let $C_i$ be a cycle graph with $i$ vertices, that represents the characteristic graph that source $X_1$ builds for computing $f$ (similarly for source $X_2$). For an even cycle $i=2k$, and for an odd cycle, $i=2k+1$, for some $k\in \mathbb{Z}^+$. 
We seek to compress $G_{X_1}$ and $G_{X_2}$ to recover the desired function outcome at a receiver in an asymptotically lossless manner. 
To that end, we will determine the minimum entropy coloring for the $n$-fold OR product of $C_i$, denoted by $C_i^n$ (and similarly for $G_{X_{2}}$), for the receiver to decode $f$ from the received colors. 

We start by determining the degree of each vertex in $C_{i}^{j}$ for $j\in[n]$. 
%The proof of the following proposition is given in \cite[Theorem~1]{10313467}. 

\begin{prop}[Degree of vertices in $C_i^n$]
 \label{prop-deg-cycles}
The degrees\footnote{In regular graphs, where all vertices have the same degree, we omit the index $k$ of $x_k$ in Propositions~\ref{prop-deg-cycles} and~\ref{prop-deg-regular}.} in the $n$-fold OR product of a cycle graph, $C_i^n$ for $n\geq 2$, are calculated as follows:
\begin{align}
\label{eq-degree_function}
deg({x}^n)= 2+\sum_{j=1}^{n-1}2 (V^j )=2\cdot\frac{V^n-1}{V-1} \ , \quad \forall {x}^{n}\in [V^n] \ . 
\end{align}
\end{prop}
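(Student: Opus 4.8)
The plan is to establish the formula by induction on the product order $n$, using the iterative sub-graph decomposition of Definition~\ref{def-sub-graph} together with the full-connection structure of Definition~\ref{def-full-conec}. The base case is $n=1$: a cycle $C_i$ is $2$-regular, so $deg(x)=2$ for every vertex, which agrees with $2\cdot\frac{V-1}{V-1}=2$ and anchors the recursion. Throughout I would exploit that $C_i$ is vertex-transitive; since the OR product inherits this symmetry, the degree is the same for all ${x}^n\in[V^n]$, so it suffices to compute it for one representative vertex and the index $k$ may be dropped as in the footnote.

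For the inductive step I would regard $C_i^n$ as $V$ copies of $C_i^{n-1}$, namely the sub-graphs $\{C_i^n(l)\}_{l\in[V]}$ indexed by the value $l$ of the newly introduced coordinate, and partition the neighbors of a fixed vertex ${x}^n$ into two groups. The first group consists of the neighbors that share this coordinate value with ${x}^n$: these are exactly the neighbors of the corresponding vertex inside its own copy of $C_i^{n-1}$, of which there are $deg({x}^{n-1})=2\cdot\frac{V^{n-1}-1}{V-1}$ by the induction hypothesis. The second group arises from the new coordinate: because this coordinate lives on the cycle $C_i$, it has exactly two cycle-neighbors, and full connection (Definition~\ref{def-full-conec}) joins ${x}^n$ to all $V^{n-1}$ vertices of each of the two corresponding sub-graphs, contributing $2V^{n-1}$ neighbors. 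Summing the two groups gives the recursion $deg({x}^n)=deg({x}^{n-1})+2V^{n-1}$, and unrolling it down to the base case yields $deg({x}^n)=2+\sum_{j=1}^{n-1}2V^{j}$; the closed form $2\cdot\frac{V^n-1}{V-1}$ then follows from the finite geometric sum $\sum_{j=0}^{n-1}V^{j}=\frac{V^n-1}{V-1}$.

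I expect the main obstacle to be the second counting step, i.e., rigorously justifying that the newly added coordinate contributes precisely the $2V^{n-1}$ full-connection edges to the two cycle-adjacent sub-graphs, and that these edges are genuinely new and disjoint from the ones already counted inside the copy of $C_i^{n-1}$. Establishing this edge-disjointness---that every neighbor of ${x}^n$ is counted in exactly one of the two groups, with no omissions or overlaps across sub-graphs---is the crux on which the clean additive recursion rests; once it is in place, the telescoping and the geometric-series simplification are routine.
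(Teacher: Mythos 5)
Your argument reproduces the paper's own proof almost exactly: the same induction on $n$, the same decomposition of $C_i^n$ into $V$ copies of $C_i^{n-1}$, and the same two-group partition of the neighbors of a fixed vertex (those inside its own sub-graph, plus full connections to the two cycle-adjacent sub-graphs), followed by the same telescoping geometric sum. However, the step you yourself single out as the crux---that every neighbor of ${x}^n$ falls into exactly one of the two groups, with no omissions---is precisely where the argument breaks down under the OR-product rule of Definition~\ref{OR power}. Write ${x}^n=(l,{\bf u})$ with $l\in[V]$ the new coordinate and ${\bf u}\in[V^{n-1}]$. A vertex $(m,{\bf v})$ with $m\notin\{l,l-1,l+1\}$ is still adjacent to $(l,{\bf u})$ whenever ${\bf v}$ is a neighbor of ${\bf u}$ in $C_i^{n-1}$, because the OR product requires only that \emph{some} coordinate pair be an edge. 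These cross-edges into the $V-3$ non-adjacent sub-graphs are missing from your count (and from the block matrix in (\ref{eq-AdjacencyMatrix}), whose off-diagonal ${\bf Z}$ blocks assert that such sub-graphs are completely disconnected). The correct recursion is $deg({x}^n)=(V-2)\,deg({x}^{n-1})+2V^{n-1}$, which solves to $deg({x}^n)=V^n-(V-2)^n$; for example, in $C_4^2$ the vertex $(0,0)$ has $16-4=12$ neighbors rather than the $2+2\cdot 4=10$ given by (\ref{eq-degree_function}). The two counts coincide only for $i=3$, where every pair of distinct sub-graphs is adjacent.

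In short, you correctly identified the load-bearing step but did not close it, and it cannot be closed as stated for $i\ge 4$: the quantity your induction actually computes is the degree in the lexicographic power $C_i[C_i^{n-1}]$ (full connection only between cycle-adjacent copies, sub-graph structure within each copy), which is a proper subgraph of the OR power of Definition~\ref{OR power}. Since the paper's own appendix proof rests on the identical assumption, your proposal is a faithful reconstruction of that argument---but to make the proposition rigorous one must either take the block structure of (\ref{eq-AdjacencyMatrix}) as the \emph{definition} of $C_i^n$, or replace the right-hand side of (\ref{eq-degree_function}) by $V^n-(V-2)^n$.
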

\begin{proof}
See Appendix \ref{App:theo-deg-even-cycles}.
\end{proof}
%%%%%%%

From Proposition~\ref{prop-deg-cycles}, we infer that for a given pair of vertices ${x}_t, {x}_k \in [V]$ where $t\neq k$, if $deg({x}_t)=deg({x}_k)$, then for the $n$-fold OR product, $deg({x}_t^n)=deg({x}_k^n)$, for ${x}_t^n, {x}_k^n \in [V^n]$, i.e., taking the $n$-fold OR products does not alter the equality of degrees. 
Therefore, for any $d$-regular graph (including cycles), we derive the following result about the regularity of its OR products.

\begin{cor}
\label{cor-total-edge-d-regular}
Given a $d$-regular graph $G_{d, V}$, where $d=deg({x})$, its $n$-fold OR product with itself for $n\geq 1$, i.e., $G_{d,V}^n$, is also a regular graph, with degree $deg({x}^n)$, and total number of edges $E^n=\Big(\sum_{k=1}^{V^n}deg({x}^{n})\Big)/2$. 
\end{cor}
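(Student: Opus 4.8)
The plan is to read the corollary as a direct consequence of Proposition~\ref{prop-deg-cycles}---specifically the degree-equality invariance noted immediately after it---combined with the elementary handshaking identity. I would separate the statement into its two assertions: (i) $G_{d,V}^n$ is regular for every $n\geq 1$, and (ii) its edge total equals $E^n=\frac{1}{2}\sum_{k=1}^{V^n}deg(x^n)$.

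For (i), the crux is that the degree of a vertex in an OR product is governed solely by the per-coordinate degrees of the base graph. Using Definition~\ref{OR power}, I would argue by complementary counting: a distinct vertex $\mathbf{y}^n$ is \emph{non}-adjacent to a fixed $\mathbf{x}^n$ exactly when $y_k\notin N_{G_{d,V}}(x_k)$ in every coordinate $k\in[n]$. Hence the vertices non-adjacent to (or equal to) $\mathbf{x}^n$ factorize as a product over coordinates, with $V-deg(x_k)$ admissible choices in coordinate $k$ (namely $x_k$ together with its non-neighbors). Since $G_{d,V}$ is $d$-regular, $deg(x_k)=d$ for all $k$, so this count is $V-d$ in each coordinate and for every vertex; consequently $deg(\mathbf{x}^n)$ is the same for all $\mathbf{x}^n\in[V^n]$ and $G_{d,V}^n$ is regular. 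This is exactly the mechanism behind the remark after Proposition~\ref{prop-deg-cycles}: taking OR products preserves equality of degrees, and a $d$-regular base graph is the case in which all vertices are already equal, so the equality propagates to $G_{d,V}^n$. An equivalent route is induction on $n$ through the sub-graph decomposition of Definition~\ref{def-sub-graph}: each sub-graph $G_{d,V}^n(l)$ is a copy of the (inductively regular) graph $G_{d,V}^{n-1}$, and the inter-sub-graph ``full connection'' edges contribute the same increment to every vertex, since each copy index $l$ has exactly $d$ neighboring copies.

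For (ii), once regularity is in hand the edge count is immediate from the handshaking lemma $\sum_{v}deg(v)=2E^n$: summing the common vertex degree over all $V^n$ vertices of $G_{d,V}^n$ and halving yields $E^n=\frac{1}{2}\sum_{k=1}^{V^n}deg(x^n)=\frac{V^n\,deg(x^n)}{2}$. I do not anticipate a genuine obstacle; the only step that warrants care is the degree-uniformity in (i), for which the coordinate-wise factorization of non-adjacency is the cleanest justification. The handshaking step is routine and needs only that $G_{d,V}^n$ is a finite simple graph, which it is by construction.
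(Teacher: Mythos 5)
Your proof is correct, and its second half is exactly what the paper implicitly relies on: the corollary carries no separate proof in the paper and is presented as an immediate consequence of Proposition~\ref{prop-deg-cycles} (degree uniformity of the OR power) plus the handshaking identity. Where you genuinely differ is in how you certify degree uniformity. The paper's route goes through the sub-graph decomposition (Propositions~\ref{prop-deg-cycles} and~\ref{prop-deg-regular}): each vertex inherits the degree of the previous power plus full connections to the $d$ adjacent sub-graphs. Your primary argument instead counts non-neighbors coordinate-wise: under Definition~\ref{OR power}, a tuple $\mathbf{y}^n$ fails to be adjacent to $\mathbf{x}^n$ iff $y_k\notin N(x_k)$ in every coordinate, giving $(V-d)^n$ such tuples and hence $deg(\mathbf{x}^n)=V^n-(V-d)^n$ for every vertex. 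This is cleaner and, notably, more robust: it accounts for edges between vertices lying in \emph{non-adjacent} sub-graphs that become adjacent through a later coordinate, which the sub-graph recursion --- and your proposed ``equivalent route'' by induction, which only credits the full-connection edges to the $d$ neighboring copies --- silently drops. Indeed your closed form disagrees with $d\cdot\frac{V^n-1}{V-1}$ whenever $V-d>1$: for $C_4^2$ your count gives $12$ while the recursion gives $10$, and a direct check of the vertex $(0,0)$ (adjacent to all $(a,b)$ with $a\in\{1,3\}$ or $b\in\{1,3\}$, i.e., $16-4=12$ vertices) confirms $12$. For the corollary itself this is immaterial --- regularity holds either way, since the non-neighbor count is the same for every vertex, and the edge total is pure handshaking on a finite simple graph --- but you should keep the complementary-counting argument as your primary justification and treat the inductive sub-graph route only as a heuristic.
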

%%%%%%%%%%%%%%%%%%%%%%%%%%%%%%%%%%

%%%%%
\subsubsection{Even Cycles}
\label{sec-even}
 Here, we consider even cycles, denoted by $C_{2k}$, $k\in\mathbb{Z}^+$. Vertices of $C_{2k}$ are sequentially numbered clockwise from $0$ to $2k-1$ (e.g., see $G_{X_1}$ in Figure~\ref{fig:funcom}), and alternatingly colored. Vertices with even indices are assigned one color, while those with odd indices receive another. We next determine the chromatic number of $C_{2k}^n$, denoted as $\chi(C_{2k}^{n})$. 
 %The proof of the following proposition is given in \cite[Theorem~2]{10313467}.
  
\begin{prop}[Chromatic number of $C_{2k}^{n}$]
\label{prop-even-cycle-chorom}
The chromatic number of $C_{2k}^{n}$ is given as  
\begin{align}
\label{eq-opec-even-chrom}
\chi ({C_{2k}^{n}})=2^n \ , \quad k\in\mathbb{Z}^+ \ ,\quad n\geq 1 \ .    
\end{align}
\end{prop}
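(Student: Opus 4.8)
The plan is to establish (\ref{eq-opec-even-chrom}) by a sandwich argument, proving $\chi(C_{2k}^{n})\le 2^n$ and $\chi(C_{2k}^{n})\ge 2^n$ separately. The upper bound exploits that an even cycle is bipartite: it admits a proper $2$-coloring $c:[V]\to\{0,1\}$, namely the alternating assignment of even- and odd-indexed vertices described just above the statement. I would lift $c$ to the $n$-fold OR product by the \emph{product coloring} that sends each vertex ${\bf x}^{n}\in[V^n]$ with coordinates $(x_1,\dots,x_n)$ to the tuple $(c(x_1),\dots,c(x_n))\in\{0,1\}^n$, a palette of exactly $2^n$ colors. To check it is proper, take any edge $({\bf a},{\bf b})\in\mathcal{E}^n$; by Definition~\ref{OR power} there is a coordinate $j\in[n]$ with $(a_j,b_j)\in\mathcal{E}$, and since $c$ is proper on $C_{2k}$ we get $c(a_j)\ne c(b_j)$, so the two color tuples differ in coordinate $j$. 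Hence $\chi(C_{2k}^{n})\le 2^n$.

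For the matching lower bound I would exhibit a clique of size $2^n$ inside $C_{2k}^{n}$. Fix any edge $\{u,v\}\in\mathcal{E}$ of $C_{2k}$ (two consecutive vertices), and consider the $2^n$ vertices of $C_{2k}^{n}$ whose every coordinate lies in $\{u,v\}$. Any two distinct such vertices ${\bf a},{\bf b}$ differ in at least one coordinate $j$, where necessarily $\{a_j,b_j\}=\{u,v\}\in\mathcal{E}$, so by Definition~\ref{OR power} they are adjacent. These $2^n$ vertices are therefore pairwise adjacent, forming a clique, and since the chromatic number of any graph is at least the size of any clique, $\chi(C_{2k}^{n})\ge 2^n$. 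Combining the two bounds yields $\chi(C_{2k}^{n})=2^n$.

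As an alternative route to the lower bound that ties into the machinery already in the paper, I note that $C_{2k}$ bipartite gives $\chi_f(C_{2k})=2$, so by the multiplicativity relation (\ref{eq-n-fold-fractional}) we have $\chi_f(C_{2k}^{n})=2^n$, and the general inequality $\chi_f(G)\le\chi(G)$ again forces $\chi(C_{2k}^{n})\ge 2^n$. Either formulation of the lower bound closes the argument.

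I expect no serious obstacle for even cycles: bipartiteness makes both the $2$-coloring and the clique construction transparent. The only point deserving care is confirming that the product coloring respects the OR-adjacency rule, which relies precisely on the ``there exists a coordinate'' quantifier in Definition~\ref{OR power} (this is exactly what makes the OR product, rather than the AND product, multiplicative in the relevant sense). This tightness is special to the even case; for odd cycles $\chi_f(C_{2k+1})>2$ destroys the clean sandwich, which is why the odd-cycle coloring requires the separate scheme of Proposition~\ref{prop-chromatic_number_cycles-odd}.
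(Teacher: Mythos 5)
Your proof is correct, and both of your lower-bound routes are sound. The argument differs in style from the paper's: the paper proceeds by induction on $n$, viewing $C_{2k}^{n}$ as $2k$ copies of $C_{2k}^{n-1}$ (the sub-graphs of Definition~\ref{def-sub-graph}), observing that adjacent copies are fully connected and hence need disjoint palettes, and then reusing colors cyclically across alternating copies; the lower bound is only implicit there, in the remark that two adjacent two-chromatic sub-graphs already force $4$ colors. You instead give a direct, non-inductive argument: the product coloring $(c(x_1),\dots,c(x_n))$ establishes $\chi(C_{2k}^{n})\le 2^n$ in one step (and is really the closed form of the paper's recursive color reuse), while the explicit clique on the $2^n$ vertices with all coordinates in a fixed edge $\{u,v\}$ gives a clean certificate for $\chi(C_{2k}^{n})\ge 2^n$ that the paper leaves informal. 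Your fractional-chromatic alternative, $\chi(C_{2k}^{n})\ge\chi_f(C_{2k}^{n})=\chi_f(C_{2k})^{n}=2^{n}$ via (\ref{eq-n-fold-fractional}), is also valid and ties the result to machinery the paper already uses elsewhere. What your version buys is rigor and brevity (no induction hypothesis to carry, an explicit clique witness); what the paper's version buys is that the same sub-graph decomposition and color-reuse template is then recycled for the odd-cycle scheme of Proposition~\ref{prop-chromatic_number_cycles-odd}, where your sandwich collapses because $\chi_f(C_{2k+1})>2$, exactly as you note.
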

%See Appendix \ref{app:proof-proposition-even-cycle}.
\begin{proof}
Given $C_{2k}$, with $\chi(C_{2k})=2$, its $2$-fold OR product $C_{2k}^{2}$ consists of $(2k)^2$ vertices and $2k$ sub-graphs, $\{C_{2k}^2(1),C_{2k}^2(2), \dots, C_{2k}^2(2k)\}$, each containing $2k$ vertices. Since each sub-graph is two-colorable and fully connected to its neighbors, adjacent sub-graphs must use different colors. For instance, $\{C_{2k}^2(1),C_{2k}^2(2)\}$ requires 4 colors in total. However, due to the cyclic structure of OR products, alternating colors from $C_{2k}^2(1)$ can cover $C_{2k}^2(3)$, and similarly for odd-indexed sub-graphs, meaning that $\chi(C_{2k}^2)=4$. This method can also calculate $\chi ({C_{2k}^n}$) from $(n-1)$-fold to $n$-fold OR products. Figure~\ref{fig:3rdPower_C4} shows a valid coloring for $C_4^3$, where $\chi ({C_{4}^{3}})=8$. Similarly, by induction, $\chi ({C_{2k}^{n}})$ satisfies (\ref{eq-opec-even-chrom}).
\end{proof}
%%%%%%Fig even cycles
\begin{figure}[htbp]
\centerline{\includegraphics%[scale=0.17]
[width=0.5\linewidth, height=40mm]{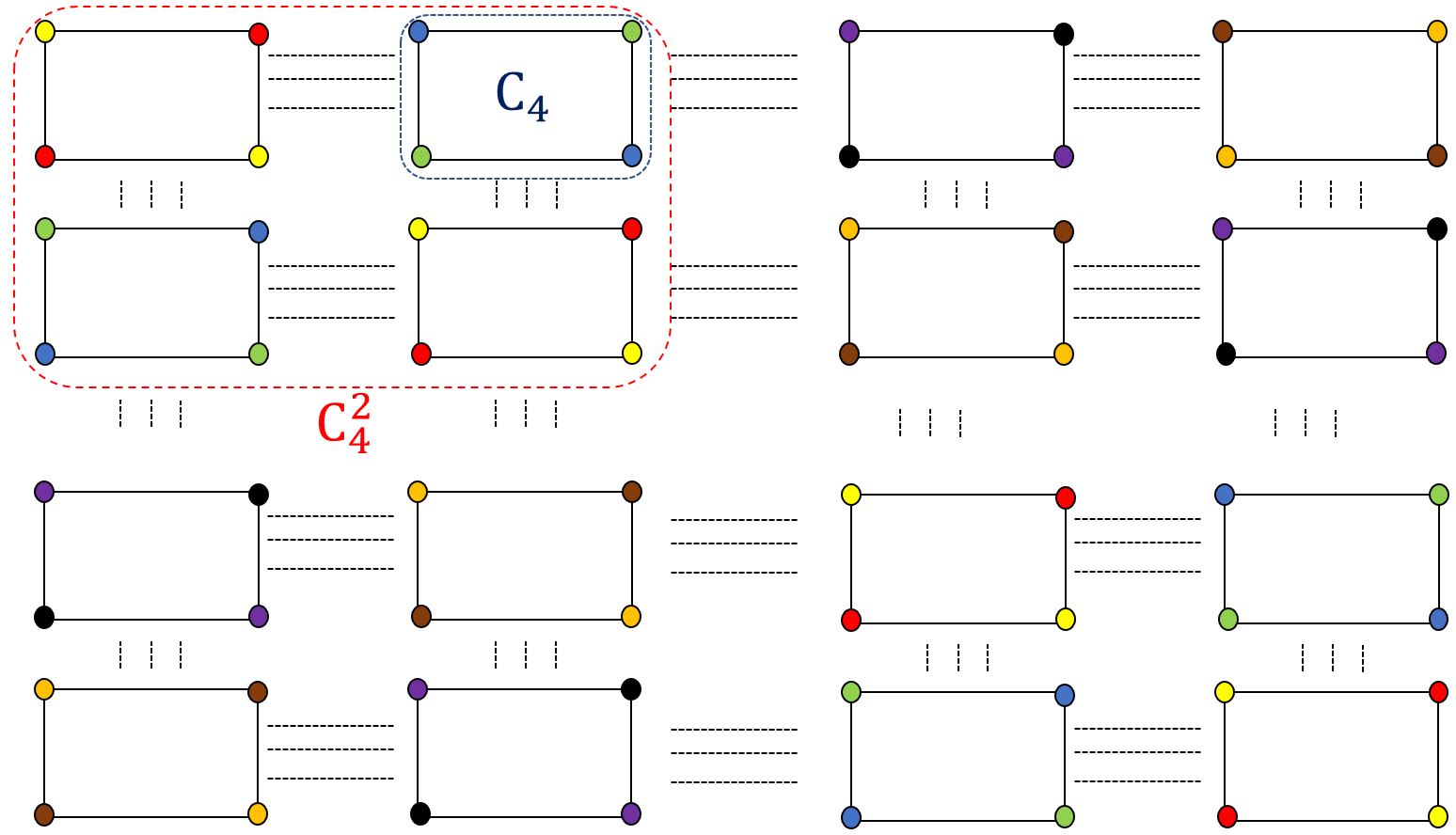}}
\caption{A valid coloring of $C_{4}^{3}$ with 8 colors.}
\label{fig:3rdPower_C4}
\end{figure}
%%%%%%%%%%%%%%%%%%%%%%%%%%%%%%%
%%%%

\subsubsection{Odd Cycles}
\label{sec-odd}

We here focus on odd cycles, namely $C_{2k+1}$, $k\in \mathbb{Z}^+$, and their $n$-fold OR products. In the special case with $3$ vertices, $C_{3}$ is a complete graph, and a valid coloring requires $3$ distinct colors for a receiver to successfully recover the function.
Furthermore, for a valid coloring of $C_3^n$, $\chi({C_3^n})= 3^n$ for $n \geq 1$. For coloring of an odd cycle with the length $i=2k+1$, for $k\geq 2$, one could reuse the colors. For instance, given $C_5$, we have $\chi({C_5})=3$. We next present an achievable scheme for valid colorings of general odd cycles. 
%The proof of the following proposition is given in \cite[Proposition~1]{10313467}.

 \begin{prop}[Chromatic numbers of odd cycles]
\label{prop-chromatic_number_cycles-odd}
The chromatic number of $C_i^{n+1}$, denoted as $\chi ({C_i^{n+1}})$, can be recursively computed from $\chi({C_i^{n}})$ as follows:
\begin{align}
\label{eq-chromatic_number_cycles}
 \chi ({C_{i}^{n+1}})= 2\chi ({C_{i}^{n}})  +\left\lceil\frac{\chi({C_{i}^{n}})}{2}\right\rceil \quad ,\ i=2k+1 \text{ and } k\in\mathbb{Z}^{\geq2}.
  \end{align}
 \end{prop}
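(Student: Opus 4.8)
The plan is to prove the recursion constructively, by exhibiting a valid coloring of $C_i^{n+1}$ that uses exactly $2\chi(C_i^n)+\lceil \chi(C_i^n)/2\rceil$ colors given any valid $\chi(C_i^n)$-coloring of $C_i^n$. I would treat this as a single OR-product step (an inductive step with base case $C_i^1=C_i$, $\chi(C_i)=3$, recalling that $C_3=K_3$ is excluded by the hypothesis $k\ge 2$), and write $b:=\chi(C_i^n)$ throughout, so that the whole argument reduces to producing a good $b$-fold coloring of the cycle $C_i$ formed by the sub-graphs.

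First I would decompose $C_i^{n+1}$ into its $i=2k+1$ sub-graphs $\{C_i^{n+1}(l)\}_{l\in[i]}$ of Definition~\ref{def-sub-graph}, each a copy of $C_i^n$, indexed by the $(n+1)$-th coordinate and arranged cyclically. By Definition~\ref{OR power}, two sub-graphs whose indices are adjacent in $C_i$ are in full connection (Definition~\ref{def-full-conec}) and must therefore receive disjoint color sets, exactly as in the even-cycle argument of Proposition~\ref{prop-even-cycle-chorom}. For two non-adjacent sub-graphs $l,l'$, vertices $({\bf x},l)$ and $({\bf y},l')$ are adjacent iff ${\bf x}\sim{\bf y}$ in $C_i^n$; hence if both sub-graphs inherit the \emph{same} valid coloring $\phi$ of $C_i^n$, any shared color corresponds to a single independent set of $C_i^n$ and no monochromatic cross-edge can appear. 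This reduces the problem to assigning each sub-graph a palette of $b$ colors so that adjacent palettes are disjoint while colors are reused across non-adjacent sub-graphs only through identical colorings, i.e., to a valid $b$-fold coloring of $C_i$ in the sense of Definition~\ref{def-fractional-coloring}.

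Next I would realize this $b$-fold coloring by an arc (circular) coloring on the color set $\mathbb{Z}_a$ with $a=2b+\lceil b/2\rceil$: assign sub-graph $l$ an arc $S(l)$ of $b$ consecutive colors with adjacent arcs disjoint, and color $({\bf x},l)$ by the unique $p\in S(l)$ satisfying $p\equiv\phi({\bf x})\pmod b$. Two facts then close the construction: (i) any arc of $b$ consecutive integers is a complete residue system modulo $b$, which simultaneously makes the assignment proper inside each sub-graph and \emph{consistent}, since every color corresponds to one $\phi$-class (hence one independent set) across all sub-graphs that use it; and (ii) such an arc placement exists precisely when $a/b\ge\chi_c(C_{2k+1})=(2k+1)/k$, where $\chi_c$ denotes the circular chromatic number. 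Because $a/b\ge 5/2$ and $5/2\ge(2k+1)/k$ for every $k\ge 2$, the scheme uses the fixed circular ratio $5/2$ (that of $C_5$, which upper-bounds that of every $C_{2k+1}$ with $k\ge2$); this is exactly why the correction term is the uniform $\lceil\chi(C_i^n)/2\rceil$ rather than $\lceil\chi(C_i^n)/k\rceil$. The result is a valid coloring of $C_i^{n+1}$ with $2\chi(C_i^n)+\lceil\chi(C_i^n)/2\rceil$ colors, establishing the upper bound in (\ref{eq-chromatic_number_cycles}).

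The main obstacle is the matching lower bound required by the stated equality; the upper bound itself is routine once the ``same $\phi$ plus reduction mod $b$'' device is checked. For the converse I would argue that full connection forces adjacent sub-graphs into disjoint palettes while the non-bipartite cycle of $2k+1$ sub-graphs rules out a pure two-palette reuse, and then lower-bound the number of extra colors; a clean route is to invoke the fractional relation $\chi_f(C_i^{n+1})=\chi_f(C_i)^{\,n+1}$ from (\ref{eq-n-fold-fractional}) together with $\chi_f(C_{2k+1})=(2k+1)/k$ to certify that fewer colors cannot suffice in the relevant regime. I expect this converse to be the crux of the proof, with the achievability construction above being the comparatively straightforward part.
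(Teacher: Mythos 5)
Your achievability construction is, at its core, the same as the paper's: the paper also decomposes $C_i^{n+1}$ into $2k+1$ sub-graphs, gives adjacent sub-graphs disjoint palettes of $b=\chi(C_i^n)$ colors each, and reuses colors cyclically in arcs of consecutive colors drawn from a pool of $2b+\lceil b/2\rceil$ (its explicit sets $\mathcal{C}_5^2(1)=\{c_1,c_2,c_3\},\dots,\mathcal{C}_5^2(3)=\{c_7,c_8,c_1\},\dots$ are exactly your arc placement). You are more careful than the paper on the one point it silently skips, namely why a color shared by two \emph{non-adjacent} sub-graphs is safe against the cross-edges inherited from $C_i^n$; however, your specific device (``any arc of $b$ consecutive colors is a complete residue system mod $b$'') breaks for arcs that wrap around $\mathbb{Z}_a$: with $a=8$, $b=3$ the paper's own palette $\{c_7,c_8,c_1\}$ reduces mod $3$ to $\{1,2,1\}$, so ``the unique $p\in S(l)$ with $p\equiv\phi({\bf x})\pmod b$'' is not well defined there, and the compatibility of the overlapping color classes has to be argued by hand (it can be done, but it is not automatic from your rule).

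The genuine gap is the converse, and you have correctly located it, but your proposed route cannot close it. The fractional bound $\chi(C_{2k+1}^{n+1})\ge\chi_f(C_{2k+1})^{n+1}=\left((2k+1)/k\right)^{n+1}$ gives $\lceil 6.25\rceil=7$ for $C_5^2$ against the claimed $8$, and $\lceil 15.625\rceil=16$ against the claimed $20$ for $C_5^3$, so it never meets the recursion. For $k\ge 3$ the situation is worse: since $\lim_n\chi(C_{2k+1}^n)^{1/n}=\chi_f(C_{2k+1})=(2k+1)/k<5/2$ for the OR product, while the recursion grows at least like $(5/2)^n$, no matching lower bound can exist, i.e., for $k\ge3$ and large $n$ the stated equality can only hold as an upper bound. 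Be aware that the paper's own proof does not supply a converse either --- it exhibits the colorings for $C_5^2$ and $C_5^3$ and asserts the general pattern ``by induction,'' and the contribution list describes the proposition as an achievable coloring scheme --- so your proposal is no further from a complete proof of the equality than the paper is; but the specific plan of certifying optimality via $\chi_f$ should be dropped, and the result should be read as achievability only.
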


\begin{proof}
See Appendix \ref{App:chromatic_number_cycles}.
\end{proof}

For even cycles, from Proposition~\ref{prop-even-cycle-chorom}, $\chi({C_{i}^{n}})=2^n$. For odd cycles, in Section~\ref{sec:choromatic-using-eigenvalue-cycles}, we will establish upper and lower bounds on $\chi (C_i^n)$ using the adjacency matrix of $C_i^n$, denoted as ${\bf A}_f^n$.

\begin{figure}[htbp]
\centerline{\includegraphics[scale=0.19]{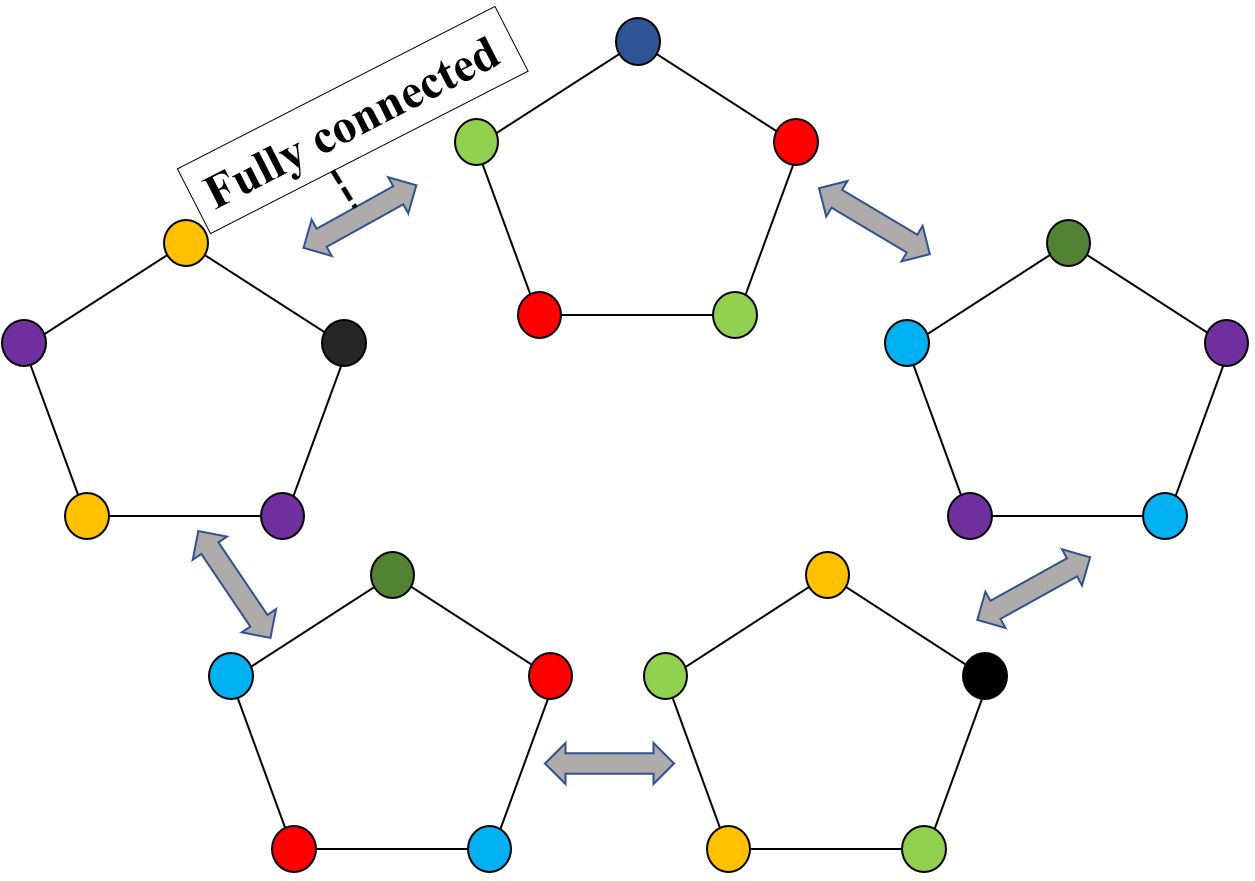}}
\caption{The $2$-fold OR product of $C_5$, i.e., $C_5^2$, and its valid coloring.}
\label{fig:2nd_power_C5}
\end{figure}

We next demonstrate the gain in terms of the required number of colors of our approach in Proposition~\ref{prop-chromatic_number_cycles-odd} over a greedy algorithm that does not leverage the structure of $C_{2k+1}^n$ in coloring.

\begin{prop}[The multiplicative gain of our approach over a greedy coloring algorithm]
\label{prop_gain_over_greedy} 
The gain of the recursive coloring approach in Proposition~\ref{prop-chromatic_number_cycles-odd} for $C_{2k+1}^n$, $k\in\mathbb{Z}^{\geq2}$, over the greedy algorithm, which calculates $\chi({C_{2k+1}})$ and uses $(\chi({C_{2k+1}}))^n$ colors for coloring $C_{2k+1}^n$, is
\begin{align}
\label{eq-gain_vs_greedy} 
\eta_n = \frac{\big(\chi({C_{2k+1}})\big)^n}{\chi({C_{2k+1}^n)} } \geq 1.2^n  \ ,
\end{align}
which is exponential and unbounded as $n\to\infty$, i.e., $\eta=\lim_{n\to \infty}\eta_n=\infty$.
\end{prop}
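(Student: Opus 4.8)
The plan is to reduce the two-sided quantity in~\eqref{eq-gain_vs_greedy} to the growth rate of a single sequence and then read off the base of the exponential. Since every odd cycle satisfies $\chi(C_{2k+1})=3$, the greedy scheme uses $(\chi(C_{2k+1}))^n=3^n$ colors, so the numerator is fixed and
\begin{align}
\eta_n=\frac{3^n}{\chi(C_{2k+1}^n)} \ .
\end{align}
The entire problem therefore reduces to controlling the growth of the denominator $a_n:=\chi(C_{2k+1}^n)$, with $a_1=3$.

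First I would extract the geometric rate of $a_n$ from the recursion of Proposition~\ref{prop-chromatic_number_cycles-odd}. Using $\tfrac{x}{2}\le\lceil \tfrac{x}{2}\rceil\le \tfrac{x+1}{2}$ gives the sandwich $\tfrac{5}{2}a_n\le a_{n+1}\le \tfrac{5}{2}a_n+\tfrac12$, so $a_n=\Theta\big((5/2)^n\big)$; concretely the lower bound iterates to $a_n\ge 3\,(5/2)^{n-1}$ and the affine upper recursion solves against its fixed point $-\tfrac13$ to give $a_n\le \tfrac{4}{3}(5/2)^n$. Substituting these into $\eta_n=3^n/a_n$ shows that the two competing exponentials combine into the single base $3/(5/2)=6/5$, i.e.\ $\eta_n=\Theta\big((6/5)^n\big)$, which is the exponential rate $1.2^n$ asserted in~\eqref{eq-gain_vs_greedy}.

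Second, to pin the constant in the base to exactly $6/5$ and to identify $k=2$ as the extremal (worst) odd cycle, I would pass to the fractional relaxation, which is the quantity that actually governs the color rate. By the standard value $\chi_f(C_{2k+1})=(2k+1)/k$ together with the multiplicativity~\eqref{eq-n-fold-fractional}, $\chi_f(C_{2k+1}^n)=\big((2k+1)/k\big)^n$, whence the per-cycle gain equals $\big(3k/(2k+1)\big)^n$. The map $k\mapsto 3k/(2k+1)$ is increasing, so its minimum over $k\in\mathbb{Z}^{\geq 2}$ is attained at $k=2$ and equals $6/5$; this is exactly where the constant $1.2$ in~\eqref{eq-gain_vs_greedy} originates. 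Since $6/5>1$, letting $n\to\infty$ forces $\eta=\lim_n\eta_n=\infty$, giving the unboundedness.

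The main obstacle is the ceiling in the recursion of Proposition~\ref{prop-chromatic_number_cycles-odd}: it blocks a closed-form solution and is precisely what fixes the leading constant multiplying $1.2^n$, so the delicate step is the induction that propagates $a_n\le \tfrac{4}{3}(5/2)^n$ while controlling the accumulated $+\tfrac12$ corrections against the fixed point $-\tfrac13$. A secondary subtlety is reconciling the integer chromatic number produced by the recursion with the fractional optimum $\chi_f$ that yields the clean base $6/5$, together with the monotonicity of $3k/(2k+1)$ that singles out $k=2$ as the extremal case.
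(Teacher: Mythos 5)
Your proposal follows essentially the same route as the paper's Appendix~\ref{App:prop_gain_over_greedy}: take $\chi(C_{2k+1})=3$, feed the recursion of Proposition~\ref{prop-chromatic_number_cycles-odd} through $\tfrac{x}{2}\le\lceil x/2\rceil\le\tfrac{x+1}{2}$ to trap $a_n=\chi(C_{2k+1}^n)$ between geometric sequences of ratio $5/2$, and read off the base $3/(5/2)=6/5$ for $\eta_n$. Two remarks. First, your treatment of the upper recursion is actually \emph{more} careful than the paper's: solving $a_{n+1}\le\tfrac52 a_n+\tfrac12$ against its fixed point $-\tfrac13$ yields the genuine bound $a_n\le\tfrac43(5/2)^n$ and hence the genuine lower bound $\eta_n\ge\tfrac34(6/5)^n$, whereas the paper substitutes the \emph{lower} bound $\chi(C_{2k+1}^{n-1})\ge(5/2)^{n-2}\chi_1$ into the denominator of the lower bound of (\ref{eq-gain-chormatic-recuersive}), which increases that expression and therefore does not produce a valid lower bound on $\eta_n$. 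Second, neither you nor the paper establishes the inequality $\eta_n\ge1.2^n$ as literally stated, and it is in fact false: for $C_5$, $a_2=8$ gives $\eta_2=9/8=1.125<1.44=1.2^2$, and your own fractional-coloring observation $\eta_n\le\big(3k/(2k+1)\big)^n$ (which follows from $\chi\ge\chi_f$) equals $1.2^n$ at $k=2$, so the stated bound could hold only with equality there. What both arguments actually deliver is $\eta_n=\Theta(1.2^n)$ --- in your case with explicit constants $\tfrac34\cdot1.2^n\le\eta_n\le\tfrac56\cdot1.2^n$ --- which is what the substantive claim (exponential, unbounded gain) requires. Finally, note that the fractional-chromatic digression, while a clean way to locate the constant $6/5$ and single out $k=2$, only provides an \emph{upper} bound on $\eta_n$ and so contributes nothing toward the lower bound the proposition asserts; it should be presented as a tightness check rather than as part of the proof.
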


\begin{proof}
\label{Proof_prop_gain_over_greedy} 
See Appendix~\ref{App:prop_gain_over_greedy}.
\end{proof}
%%%%%%%%%%%%%%%%%%%%%%%
\begin{figure}[h] 
    \centering
\includegraphics[width=0.40\linewidth]{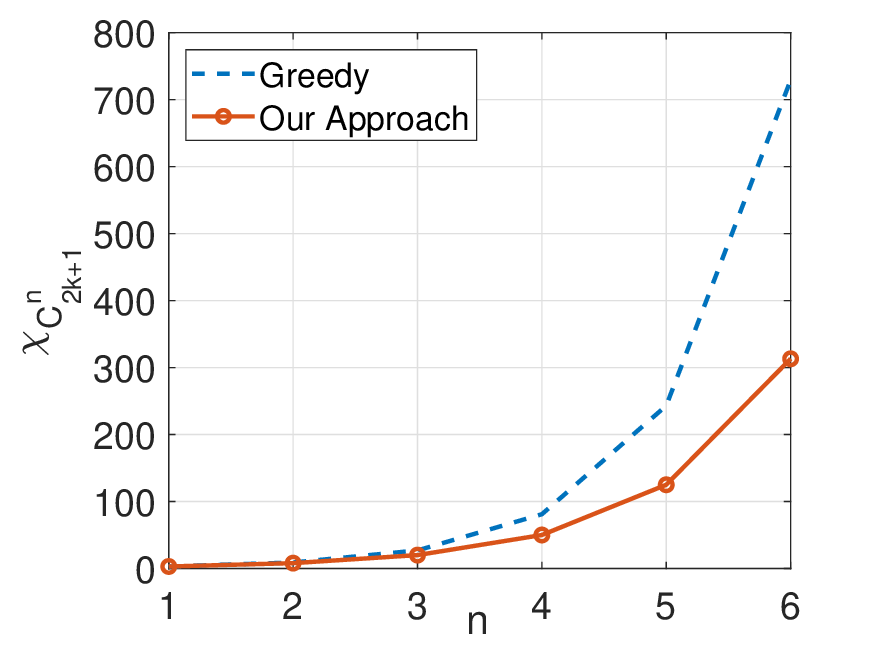}
\includegraphics[width=0.42\linewidth]{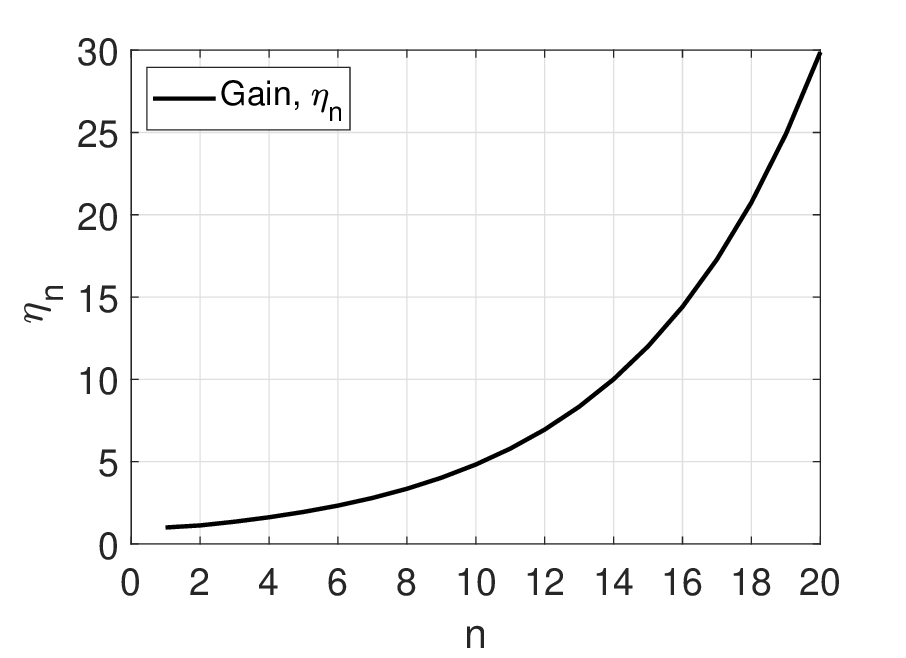}
\caption{(Left) $\big(\chi({C_{2k+1}})\big)^n$ (dashed (blue) curve), and $\chi({C_{2k+1}^n})$ (solid (orange) curve) for any $k\geq 2$. (Right) The gain, i.e., $\eta_n$, of the coloring approach in Proposition~\ref{prop-chromatic_number_cycles-odd} compared to the Greedy algorithm.} 
\label{fig:the_gain_vs_greedy}
\end{figure}

%%%%%%%%%%%%%%%%%%%%%%%%%%%%%%%%%%%%%%%%%
%%%%%%%%%%%%%%%%%%%%%%%%%%%%%%%%%%%%%%%%%%%%

\subsection{Bounding the Chromatic Entropy of Cycles}
\label{entropy-cycles-and-general-graph}
Here, we establish an upper bound on the chromatic entropy of $C_{i}^n$, for $i\in\mathbb{Z}^+$.

\subsubsection{Entropy of an even cycle}
\label{sub-entropy of evens}

From Proposition~\ref{prop-even-cycle-chorom}, $\chi({C_{2k}^{n}})=2^n$. We recall that the chromatic entropy $H^{\chi}_{C_{2k}^n}({\bf X}_1)$, which is the minimum achievable entropy of a valid coloring of $C_{2k}^n$ \cite{AO96, korner1973coding}, and the characteristic graph entropy $H_{C_{2k}}({X}_1)$ are related as follows~\cite[Theorem~5]{AO96}:
\begin{align}
\label{eq-chor-ent}
    H_{C_{2k}}({X}_1)=\lim_{n\to \infty} \frac{1}{n}H^\chi_{C_{2k}^n}({\bf X}_1) \ ,
\end{align}
where 
\begin{align}
\label{eq-chorm_korner-cycle}
    H^{\chi}_{C_{2k}^n}({\bf X}_1)= \min_{\mathcal{C}_{C_{2k}^n}} H(\mathcal{C}_{C_{2k}^n}) \ .
\end{align}

If the distribution of $C_{2k}^n$ is uniform, then $H_{C_{2k}}({X}_1)$ is given as follows:
\begin{align}
\label{eq-note_even-entropy}
H_{C_{2k}}({X}_1)=\lim_{n\to \infty} \frac{1}{n}\log_2 2^n=1 \quad \mbox{bits} \ .
\end{align}    
%%%%%%%%%%%%%%%%%%%%%%%%%%%%%%

\subsubsection{Entropy of an odd cycle}
\label{entropy of odds}
We next examine the chromatic and graph entropies of $C_{2k+1}$. Unlike $C_{2k}$, the coloring PMF for odd cycles, $p(\mathcal{C}_{C_{2k+1}^{n}})$, is non-uniform (i.e., $H_{C^n_{2k+1}}^{\chi}({\bf X}_1) < \log_2 \chi(C^n_{2k+1})$), as demonstrated through the examples below.

\begin{exm}%[Entropies of $C_5$ and $C_5^2$] 
\label{ex_entropy_second}
Given that $X_1$ follows a uniform distribution over an alphabet $\mathcal{X}_1$ such that $|\mathcal{X}_1|=5$ and with a characteristic graph $G_{X_1}=C_5$, the coloring PMF satisfies, $p(\mathcal{C}_{C_5}) = \{\frac{1}{5}, \frac{2}{5}, \frac{2}{5}\}$ with $\chi({C_5})=3$, and using (\ref{eq-chorm_korner-cycle}) for chromatic entropy, yields $H_{C_5}^{\chi}(X_1)=\min_{\mathcal{C}_{C_5}} H(\mathcal{C}_{C_5})=1.52$. For the $2$-fold OR product graph $C_5^2$, where the PMF that satisfies the minimum coloring entropy is $p(\mathcal{C}_{C_5^2}) = \{\frac{4}{25}, \frac{4}{25}, \frac{4}{25}, \frac{4}{25}, \frac{4}{25}, \frac{2}{25}, \frac{2}{25}, \frac{1}{25}\}$ with $\chi({C_5^2})=8$, similarly using (\ref{eq-chorm_korner-cycle}) we obtain
\begin{align}
\label{exm_eq_2nd_ent_odd}
    \frac{1}{2}H(\mathcal{C}_{C_5^2}) =1.37 < H(\mathcal{C}_{C_5})=1.52 \quad \text{bits}\ .
\end{align}  
\end{exm}
While from  Example~\ref{ex_entropy_second}, we can determine $H^{\chi}_{C_5}({X}_1)$ and $H^{\chi}_{C_5^2}({\bf X}_1)$, determining $H^\chi_{C_5^n}({\bf X}_1)$, which corresponds to the minimum entropy among all possible valid colorings, becomes complex for large $n$. Next, given a characteristic graph $G_{X_1}=C_{2k+1}$, we establish an upper bound on $H_{C_{2k+1}}(X_1)$ by employing (\ref{eq-chor-ent}) and (\ref{eq-chorm_korner-cycle}) and devising valid colorings for the MISs of $C_{2k+1}^n$.

%%%%%%%%%%%%%%%%%%%% 

%%%%%%%%%%%%%%%%%%%%%%

%%%%%%%%%%%%%%%

\begin{prop}[An upper bound on $H_{C_{2k+1}}({X}_1)$]
\label{prop-upper_entrop-cycle}
The entropy is upper bounded as follows:
\begin{align}
\label{eq-coroll-upper_entrop}
  H_{C_{2k+1}}({X}_1)\leq  \frac{1}{n}  H\Big(\alpha_{n}\cdot \Big(\frac{k^n}{(2k+1)^n}\Big), \alpha_{n-1}\cdot \Big(\frac{k^{(n-1)}}{(2k+1)^{n}}\Big), \cdots, \alpha_0 \cdot\Big( \frac{1}{(2k+1)^n}\Big) \Big)  \   ,
\end{align}
where $\alpha_t\ , \ t\in [n]\cup \{0\}$ represents the number of MISs with $|\text{MIS}_{C^t_{2k+1}}|$, and $\alpha_n$ satisfies
\begin{align}
\label{eq-coroll-upper_entrop_alpha_n}
     \frac{(2k+1)^n\cdot (k^2-1)}{k^{2(n+1)}-1}\cdot k^n  <\alpha_n < \frac{(k^{2(n+1)}-1)\cdot(2k+1)^n\cdot k^{(n-1)}}{(2k+1)^{2n}\cdot(k^2-1)\cdot k^{(n-1)}-(k^{2(n+1)}-1)}  \ .
\end{align}
\end{prop}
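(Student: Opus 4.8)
The plan is to exploit the identity $H_{C_{2k+1}}(X_1)=\lim_{n\to\infty}\frac{1}{n}H^\chi_{C_{2k+1}^n}({\bf X}_1)$ in~(\ref{eq-chor-ent}) together with the fact that $H^\chi_{C_{2k+1}^n}({\bf X}_1)=\min_{\mathcal{C}}H(\mathcal{C})$ is a \emph{minimum} over all valid colorings, see~(\ref{eq-chorm_korner-cycle}). By the standard (Fekete-type) subadditivity of chromatic entropy underlying~(\ref{eq-chor-ent}), this limit equals the infimum, so for \emph{every} finite $n$ one has $H_{C_{2k+1}}(X_1)\le \frac{1}{n}H^\chi_{C_{2k+1}^n}({\bf X}_1)\le \frac{1}{n}H(\mathcal{C})$ for any particular valid coloring $\mathcal{C}$ of $C_{2k+1}^n$. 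Hence it suffices to exhibit one explicit coloring and evaluate its entropy; the displayed inequality~(\ref{eq-coroll-upper_entrop}) will then be precisely the per-letter entropy of that coloring.

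Next I would construct the coloring from the maximal independent sets of $C_{2k+1}^n$. Recalling Definition~\ref{Independent set, and maximum independent set} and that the OR product of independent sets is again independent (by Definition~\ref{OR power}, two product vertices are adjacent only if they are adjacent in at least one coordinate), I take the color classes to be MISs of $C_{2k+1}^n$ built recursively from the maximum independent sets of the single cycle $C_{2k+1}$, whose independence number is $k$. This product structure forces every class size to be a power $k^t$, $t\in\{0,1,\dots,n\}$, where $t$ counts the coordinates contributing a maximum-size factor $k$; writing $\alpha_t$ for the number of classes of size $k^t=|\text{MIS}_{C^t_{2k+1}}|$, validity of the coloring, i.e., that it is a genuine partition of the $(2k+1)^n$ vertices, is exactly the normalization $\sum_{t=0}^{n}\alpha_t k^t=(2k+1)^n$. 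Since $X_1$ is uniform, each vertex carries mass $(2k+1)^{-n}$, so a class of size $k^t$ carries mass $k^t(2k+1)^{-n}$; collecting the $\alpha_t$ classes of each size yields precisely the coloring PMF that appears as the argument of $H(\cdot)$ in~(\ref{eq-coroll-upper_entrop}), and substituting it into the chain of inequalities above establishes the entropy bound.

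It remains to justify the two-sided estimate~(\ref{eq-coroll-upper_entrop_alpha_n}) on the number $\alpha_n$ of maximum independent sets (size $k^n$) used. The lower bound follows from a packing argument: starting from $\sum_{t}\alpha_t k^t=(2k+1)^n$ and controlling the mass that the smaller classes can absorb, one obtains $\alpha_n\ge (2k+1)^nk^n/\sum_{j=0}^{n}k^{2j}$, where the geometric sum $\sum_{j=0}^{n}k^{2j}=(k^{2(n+1)}-1)/(k^2-1)$ accounts for the contributions accumulated across the $n$ recursion levels; this is exactly the left side of~(\ref{eq-coroll-upper_entrop_alpha_n}). The matching upper bound comes from the complementary constraint that, after removing the $\alpha_n$ maximum classes, the residual $(2k+1)^n-\alpha_n k^n$ vertices must still be coverable by the recursively defined smaller MISs produced by Proposition~\ref{prop-chromatic_number_cycles-odd}, which caps $\alpha_n$ and, after substituting the same geometric sum, gives the right side of~(\ref{eq-coroll-upper_entrop_alpha_n}).

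The main obstacle is the combinatorial control of $\alpha_n$: I must show that the recursive coloring of Proposition~\ref{prop-chromatic_number_cycles-odd} indeed produces only power-of-$k$ class sizes, and then determine how many disjoint maximum independent sets can be packed into $C_{2k+1}^n$ and how the leftover vertices are tiled by smaller maximal independent sets. This packing-and-leftover count, rather than the entropy evaluation or the limiting argument, is what fixes the geometric-sum bounds and hence the width of the interval in~(\ref{eq-coroll-upper_entrop_alpha_n}).
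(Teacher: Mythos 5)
Your proposal follows essentially the same route as the paper's proof: achievability via a valid coloring whose color classes are the recursively built MISs of $C_{2k+1}^n$ with sizes $k^t$, the normalization $\sum_{t=0}^{n}\alpha_t k^t=(2k+1)^n$ under the uniform source giving exactly the PMF in~(\ref{eq-coroll-upper_entrop}), and the ordering constraint $\alpha_t\ge k\,\alpha_{t-1}$ combined with the geometric sum $\sum_{j=0}^{n}k^{2j}=(k^{2(n+1)}-1)/(k^2-1)$ to sandwich $\alpha_n$ as in~(\ref{eq-coroll-upper_entrop_alpha_n}). The "packing-and-leftover" step you flag as the remaining obstacle is handled in the paper precisely by those constraints (together with $\alpha_0=1$), so your outline matches the paper's argument in both structure and the resulting bounds.
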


\begin{proof}
See Appendix \ref{App:corr_Entropy_bound_odd}.
\end{proof}
%%%%%%%%%%%%%%%%%%%%%%%%%%%%%%%%%%%%%

In (\ref{eq-coroll-upper_entrop}), as color reuse increases (i.e., independent sets with high cardinality and high $\alpha_n$), $H_{C_{2k+1}}({X}_1)$ decreases. We later generalize Proposition~\ref{prop-upper_entrop-cycle} to general graphs in Section~\ref{sec:entropy_general_graphs}. 

Next, we apply Proposition~\ref{prop-upper_entrop-cycle} to $C_5^3$ to derive an upper bound to $H(\mathcal{C}_{C_5^3})$.

%%%%%%%%%%%%%%%%%%%%%
\begin{exm}%[Bounds on the entropy of $C_5^3$]
 \label{example_entropy_bound}
The cardinalities of the MISs for different graph products $C_5^j$, where $j\in [3]\cup \{0\}$, are given as $|\text{MIS}_{C_5^0}|=2^0,\  |\text{MIS}_{C_5^1}|=2^1, \  |\text{MIS}_{C_5^2}|=2^2$, and $|\text{MIS}_{C_5^3}|=2^3$, respectively, where $\chi({C_{5}^3})=20$, which can be recursively computed using (\ref{eq-chromatic_number_cycles}). 
Employing Proposition~\ref{prop-upper_entrop-cycle} yields 
\begin{align}
\label{eq-exm-eq-c_5}
    \alpha_3\cdot\Big(\frac{8}{125}\Big)+  \alpha_2\cdot\Big(\frac{4}{125}\Big)+   \alpha_1\cdot\Big(\frac{2}{125}\Big)+   \alpha_0\cdot\Big(\frac{1}{125}\Big)=1 \ .
\end{align}

Employing the ordering $\alpha_3>\alpha_2>\alpha_1>\alpha_0$ in (\ref{eq-exm-eq-c_5}) leads to the simplification, $64 \alpha_0+ 16 \alpha_0+ 4\alpha_0+ \alpha_0 \leq 125$. Hence, $\alpha_0\leq \frac{25}{17}$. Because $\alpha_t\in\mathbb{Z}^+$, $t\in [3] \cup \{0\}$, we have $\alpha_0=1$. Employing the same ordering in (\ref{eq-exm-eq-c_5})
also yields the simplification $8\alpha_3+ 4(\frac{\alpha_3}{2})+2(\frac{\alpha_3}{4})+ \frac{\alpha_3}{8} \geq 125$, which yields the condition $\alpha_3\geq \frac{200}{17}$.
Because $\alpha_t\in\mathbb{Z}^+$, (\ref{eq-exm-eq-c_5}) yields the upper bound $\alpha_3\leq  15$. Employing these lower and upper bounds to (\ref{eq-coroll-upper_entrop}) leads to $1.37 \leq \frac{1}{3} H_{C_5^3}^{\chi}({\bf X}_1)\leq 1.41$.
\end{exm}

Next, we consider another example for $C_5^3$, where we employ Proposition~\ref{prop-upper_entrop-cycle} to determine $\alpha_t$'s and $p(\mathcal{C}_{C^3_5})$ (similar to Examples~\ref{ex_entropy_second}-\ref{example_entropy_bound}). We then apply Huffman coding~\cite{huffman1952method} to the coloring random variable to optimize the compression rate of $C_5^3$.

%%%%%%%%%%%%%%%%%%%%%%%%%%%%%%%%%%%%%%%%%%

\begin{exm}[Huffman coding for a given characteristic graph coloring] \label{ex-huffman-MIS-color-dist}
Consider the setting of Example~\ref{ex_entropy_second}, where $\chi({C_5})=3$. MISs are represented by the colors $Y$, $B$, and $R$ with PMF $p(\mathcal{C}_{C_5})= \left\{\frac{1}{5}, \frac{2}{5}, \frac{2}{5}\right\}$. To achieve $H_{C_5}^{\chi}(X_1)$ approximately, a binary Huffman tree is constructed for each color. The assigned codes are $Y: 1$, $R: 00$, and $B: 01$. Similarly, for $C_5^2$, from (\ref{eq-chromatic_number_cycles}), $\chi({C_5^2})=8$. The color set is denoted as $\mathcal{C}({C_5^3})= \{c_1, c_2, \dots, c_8\}$, with the corresponding Huffman codes: $c_1: 11$, $c_2: 000$, $c_3: 001$, $c_4: 010$, $c_5: 011$, $c_6: 101$, $c_7: 1000$, and $c_8: 1001$. For $C_5^3$, $\chi({C_5^3})=20$, we have the following coloring PMF: 
\begin{align}
\label{eq:PMF_C_5_3}
p(\mathcal{C}_{C_5^3})= \left\{\frac{8}{125},\cdots ,\frac{8}{125}, \frac{4}{125}, \cdots, \frac{4}{125}, \frac{2}{5},\frac{2}{5},\frac{1}{5}\right\},
\end{align}
where there are $\alpha_3=13$ colors with probability $\frac{8}{125}$, $\alpha_2=4$ colors with probability $\frac{4}{125}$, $\alpha_1=2$ colors with probability $\frac{2}{125}$, and $\alpha_0=1$ color with probability $\frac{1}{125}$, where the set of $\alpha_t$'s are uniquely determined employing $\chi({C_5^3})=\sum_{t=0}^{3}\alpha_t= 20$ and $\alpha_t\geq \alpha_{t-1}$. Building the binary Huffman encoding tree using $p(\mathcal{C}_{C_5^3})$ in (\ref{eq:PMF_C_5_3}), helps assign codes for encoding $\mathcal{C}_{C_5^3}$, that achieves the minimum average code length. %given (\ref{eq:PMF_C_5_3}).
\end{exm}
%%%%%%%%%%%%%%%%%%

Next, we examine the relationship between the chromatic numbers of cycles and the eigenvalues of their adjacency matrices.
%%%%%%%%%%%%%%%

\subsection{Eigenvalues of the Adjacency Matrices of \texorpdfstring{$C_i$}{}}
\label{sec:Eig-Adj-Mat-Cycl-Graph-}

Let ${\bf A}_f \in \{0,1\}^{V \times V}$ be the adjacency matrix of $G(\mathcal{V}, \mathcal{E})$, where ${\bf A}_f(l_1, l_2) = 1$ if distinct vertices $l_1, l_2 \in [V]$ must be distinguished, and $0$ otherwise. Since there are no self-loops, ${\bf A}_f(l_1, l_1) = 0$. The adjacency matrix of the $2$-fold OR product, ${\bf A}_f^2$, is composed of diagonal blocks of ${\bf A}_f$, where entries equal to 1 are replaced with all-one matrices ${\bf J}_V$, and zeros with all-zero matrices ${\bf Z}_V$, representing full or no connectivity between sub-graphs $C_i^2(l)$ for $l \in [V]$.
% \begin{align}
%  \label{Af2_matrix-rep-block}
%  {\bf A}_f^2=
%  \begin{bmatrix}
%  {\bf A}_f & {\bf J}_{V} & {\bf Z}_{V}  & \dots & {\bf Z}_{V} & {\bf J}_{V}\\
%  {\bf J}_{V} & {\bf A}_f & {\bf J}_{V}  & \dots & {\bf Z}_{V} & {\bf Z}_{V}\\
%  \vdots & \vdots & \vdots & \vdots & \vdots & \vdots\\
%  {\bf J}_{V}  & {\bf Z}_{V} & {\bf Z}_{V} & \dots & {\bf J}_{V} & {\bf A}_f \  
%  \end{bmatrix}.
%  \end{align}
 
Similarly, %to (\ref{Af2_matrix-rep-block}), 
we can construct ${\bf A}_f^n$ for the $n$-fold OR product using induction. Given the $n$-fold OR product of $C_i$ with itself, the adjacency matrix of $C_i^n$, %which we denote by ${\bf A}_f^n$, 
has the following block structure: 
\begin{align}
\label{eq-AdjacencyMatrix}
{\bf A}_{f}^n=
\begin{bmatrix}
{\bf A}_f^{n-1} & {\bf J}_{V^{n-1}} & {\bf Z}_{V^{n-1}} & \dots & {\bf Z}_{V^{n-1}} & {\bf J}_{V^{n-1}}\\
{\bf J}_{V^{n-1}} & {\bf A}_f^{n-1} & {\bf J}_{V^{n-1}} & {\bf Z}_{V^{n-1}} & \dots & {\bf Z}_{V^{n-1}}\\
\vdots & \vdots & \vdots & \ddots & \vdots & \vdots\\
{\bf J}_{V^{n-1}} & {\bf Z}_{V^{n-1}} & {\bf Z}_{V^{n-1}} & \dots & {\bf J}_{V^{n-1}} & {\bf A}_f^{n-1} \  
\end{bmatrix}\ ,
\end{align}
%%%
consisting of $V$ row-block matrix partitions of size $V^{n-1}$ each. In every block row of ${\bf A}_{f}^n$, there are exactly two ${\bf J}_{V^{n-1}}$ matrices and one ${\bf A}_f^{n-1}$, i.e., the adjacency matrix of the $(n-1)$-fold OR product of $C_i$. We next characterize the eigenvalues of the all-ones matrix ${\bf J}_V$ of size $V \times V$, which represents full connectivity between adjacent sub-graphs in the $2$-fold OR product (see Definition~\ref{OR power}). The proof of the following lemma is given in \cite[Lemma~1]{10313467}.
\begin{lem}
\label{eigenvalue_dependency}
The eigenvalues of the all-ones matrix ${\bf J}_V\in 1^{V\times V}$ are $0$ and $V$, with algebraic multiplicities $V-1$ and $1$, respectively.
\end{lem}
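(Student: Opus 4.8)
The plan is to exploit the rank-one structure of ${\bf J}_V$. First I would write ${\bf J}_V={\bf 1}_V{\bf 1}_V^{\top}$, where ${\bf 1}_V\in\mathbb{R}^{V}$ denotes the all-ones column vector, so that ${\bf J}_V$ is a real symmetric matrix of rank one. Being real and symmetric, ${\bf J}_V$ is orthogonally diagonalizable, so its algebraic and geometric multiplicities coincide; this lets me recover the algebraic multiplicities simply by exhibiting eigenspaces, without computing a characteristic polynomial.

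Next I would identify the two eigenspaces directly. Multiplying, ${\bf J}_V{\bf 1}_V={\bf 1}_V({\bf 1}_V^{\top}{\bf 1}_V)=V{\bf 1}_V$, so $V$ is an eigenvalue with eigenvector ${\bf 1}_V$, contributing multiplicity at least $1$. For any vector ${\bf x}$ orthogonal to ${\bf 1}_V$, i.e., with $\sum_{k\in[V]}x_k=0$, we obtain ${\bf J}_V{\bf x}={\bf 1}_V({\bf 1}_V^{\top}{\bf x})={\bf 0}$, so the hyperplane $\{{\bf x}:{\bf 1}_V^{\top}{\bf x}=0\}$, which has dimension $V-1$, lies in the kernel of ${\bf J}_V$. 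Hence $0$ is an eigenvalue of geometric, and therefore algebraic, multiplicity at least $V-1$.

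Finally I would close the count. The two multiplicities already sum to $V$, the dimension of the whole space, so these are all the eigenvalues and the multiplicities are exactly $V-1$ for the eigenvalue $0$ and $1$ for the eigenvalue $V$. As a consistency check, the eigenvalue sum $(V-1)\cdot 0+1\cdot V=V$ matches $trace({\bf J}_V)=V$. There is no genuine obstacle for such a standard fact; the only point requiring care is the appeal to symmetry to equate geometric and algebraic multiplicities, which one could instead bypass by verifying the characteristic polynomial $\det(\lambda{\bf I}_V-{\bf J}_V)=\lambda^{V-1}(\lambda-V)$ directly, an equally valid alternative route.
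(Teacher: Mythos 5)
Your proof is correct and complete: the rank-one factorization ${\bf J}_V={\bf 1}_V{\bf 1}_V^{\top}$ immediately gives the eigenpair $({\bf 1}_V, V)$ and a $(V-1)$-dimensional kernel on the orthogonal complement of ${\bf 1}_V$, and the appeal to symmetry (or, equivalently, the characteristic polynomial $\lambda^{V-1}(\lambda-V)$) legitimately converts geometric into algebraic multiplicities. Note that this paper does not reproduce a proof of the lemma but defers it to Lemma~1 of the earlier conference version~\cite{10313467}; your argument is the standard one for this fact and would serve as a self-contained replacement.
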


From Lemma \ref{eigenvalue_dependency}, we have 
$\lambda_1({\bf J}_{V})=V$. To calculate the eigenvalues of ${\bf A}_{f}^n$, one needs to solve $|{\bf A}_{f}^{n}- \lambda({\bf A}_f^n) {\bf I}_{V^{n}}|=0$. Let $\{\lambda_k({\bf A}_f), \ k\in[V]\}$ be the set of eigenvalues of ${\bf A}_f$, and $\{\nu_k({\bf A}_f^2), \ k\in[V^2]\}$ be the set of eigenvalues of the $V^2\times V^2$ matrix ${\bf A}_f^2$. We also let $\{{\bf u}_k, \ k\in[V]\}$ and $\{{\bf v}_k, \ k\in[V^2]\}$ be the sets of eigenvectors of ${\bf A}_f$ and ${\bf A}_f^2$, respectively. Solving ${\bf A}_f {\bf u}=\lambda_k({\bf A}_f) {\bf u}$ determines $\lambda_k({\bf A}_f)$ for $k\in[V]$, associated with ${\bf u}$. Similarly, for the $2$-fold OR product graph, the eigenvalues of ${\bf A}_f^2$ are determined by solving
\begin{align}
\label{Af2_eigenvalues}
{\bf A}_f^2{\bf v}=\nu({\bf A}_f^2) {\bf v}\ ,
\end{align}
%%%%%%%%%%%%%%%%%%%%%%%%%%%%%%%%%%%%%%%%%%%%
where ${\bf v}=[{\bf v}^{\intercal}_1,{\bf v}^{\intercal}_2, \dots, {\bf v}^{\intercal}_V]^{\intercal}$, and ${\bf v}_k$ is a  $V\times 1$ vector, for each $k\in[V]$. In other words, for each $\nu({\bf A}_f^2)$, we have a set of $V$ block row equations, each containing $V$ scalar equations. More specifically, $\nu({\bf A}_f^2)$ in (\ref{Af2_eigenvalues}) 
satisfies the following $V$ block equations:
\begin{align}
\label{eq-row-cycle-matrix-id}
{\bf A}_f {\bf v}_k+ {\bf J}_{V} {\bf v}_{k+1}+ {\bf J}_{V} {\bf v}_{k-1}=\nu({\bf A}_f^2)  {\bf v}_k\ ,\quad k\in[V]\ .
\end{align}

Using (\ref{eq-row-cycle-matrix-id}), we next derive the eigenvalues of ${\bf A}_f^{n}$.

%%%%%%%%%%%%%%%%
\begin{theo}[Distinct eigenvalues of $C_i^n$]
\label{cyclic_adjacency}
The adjacency matrix ${\bf A}_f^n$ for $C_i^n$ where $n\ge 2$, has the same distinct eigenvalues of ${\bf A}_{f}^{n-1}$ as well as two new distinct eigenvalues.  
\end{theo}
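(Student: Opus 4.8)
The plan is to turn the $V^{n}$-dimensional eigenproblem for ${\bf A}_f^{n}$ into a problem about the spectrum of ${\bf A}_f^{n-1}$ together with the cyclic Fourier modes of the outer cycle. First I would note that the block structure (\ref{eq-AdjacencyMatrix}) is exactly ${\bf A}_f^{n}={\bf I}_V\otimes{\bf A}_f^{n-1}+{\bf A}_f\otimes{\bf J}_{V^{n-1}}$, where the outer factor ${\bf A}_f$ is the adjacency matrix of $C_i$ (diagonal blocks give ${\bf A}_f^{n-1}$, the neighbor positions $k\pm 1$ give ${\bf J}_{V^{n-1}}$, the rest give ${\bf Z}_{V^{n-1}}$). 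Because $C_i^{n-1}$ is regular (Corollary~\ref{cor-total-edge-d-regular}), the all-ones vector $\mathbf{1}\in\mathbb{R}^{V^{n-1}}$ is the principal eigenvector of ${\bf A}_f^{n-1}$ with eigenvalue $\lambda_1({\bf A}_f^{n-1})=d_{n-1}=2(V^{n-1}-1)/(V-1)$ from Proposition~\ref{prop-deg-cycles}, and since ${\bf J}_{V^{n-1}}=\mathbf{1}\mathbf{1}^{\intercal}$ one gets ${\bf A}_f^{n-1}{\bf J}_{V^{n-1}}={\bf J}_{V^{n-1}}{\bf A}_f^{n-1}=d_{n-1}{\bf J}_{V^{n-1}}$. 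Hence the two summands of ${\bf A}_f^{n}$ commute and admit a common eigenbasis of tensor vectors ${\bf f}_j\otimes{\bf w}$, where ${\bf f}_j$ is a Fourier eigenvector of ${\bf A}_f$ with eigenvalue $2\cos(2\pi j/V)$ and ${\bf w}$ is an eigenvector of ${\bf A}_f^{n-1}$.

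Next I would split the eigenvectors into two families according to how ${\bf J}_{V^{n-1}}$ acts, using Lemma~\ref{eigenvalue_dependency}. If ${\bf w}\perp\mathbf{1}$ then ${\bf J}_{V^{n-1}}{\bf w}=\mathbf{0}$, so the block eigen-equation (\ref{eq-row-cycle-matrix-id}) (and its $n$-fold analogue) collapses to ${\bf A}_f^{n-1}{\bf w}=\nu{\bf w}$. These vectors therefore reproduce precisely the non-principal distinct eigenvalues of ${\bf A}_f^{n-1}$, which establishes the inheritance part of the claim, namely $\vartheta(C_i^{n})\supseteq\vartheta(C_i^{n-1})\setminus\{d_{n-1}\}$. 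If instead ${\bf w}=\mathbf{1}$, then ${\bf J}_{V^{n-1}}\mathbf{1}=V^{n-1}\mathbf{1}$ and, writing ${\bf v}_k=\phi_k\mathbf{1}$, the block equation reduces to the scalar cyclic recurrence $d_{n-1}\phi_k+V^{n-1}(\phi_{k+1}+\phi_{k-1})=\nu\phi_k$, i.e.\ the eigenproblem for $d_{n-1}{\bf I}_V+V^{n-1}{\bf A}_f$. This yields the family $\nu_j=d_{n-1}+2V^{n-1}\cos(2\pi j/V)$.

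I would then locate the new eigenvalues inside this second family and close the argument by induction on $n$, with the base case $n=2$ read directly off (\ref{eq-row-cycle-matrix-id}). The extremal modes are the decisive ones: $j=0$ gives $\nu_0=d_{n-1}+2V^{n-1}=d_n$, the new principal eigenvalue consistent with Proposition~\ref{prop-deg-cycles}, and the mode with $\cos(2\pi j/V)$ minimal gives the new smallest eigenvalue near $d_{n-1}-2V^{n-1}$. Both scale like $V^{n-1}$, whereas every inherited eigenvalue is bounded in modulus by $d_{n-1}=O(V^{n-2})$, so neither can lie in $\vartheta(C_i^{n-1})$; these are the genuinely new distinct eigenvalues. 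A dimension count—$V(V^{n-1}-1)$ independent vectors from the first family and $V$ from the second, totalling $V^{n}$—confirms that the common eigenbasis is complete and no eigenvalue is missed.

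The step I expect to be the main obstacle is the exact bookkeeping of how many members of the family $\{\nu_j\}$ are actually new. The symmetry $\nu_j=\nu_{V-j}$ collapses the modes in pairs, and some modes can fold back into the inherited spectrum—most notably, whenever a $\cos(2\pi j/V)$ vanishes the value $\nu_j=d_{n-1}$ re-enters and restores the previous principal eigenvalue. Showing that, after accounting for these coincidences, exactly the two extremal values survive as new is the delicate part, since it depends on the arithmetic of $V$ (through which Fourier angles are realizable); this is precisely where the cyclic structure of $C_i$ must be pinned down, and where the induction must be controlled so that the inheritance-plus-two-new pattern propagates cleanly from step to step.
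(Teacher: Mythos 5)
Your proposal is correct in its mechanics and is essentially a rigorous completion of the route the paper takes, executed more explicitly. The paper's proof also rests on simultaneous diagonalizability of the diagonal and off-diagonal block contributions (its matrix ${\bf P}$) and on summing eigenvalues via Lemma~\ref{eigenvalue_dependency}, but it works row-block by row-block through (\ref{eq-row-cycle-matrix-id}) and then simply asserts that "the number of distinct eigenvalues differs at most by two." What you add is the clean Kronecker identity ${\bf A}_f^{n}={\bf I}_V\otimes{\bf A}_f^{n-1}+{\bf A}_f\otimes{\bf J}_{V^{n-1}}$, the verification that the two summands commute because ${\bf A}_f^{n-1}{\bf J}_{V^{n-1}}={\bf J}_{V^{n-1}}{\bf A}_f^{n-1}=d_{n-1}{\bf J}_{V^{n-1}}$ by regularity, the split of the common eigenbasis into ${\bf w}\perp\mathbf{1}$ (inherited spectrum) versus ${\bf w}=\mathbf{1}$ (the lifted cycle modes $\nu_j=d_{n-1}+2V^{n-1}\cos(2\pi j/V)$), and the dimension count showing nothing is missed. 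This buys an exact description of $\vartheta(C_i^{n})$ rather than an assertion about its cardinality, and it recovers Proposition~\ref{prop-largest-eigenvalue} as the $j=0$ mode for free.

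The obstacle you flag at the end is a genuine gap, but it is a gap in the theorem's statement as much as in your argument, and the paper's proof does not close it either. Your first family yields only $\vartheta(C_i^{n-1})\setminus\{d_{n-1}\}$: the old principal eigenvalue is carried by ${\bf w}=\mathbf{1}$ and is replaced by the family $\{\nu_j\}$, so it survives only if some $\cos(2\pi j/V)$ vanishes (e.g., $V=4$), and otherwise it disappears while \emph{three} genuinely new values $\nu_0$, $\nu_{j}$ for the two extreme nonzero cosines appear. The paper's own numerics confirm this: $2\in\vartheta(C_5)$ but $2\notin\vartheta(C_5^{2})=\{-6.09,-1.618,0.618,5.09,12\}$, so the literal claim "same distinct eigenvalues plus two new ones" fails for odd cycles and must be read as a net count $|\vartheta(C_i^{n})|=|\vartheta(C_i^{n-1})|+2$; it fails outright for $i=3$, where $C_3^{2}$ is complete and $|\vartheta(C_3^{2})|=|\vartheta(C_3)|=2$. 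So your plan is sound and, once the coincidences among $\{\nu_j\}$ and with the inherited spectrum are worked out for $i\ge 4$ (your formula makes this a finite trigonometric/arithmetic check: $\nu_j$ can only re-enter the inherited range when $|\cos(2\pi j/V)|\le d_{n-1}/V^{n-1}\approx 2/(V-1)$), it proves a corrected version of the statement; you should not expect to prove the statement exactly as written.
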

%%%%%%%%%%%%%%
\begin{proof}
See Appendix~\ref{App:cyclic_adjacency}.
\end{proof}
%%%%%%%%%%%%%%%
Next, in Section~\ref{sec:choromatic-using-eigenvalue-cycles}, we will examine the relation between the eigenvalues and the chromatic number of $C_i^n$ to establish lower and upper bounds on $\chi({C_i^n})$.

%%%%%%%%%%%%%%%%%%%%%%%%%%%%%%%%%%%%%%%%%5

\subsection{Bounding the Chromatic Number of \texorpdfstring{$C_i$}{} Using the Eigenvalues of \texorpdfstring{${\bf A}_f$}{}}
\label{sec:choromatic-using-eigenvalue-cycles}    

Given a cycle $C_i$ with an adjacency matrix ${\bf A}_f$, where we denote by $\vartheta(C_i)$ the set of its distinct eigenvalues, $\lambda_1({\bf A}_f)$ its largest, and $\lambda_{V}({\bf A}_f)$ its smallest eigenvalue \cite{aspvall1984graph}. Exploiting these, we can derive the following lower and upper bounds on $\chi({C_i})$ as follows \cite{hoffman1970eigenvalues, wilf1967eigenvalues}:
\begin{align}
\label{eq-chro_up-low_eig}
 1-\frac{\lambda_1({\bf A}_f)}{\lambda_{V}({\bf A}_f)}\leq \chi({C_i})\leq \lfloor \lambda_1({\bf A}_f) \rfloor +1\ ,
 \end{align}
where two lower bounds on $\lambda_V$ have been derived in \cite{brigham1984bounds} and \cite{hong1988bounds}, respectively:
\begin{align}
\label{birgham-low-bound-small-eigen}
\lambda_{V}({\bf A}_f)&\ge -\sqrt{2E\cdot((V-1)/{2})} \ ,
    \\
\label{hong-low-bound-small-eigen}
\lambda_{V}({\bf A}_f)&\ge -\sqrt{{V}/{2}\cdot({(V+1)}/{2})} \ .  
\end{align}  
 
We next apply the bound in (\ref{eq-chro_up-low_eig}) to $\chi({C_i^j})$ corresponding to ${\bf A}_f^{j}$, $j\in[n]$. To that end, let us consider an example where $i=5$ and $j=2$. Given $C_5$, the set of distinct eigenvalues of ${\bf A}_f$, using numerical evaluation, are given as $\vartheta(C_5)=\{-1.618, 0.618, 2\}$. 
Note also that $\vartheta(C_{5}^{2})=\{-6.09,-1.61803,$  $0.61803, 5.09016,12\}$ for ${\bf A}_{f}^{2}$. Consider the set $\vartheta(C_{5}^{2})$, where $\lambda_{V^2}({\bf A}_f^{2})=-6.09$, for the $2$-fold OR product graph $C_{5}^{2}$, an application of (\ref{birgham-low-bound-small-eigen}) yields $\lambda_{V^2}({\bf A}_f^{2})\ge-60$. Similarly, an application of (\ref{hong-low-bound-small-eigen}) yields that $\lambda_{V^2}({\bf A}_f^2)\geq -12.748$. Hence, for the $2$-fold OR product, one can find that the bound in \cite{hong1988bounds} is tighter for cycles and their $n$-fold OR products. 

We do not have an exact characterization for $\lambda_{V^n}({\bf A}_f^n)$. However, using the lower bounds in (\ref{birgham-low-bound-small-eigen}) and (\ref{hong-low-bound-small-eigen}) can help derive a lower bound on $\chi({C_i^n})$. 
On the other hand, for $\lambda_{1}({\bf A}_f^n)$, recalling from Proposition~\ref{prop-deg-cycles} that all vertices of $C_i^n$ for $n\geq 2$ have equal degrees, and exploiting this feature, we next derive an exact characterization of $\lambda_1({\bf A}_f^n)$, for $n\geq 2$. %The proof of the following theorem is given in \cite[Theorem~4]{10313467}.

\begin{prop}[The largest eigenvalue for the adjacency matrix of $C_i^n$]
\label{prop-largest-eigenvalue}
The largest eigenvalue of $C_{i}$ is $\lambda_1({\bf A}_f)=2$, and the largest eigenvalue of the $n$-fold OR product $C_{i}^{n}$ is determined as
\begin{align}
\label{eq:lambda_1_C_i_n_power}
\lambda_{1}({\bf A}_f^{n})=2+\sum_{j\in [n-1]}(2V^j) \ ,\quad n\geq 2 \ .
\end{align}
\end{prop}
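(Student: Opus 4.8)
The plan is to reduce the claim to the standard spectral fact that the largest eigenvalue of a connected $d$-regular graph equals its degree $d$, and then to substitute the degree already computed in Proposition~\ref{prop-deg-cycles}. First I would observe, via Corollary~\ref{cor-total-edge-d-regular}, that the $n$-fold OR product $C_i^n$ is itself a regular graph whose common degree is $deg({x}^n)=2+\sum_{j=1}^{n-1}2V^j$. Since this quantity is exactly the right-hand side of (\ref{eq:lambda_1_C_i_n_power}), it suffices to establish that $\lambda_1({\bf A}_f^n)=deg({x}^n)$, i.e.\ that the spectral radius of a cycle power coincides with its regularity degree.

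For the base graph $C_i$, which is $2$-regular, every row of ${\bf A}_f$ sums to $2$, so ${\bf A}_f\,{\bf 1}_{V}=2\,{\bf 1}_{V}$ and $2$ is an eigenvalue with the all-ones eigenvector. To see it is the largest, I would invoke the Gershgorin Circle Theorem (Definition~\ref{rem-gersh-circle}): because ${\bf A}_f$ has a zero diagonal (no self-loops) and each off-diagonal row sum equals $2$, every eigenvalue $\lambda_k({\bf A}_f)$ lies in the disc $|\lambda_k({\bf A}_f)|\leq 2$. Combining the two observations yields $\lambda_1({\bf A}_f)=2$.

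The identical two-part argument carries over to $C_i^n$ for $n\geq 2$. Because $C_i^n$ is $deg({x}^n)$-regular, each row of the symmetric $(0,1)$-matrix ${\bf A}_f^n$ contains exactly $deg({x}^n)$ ones, whence ${\bf A}_f^n\,{\bf 1}_{V^n}=deg({x}^n)\,{\bf 1}_{V^n}$ and $deg({x}^n)$ is an eigenvalue. Applying the GCT once more, with vanishing diagonal entries $a_{kk}=0$ and off-diagonal row sums $\sum_{t\neq k}|a_{kt}|=deg({x}^n)$, places every eigenvalue of ${\bf A}_f^n$ in the interval $[-deg({x}^n),\,deg({x}^n)]$, so $\lambda_1({\bf A}_f^n)\leq deg({x}^n)$. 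Equality with the eigenvalue exhibited above gives $\lambda_1({\bf A}_f^n)=deg({x}^n)=2+\sum_{j\in[n-1]}2V^j$, which is precisely (\ref{eq:lambda_1_C_i_n_power}).

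The step that needs the most care is the upper bound $\lambda_1({\bf A}_f^n)\leq deg({x}^n)$: the GCT delivers it here only because ${\bf A}_f^n$ has a zero diagonal and constant row sums, so I would emphasize these two structural facts, both inherited from Proposition~\ref{prop-deg-cycles} and Corollary~\ref{cor-total-edge-d-regular}, rather than treat the inequality as automatic. As an alternative I could appeal to the Perron--Frobenius theorem for the nonnegative irreducible matrix ${\bf A}_f^n$, which identifies $deg({x}^n)$ as the dominant (Perron) eigenvalue and certifies that the all-ones vector is its Perron eigenvector; I expect the GCT route to be preferred for consistency with the spectral tools developed elsewhere in the paper.
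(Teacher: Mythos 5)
Your proof is correct, but it follows a genuinely different route from the paper's. The paper argues recursively through the block structure of ${\bf A}_f^n$ in (\ref{eq-AdjacencyMatrix}): using the block row equation (\ref{adjacency_cyclic_relation}) and the simultaneous-diagonalization argument from the proof of Theorem~\ref{cyclic_adjacency}, it obtains $\lambda_1({\bf A}_f^n)=\lambda_1({\bf A}_f^{n-1})+2\lambda_1({\bf J}_{V^{n-1}})=\lambda_1({\bf A}_f^{n-1})+2V^{n-1}$ (via Lemma~\ref{eigenvalue_dependency}), and unrolls this recursion from $\lambda_1({\bf A}_f)=2$. You instead bypass the recursion entirely: you invoke the regularity of $C_i^n$ (Proposition~\ref{prop-deg-cycles} and Corollary~\ref{cor-total-edge-d-regular}) to exhibit the all-ones vector as an eigenvector with eigenvalue $deg({x}^n)$, and then use the Gershgorin discs of Definition~\ref{rem-gersh-circle} (zero diagonal, constant off-diagonal row sums) to cap every eigenvalue at $deg({x}^n)$, forcing equality. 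Your argument is more self-contained and arguably tighter logically, since it does not lean on the claim that eigenvalues of the summands add --- a step the paper justifies only via an asserted simultaneous diagonalizability of ${\bf A}_f$ and ${\bf J}_V$; one minor remark is that connectivity is not actually needed for $\lambda_1=d$ in a $d$-regular graph (only for the simplicity of that eigenvalue), so your appeal to it is harmless but superfluous. What the paper's recursive route buys in exchange is consistency with the spectral machinery of Theorem~\ref{cyclic_adjacency}, which it reuses to track how \emph{all} distinct eigenvalues evolve under the OR product, not just the largest one.
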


%%%%%%%%%%%%%%%%%%%%%%%%%%%%%%%%%%%%%%%%%%%%%%%%%%%
\begin{proof}
\label{proof-largest-eig-cyc}
%See Appendix~\ref{App:largest-eigencalue}.
%The statement in (\ref{eq:lambda_1_C_i_n_power}) follows from 
We prove it using Definition~\ref{def-Degree-of-a-vertex}, (\ref{eq-chro_up-low_eig}), and the proof of Theorem~\ref{cyclic_adjacency}—which leverages the fact that the eigenvalues of the sum of adjacency matrices equal the sum of their individual eigenvalues. The calculation of eigenvalues for the adjacency matrix of $C_{i}^{j}$ is given in (\ref{eq-row-cycle-matrix-id}), which links the eigenvalues of ${\bf A}_f^j$ with the two ${\bf J}_V$ matrices of size $V^{j-1} \times V^{j-1}$ in each block row. From (\ref{eq:eigen_larg}) and (\ref{adjacency_cyclic_relation}), the largest eigenvalue for a power graph $C_i^n$ is obtained by adding $\lambda_1({\bf A}_f^{n-1})$ for a sub-graph $C_i^{n-1}$ and twice the largest eigenvalue of ${\bf J}_{V^{n-1}}$, which is $V^{n-1}$ (see Lemma~\ref{eigenvalue_dependency}). Thus, $\lambda_1({\bf A}_f) =2$, and for $n \geq 2$, we achieve (\ref{eq:lambda_1_C_i_n_power}).  
\end{proof}
%%%%%%%%%%%%%%%%%%%%%%%%%%%%%%%%%%%%%%%%%%%%%%%%%%%

Next, we present new bounds on $\chi({C_{i}^j})$ by exploiting (\ref{eq-chro_up-low_eig}), (\ref{birgham-low-bound-small-eigen}) and (\ref{hong-low-bound-small-eigen}) which lower bound $\lambda_{V^j}({\bf A}_f^{j})$, and Proposition~\ref{prop-largest-eigenvalue}, which gives the
exact value of $\lambda_1({\bf A}_f^{j})$.

\begin{prop}[Bounding $\chi({C_{i}^{n}})$ using eigenvalues of ${\bf A}_f^n$]
\label{prop-eig-bound-chro-cycle}
The chromatic number $\chi({C_{i}^{n}})$ is lower and upper bounded as 
 \begin{align}
 \label{eq:prop-up-and-lower-chromatic-cycles}
     1-\frac{2+\sum_{j=1}^{n-1}(2V^j)}{\max\Big(-\sqrt{\frac{V^n}{2}\cdot \frac{(V^n+1)}{2}} \ , -\sqrt{2{E}^n\cdot\frac{(V^{n}-1)}{2}} \Big)}\leq \chi({C_{i}^{n}})\leq  \sum_{j=1}^{n-1}(2V^j)\ +3 \ .
  \end{align} 
\end{prop}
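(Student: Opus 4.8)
The plan is to specialize the general spectral bounds on the chromatic number in (\ref{eq-chro_up-low_eig}) to the $n$-fold OR product graph $C_i^n$, whose adjacency matrix ${\bf A}_f^n$ has size $V^n \times V^n$. Concretely, I would instantiate (\ref{eq-chro_up-low_eig}) with ${\bf A}_f$ replaced by ${\bf A}_f^n$, the dimension $V$ replaced by $V^n$, and the extreme eigenvalues replaced by $\lambda_1({\bf A}_f^n)$ and $\lambda_{V^n}({\bf A}_f^n)$, so that
\begin{align}
1-\frac{\lambda_1({\bf A}_f^n)}{\lambda_{V^n}({\bf A}_f^n)}\leq \chi({C_i^n})\leq \lfloor \lambda_1({\bf A}_f^n)\rfloor +1 \ .
\end{align}
The two endpoints are then handled separately, using the exact value of $\lambda_1({\bf A}_f^n)$ from Proposition~\ref{prop-largest-eigenvalue} together with the two lower bounds on the smallest eigenvalue.

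For the upper bound, I would substitute $\lambda_1({\bf A}_f^n)=2+\sum_{j=1}^{n-1}(2V^j)$ from Proposition~\ref{prop-largest-eigenvalue}. Since $V$ is a positive integer, this quantity is itself a positive integer, so the floor operation is vacuous, i.e., $\lfloor \lambda_1({\bf A}_f^n)\rfloor=\lambda_1({\bf A}_f^n)$, and adding $1$ directly produces $\sum_{j=1}^{n-1}(2V^j)+3$, exactly the RHS of (\ref{eq:prop-up-and-lower-chromatic-cycles}). This is the routine half.

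For the lower bound, I would apply the two bounds (\ref{birgham-low-bound-small-eigen}) and (\ref{hong-low-bound-small-eigen}) on the smallest eigenvalue to $C_i^n$, i.e., with $V\to V^n$ and $E\to E^n$ (the edge count $E^n$ being the one supplied by Corollary~\ref{cor-total-edge-d-regular}). Each yields a bound of the form $\lambda_{V^n}({\bf A}_f^n)\ge -B_\ell$ with $B_1=\sqrt{\tfrac{V^n}{2}\cdot\tfrac{V^n+1}{2}}$ and $B_2=\sqrt{2E^n\cdot\tfrac{V^n-1}{2}}$, so that jointly $\lambda_{V^n}({\bf A}_f^n)\ge \max(-B_1,-B_2)$. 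The delicate point is the direction of the inequality when this lower bound on a \emph{negative} eigenvalue is inserted into the denominator of $1-\lambda_1/\lambda_{V^n}$. I would make this rigorous by observing that the map $g(x)=1-\lambda_1({\bf A}_f^n)/x$ is strictly increasing on $\{x<0\}$ (its derivative $\lambda_1/x^2$ is positive there), so that $\lambda_{V^n}({\bf A}_f^n)\ge \max(-B_1,-B_2)$ forces $g(\lambda_{V^n}({\bf A}_f^n))\ge g(\max(-B_1,-B_2))$. Chaining this with the spectral lower bound $\chi({C_i^n})\ge g(\lambda_{V^n}({\bf A}_f^n))$ and substituting $\lambda_1({\bf A}_f^n)$ from Proposition~\ref{prop-largest-eigenvalue} yields precisely the LHS of (\ref{eq:prop-up-and-lower-chromatic-cycles}).

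I expect the only genuine obstacle to be this monotonicity and sign bookkeeping: because we control $\lambda_{V^n}({\bf A}_f^n)$ only through a one-sided (lower) bound on a negative quantity, one must verify that retaining the larger (closer-to-zero) of the two negative bounds, equivalently the smaller of $B_1,B_2$, is what preserves a valid and simultaneously tightest lower bound on $\chi({C_i^n})$. Everything else reduces to plugging the exact $\lambda_1({\bf A}_f^n)$ and the vertex/edge counts $V^n,E^n$ into the already-established eigenvalue inequalities.
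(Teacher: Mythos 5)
Your proposal is correct and follows essentially the same route as the paper: instantiate the spectral bound (\ref{eq-chro_up-low_eig}) for ${\bf A}_f^n$, plug in the exact $\lambda_1({\bf A}_f^n)$ from Proposition~\ref{prop-largest-eigenvalue} (dropping the floor since it is a positive integer), and replace $\lambda_{V^n}({\bf A}_f^n)$ by the maximum of the two negative lower bounds from (\ref{birgham-low-bound-small-eigen}) and (\ref{hong-low-bound-small-eigen}). Your explicit monotonicity argument for $g(x)=1-\lambda_1({\bf A}_f^n)/x$ on $x<0$ makes rigorous a sign-bookkeeping step that the paper's proof only asserts, but it is the same argument.
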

\begin{proof}
\label{proof:bound-choromatic-cycles-max}
Combining (\ref{eq-chro_up-low_eig}), which bounds $\lambda_1$ and $\lambda_V$ of ${\bf A}_f$, with Proposition~\ref{prop-largest-eigenvalue}, as well as (\ref{hong-low-bound-small-eigen}) that lower bounds $\lambda_V$ \cite{hong1988bounds}, we have $1-\frac{2+\sum_{j=1}^{n-1}(2V^j)}{\lambda_{V^n}}\leq \chi({C_i^n})\leq \lfloor 2+\sum_{j=1}^{n-1}(2V^j)\rfloor+1$. We further simplify $\lfloor \cdot \rfloor$, because $\lambda_1$ is a positive integer~(see Proposition~\ref{prop-largest-eigenvalue}). Finally, substituting $\lambda_{V^n}$ with the maximum of the lower bounds in (\ref{birgham-low-bound-small-eigen}) and (\ref{hong-low-bound-small-eigen}) leads to (\ref{eq:prop-up-and-lower-chromatic-cycles}).
\end{proof}

To complement Proposition~\ref{prop-eig-bound-chro-cycle}, we next derive another bound that depicts the relation between the degree of each node in $C_i^n$ and $\chi({C_i^n})$.

\begin{cor}[Bounding $\chi({C_{i}^n})$ using degrees of $C_i^n$]
\label{cor_bound_choromatic_degree}
$\chi({C_{i}^n})$ satisfies the following relation:
\begin{align}
\label{eq-prop_bound_choromatic_degree}
 1+\frac{deg({x}^{n})}{\sqrt{2{E}^{n}-(V^n-1)\cdot(deg({x}^{n}))+(1+\sum_{i=1}^{n-1}2V^i)\cdot(deg({x}^{n}))}}
   \leq \chi({C_{i}^n})\leq \lfloor deg({x}^{n})\rfloor +1 \ .   
\end{align}
\end{cor}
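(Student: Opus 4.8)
The plan is to specialize the spectral bound on the chromatic number in (\ref{eq-chro_up-low_eig}) to the regular graph $C_i^n$, replacing the two extreme eigenvalues $\lambda_1({\bf A}_f^n)$ and $\lambda_{V^n}({\bf A}_f^n)$ by quantities expressed through the common vertex degree and the edge count. Since $C_i^n$ is a $deg(x^n)$-regular graph (Corollary~\ref{cor-total-edge-d-regular}), Proposition~\ref{prop-largest-eigenvalue} supplies the exact value $\lambda_1({\bf A}_f^n)=2+\sum_{j=1}^{n-1}2V^j=deg(x^n)$, i.e., the largest eigenvalue coincides with the degree. This single identity lets me rewrite both the numerator of the lower bound and the upper-bound term of (\ref{eq-chro_up-low_eig}) purely in degree form, so the whole statement reduces to the same master inequality used in Proposition~\ref{prop-eig-bound-chro-cycle}.

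For the upper bound I would substitute $\lambda_1({\bf A}_f^n)=deg(x^n)$ into the right inequality of (\ref{eq-chro_up-low_eig}), which immediately gives $\chi(C_i^n)\leq \lfloor \lambda_1({\bf A}_f^n)\rfloor+1=\lfloor deg(x^n)\rfloor+1$, the claimed RHS; no further work is needed here, and one only notes that $deg(x^n)\in\mathbb{Z}^+$ so the floor is harmless. For the lower bound I would start from $\chi(C_i^n)\geq 1-\lambda_1({\bf A}_f^n)/\lambda_{V^n}({\bf A}_f^n)$ and put $\lambda_1=deg(x^n)$ in the numerator. Because $\lambda_{V^n}({\bf A}_f^n)<0$, the term $-\lambda_1/\lambda_{V^n}$ equals $deg(x^n)/|\lambda_{V^n}({\bf A}_f^n)|$, so any \emph{upper} bound on $|\lambda_{V^n}({\bf A}_f^n)|$ (equivalently a lower bound on $\lambda_{V^n}$ of the type in (\ref{birgham-low-bound-small-eigen})--(\ref{hong-low-bound-small-eigen})) yields a lower bound on $\chi$. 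The remaining step is to insert such a smallest-eigenvalue estimate and re-express it, via the regularity identity $2E^n=V^n\,deg(x^n)$ from Corollary~\ref{cor-total-edge-d-regular} and the relation $1+\sum_{i=1}^{n-1}2V^i=deg(x^n)-1$, into the degree/edge form appearing under the square root of (\ref{eq-prop_bound_choromatic_degree}); substituting this for $|\lambda_{V^n}|$ then produces the stated LHS.

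The main obstacle is the bookkeeping for the smallest eigenvalue. Unlike $\lambda_1$, there is no exact closed form for $\lambda_{V^n}({\bf A}_f^n)$ (as noted before Proposition~\ref{prop-largest-eigenvalue}), so one must select a valid lower bound and manipulate it into exactly the prescribed denominator $\sqrt{2E^n-(V^n-1)deg(x^n)+(1+\sum_{i=1}^{n-1}2V^i)deg(x^n)}$, all while carefully tracking the sign reversal caused by $\lambda_{V^n}<0$ and keeping the substitution consistent with $2E^n=V^n\,deg(x^n)$. The useful source relation here is the trace identity $\sum_k\lambda_k^2({\bf A}_f^n)=trace\big(({\bf A}_f^n)^2\big)=\sum_{x^n}deg(x^n)=2E^n$ together with $\sum_k\lambda_k({\bf A}_f^n)=0$, which pins down the spectral mass available to $\lambda_{V^n}$ once $\lambda_1=deg(x^n)$ is peeled off; the algebra of rewriting that estimate in degree form, rather than any deep spectral fact, is where essentially all the care lies.
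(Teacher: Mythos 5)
Your overall strategy coincides with the paper's: both arguments specialize the spectral bound (\ref{eq-chro_up-low_eig}) to $C_i^n$, use Proposition~\ref{prop-largest-eigenvalue} (equivalently the regularity of $C_i^n$) to set $\lambda_1({\bf A}_f^n)=deg(x^n)$, and the upper bound $\chi(C_i^n)\le \lfloor deg(x^n)\rfloor+1$ follows exactly as you say. The sign bookkeeping for the lower bound ($\lambda_{V^n}<0$, so an upper bound on $|\lambda_{V^n}|$ gives a lower bound on $\chi$) is also handled correctly.

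The gap is in how you propose to obtain the stated denominator. It is not a re-expression of any of the estimates you name: it is, term for term, the degree-based bound of Das~\cite{das2004some}, which the paper quotes as
\begin{align*}
\lambda_{V}({\bf A}_f)\ge -\sqrt{2{E}-(V-1)\cdot\min_{x\in[V]}deg(x)+\big(\min_{x\in[V]}deg(x)-1\big)\cdot\max_{x\in[V]}deg(x)}\ ,
\end{align*}
specialized to the regular case where $\min deg=\max deg=deg(x^n)$ and $deg(x^n)-1=1+\sum_{i=1}^{n-1}2V^i$. Your three candidate sources for the smallest-eigenvalue estimate all land on algebraically different radicands. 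Under regularity ($2E^n=V^n\,deg(x^n)$) the Das radicand collapses to $(deg(x^n))^2$; the trace identity you invoke gives $|\lambda_{V^n}|^2\le 2E^n-(deg(x^n))^2=V^n\,deg(x^n)-(deg(x^n))^2$; Brigham (\ref{birgham-low-bound-small-eigen}) gives $E^n(V^n-1)$; Hong (\ref{hong-low-bound-small-eigen}) gives $V^n(V^n+1)/4$. None of these can be manipulated into $2E^n-(V^n-1)deg(x^n)+(1+\sum_{i=1}^{n-1}2V^i)deg(x^n)$ using only $2E^n=V^n deg(x^n)$, so the "re-express into degree/edge form" step you defer would not produce the claimed left-hand side. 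The missing ingredient is the citation (or an independent derivation) of the Das bound itself; with it, the corollary is a one-line substitution, which is precisely what the paper does.
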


\begin{proof}
\label{proof-prop_bound_choromatic_degree}
To characterize $deg({x}^n)$, we apply (\ref{eq-degree_function}) from Proposition~\ref{prop-deg-cycles}, and to bound $\chi({C_i^n})$, we use (\ref{eq-chro_up-low_eig}). Then, we apply the lower bound on $\lambda_{V^n}$ given in~\cite{das2004some}, which yields
\begin{align}
\label{eq-smallest-eigen-das-lowerbound}
\lambda_{V}({\bf A}_f)\ge -\sqrt{2{E}-(V-1)\cdot(\min_{{x}\in [V]} deg({x}))+\big(\min_{{x}\in [V]} deg({x})-1\big)\cdot(\max_{{x}\in [V]} deg({x}))} \ ,
\end{align}
giving the lower bound in (\ref{eq-prop_bound_choromatic_degree}). For the upper bound, we use (\ref{eq:lambda_1_C_i_n_power}), where $deg(x^n)=\lambda_1({\bf A}_f^n)$.
\end{proof}

Next, we consider $d$-regular graphs, and characterize the vertex degrees and chromatic numbers for the $n$-fold OR products of $d$-regular graphs.

%%%%%%%%%%%%%%%%%%%%%%%%%%%%%%%%%%%%%%%%%%

\subsection{From Cycles to \texorpdfstring{$d$}{}-Regular Graphs}
\label{sec:sub-d-reg}

Building on our analysis for $C_i^n$ in Sections~\ref{sec:coloring_cyclic_graphs}-\ref{sec:Eig-Adj-Mat-Cycl-Graph-}, we now focus on $d$-regular source characteristic graphs. Next, we derive a closed-form expression for the degree of $G_{d, V}^n$.

\begin{prop}[Degrees of vertices in $G_{d,V}^{n}$]
\label{prop-deg-regular}
Given a $d$-regular graph $G_{d,V}=G(\mathcal{V}, \mathcal{E})$, the degree of each vertex in the $n$-fold OR product, denoted by $G_{d,V}^{n}$, for $n\geq 2$, is expressed as: 
\begin{align}
\label{eq-degree_function-regular}
 deg({x}^n)= d\cdot\frac{V^n-1}{V-1}\ , \quad \forall {x}^{n}\in [V^n] \ . 
\end{align}
\end{prop}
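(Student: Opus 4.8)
The plan is to prove (\ref{eq-degree_function-regular}) by induction on $n$, generalizing the cycle argument of Proposition~\ref{prop-deg-cycles} (the special case $d=2$) and exploiting the recursive sub-graph decomposition of the $n$-fold OR product that holds for any $d$-regular base graph. The base case $n=1$ is immediate: since $G_{d,V}$ is $d$-regular, every vertex satisfies $deg(x^1)=d$, which agrees with $d\cdot\frac{V-1}{V-1}=d$.

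For the inductive step, I would invoke Definition~\ref{def-sub-graph} to view $G_{d,V}^n$ as a collection of $V$ sub-graphs $\{G_{d,V}^n(l)\}_{l\in[V]}$, each an isomorphic copy of $G_{d,V}^{n-1}$, indexed by the value of the top coordinate $l\in[V]$. Fixing a vertex $x^n$ lying in sub-graph $G_{d,V}^n(l)$, I would split its neighbors into two disjoint groups. The first group consists of the neighbors inside its own sub-graph, whose count is exactly $deg(x^{n-1})$ by the inductive hypothesis; this corresponds to the diagonal block ${\bf A}_f^{n-1}$ of (\ref{eq-AdjacencyMatrix}). The second group consists of the cross-sub-graph neighbors: because $G_{d,V}$ is $d$-regular, the index $l$ is adjacent to exactly $d$ other indices in the base graph, and by the full-connection property of Definition~\ref{def-full-conec} (encoded by the ${\bf J}_{V^{n-1}}$ blocks in (\ref{eq-AdjacencyMatrix})), each such adjacent sub-graph contributes all of its $V^{n-1}$ vertices. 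This yields the recursion
\begin{align}
deg(x^n)=deg(x^{n-1})+d\cdot V^{n-1}\ ,
\end{align}
and substituting the inductive hypothesis $deg(x^{n-1})=d\cdot\frac{V^{n-1}-1}{V-1}$ and telescoping the geometric series gives $deg(x^n)=d\cdot\frac{V^n-1}{V-1}$, completing the induction.

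I expect the main obstacle to be not the arithmetic but the bookkeeping that guarantees regularity is preserved at every level, i.e., that both contributions above are truly independent of which vertex $x^n$ and which sub-graph index $l$ we picked. This is precisely where the $d$-regularity of the base graph is essential: it forces each of the $V$ sub-graph indices to have exactly $d$ fully-connected neighbors, so the cross-sub-graph count $d\cdot V^{n-1}$ is uniform, and, combined with the regularity of $G_{d,V}^{n-1}$ from Corollary~\ref{cor-total-edge-d-regular}, the within-sub-graph count $deg(x^{n-1})$ is also uniform. Once this uniformity is argued cleanly, the result holds for every $x^n\in[V^n]$, and the remainder is the routine simplification of the geometric sum.
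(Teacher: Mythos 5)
Your proposal is correct and follows essentially the same route as the paper's proof: decompose $G_{d,V}^n$ into $V$ sub-graphs isomorphic to $G_{d,V}^{n-1}$, count the $d\cdot V^{n-1}$ cross-sub-graph neighbors coming from the $d$ fully-connected adjacent sub-graphs, add the within-sub-graph degree recursively, and sum the geometric series. Your write-up is if anything slightly cleaner in stating the recursion $deg(x^n)=deg(x^{n-1})+d\cdot V^{n-1}$ explicitly and in identifying the $d$ adjacent sub-graphs simply as the neighbors of index $l$ in the base graph (the paper phrases them as a consecutive index range), but this is a presentational difference, not a different argument.
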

%%%%%%%%%%%%%%%%%%%%%%%%%%%%%%%%%%%%%%%
\begin{proof}
 \label{proof-deg-regular}   
The proof follows from employing Definition~\ref{d-regular graph, and Cycle graphs} for $G_{d, V}$. For details, see Appendix~\ref{App:prop-deg-regular}.
\end{proof}
%%%%%%%%%%%%%%%%%%%%%%%%%%%%%%%%%%%%%%
Using (\ref{eq-degree_function-regular}), for the $n$-fold OR product of $G_{d,V}$  where $V$ is even, we next determine $\chi({G_{d, V}^n})$.
%%%%%%%%%%%%%%%%%%%%%%%%%%%%%%%%%%%

\begin{prop}[The chromatic number of $G_{d, V}^n$]
\label{prop-chrom-d-regular}
    The chromatic number of the $n$-fold OR product of %a $d$-regular graph 
    $G_{d,V}$ with an even number of vertices, i.e., $V=2k$, $k\in\mathbb{Z}^+$, is determined as:
\begin{align}
\label{eq-theo-chrom-d-regular}
    \chi({G_{d,V}^n})= d^n , \quad n\geq 1 \ .
\end{align} 
\end{prop}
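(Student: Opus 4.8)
The plan is to sandwich $\chi(G_{d,V}^{n})$ between a lower bound coming from the fractional chromatic number and an upper bound coming from an explicit product coloring, and to show that both meet at $d^n$. Concretely, I would first reduce the entire statement to the base case $n=1$, namely the claim that $\chi(G_{d,V})=\chi_f(G_{d,V})=d$, and then propagate it to all $n$ by induction, using the multiplicativity of the fractional chromatic number already recorded in~(\ref{eq-n-fold-fractional}).

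For the upper bound, I would mimic the even-cycle argument behind Proposition~\ref{prop-even-cycle-chorom}. Viewing $G_{d,V}^{n}$ through the sub-graph decomposition of Definition~\ref{def-sub-graph}, it consists of $V$ copies of $G_{d,V}^{n-1}$, and by Definition~\ref{OR power} two sub-graphs $l,l'$ are fully connected (Definition~\ref{def-full-conec}) exactly when $(l,l')\in\mathcal{E}$ of the base graph. Fix a proper $d$-coloring of $G_{d,V}$, which partitions the sub-graph indices into $d$ independent sets; assign each class its own palette of $\chi(G_{d,V}^{n-1})$ colors and color every sub-graph of a class with that palette. Two sub-graphs sharing a palette are non-adjacent in $G_{d,V}$, so there is no connecting edge and no conflict, while adjacent sub-graphs get disjoint palettes, honoring the full connection. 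This gives $\chi(G_{d,V}^{n})\le d\cdot\chi(G_{d,V}^{n-1})$, hence $\chi(G_{d,V}^{n})\le d^{n}$ by induction once $\chi(G_{d,V})=d$ is in hand. Equivalently, the coordinatewise color tuple induced by a $d$-coloring of $G_{d,V}$ is a proper coloring of $G_{d,V}^{n}$ with $d^{n}$ colors.

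For the lower bound, I would invoke $\chi(G)\ge\chi_f(G)$ together with~(\ref{eq-n-fold-fractional}), so that
\begin{align}
\chi(G_{d,V}^{n})\ \ge\ \chi_f(G_{d,V}^{n})\ =\ \big(\chi_f(G_{d,V})\big)^{n}\ =\ d^{n},
\end{align}
provided $\chi_f(G_{d,V})=d$. A more self-contained route uses the full-connection structure directly: restricting any proper coloring of $G_{d,V}^{n}$ to the $V$ sub-graphs assigns to each vertex of $G_{d,V}$ a color set of size at least $b:=\chi(G_{d,V}^{n-1})$, with disjoint sets on adjacent vertices, i.e., a valid $b$-fold coloring of $G_{d,V}$; hence $\chi(G_{d,V}^{n})\ge\chi_b(G_{d,V})\ge b\,\chi_f(G_{d,V})$ by Definition~\ref{def-fractional-coloring}, again closing the induction.

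The hard part will be the base case $\chi(G_{d,V})=\chi_f(G_{d,V})=d$ for $V=2k$. The upper half $\chi(G_{d,V})\le d$ follows from Brooks-type reasoning on connected regular graphs, but the matching bound $\chi_f(G_{d,V})\ge d$ is the delicate point and is precisely where the regularity and even-$V$ hypothesis must be exploited; it genuinely fails for, e.g., $d$-regular bipartite graphs such as $K_{d,d}$, where $\chi=\chi_f=2<d$. Thus the proposition should be read with the $n=1$ assertion $\chi(G_{d,V})=d$ as a standing hypothesis on the family of $d$-regular characteristic graphs under consideration, which is consistent with the motivating cycles $C_{2k}$ for which $\chi_f(C_{2k})=\chi(C_{2k})=2=d$. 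Once this base case is secured, the two bounds coincide and the claim~(\ref{eq-theo-chrom-d-regular}) follows by induction on $n$.
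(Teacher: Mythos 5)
Your route is genuinely different from the paper's. For the lower bound the paper grows cliques: it implicitly assumes the base graph contains a $K_d$, uses the fact that $S\times\cdots\times S$ is a clique of size $d^n$ in the $n$-fold OR product whenever $S$ is a $d$-clique of $G_{d,V}$, and concludes $\chi(G_{d,V}^n)\ge \omega(G_{d,V}^n)\ge d^n$; for the upper bound it invokes a Brooks-type bound ($\chi(G_{d,V})\le d$ when no $K_{d+1}$ is present) and reuses palettes across sub-graphs much as you do. Your replacement of the clique argument by $\chi(G^n)\ge\chi_f(G^n)=(\chi_f(G))^n$ via (\ref{eq-n-fold-fractional}) is cleaner and more modular, and your $b$-fold-coloring derivation of $\chi(G^n)\ge\chi(G^{n-1})\,\chi_f(G)$ is a nice self-contained alternative. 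Your diagnosis that the statement fails for $K_{d,d}$ is correct, and it applies verbatim to the paper's own proof, which silently requires $\omega(G_{d,V})=d$; this hypothesis does hold for the paper's examples ($C_{2k}$ and the prism $G_{3,6}$) but is not implied by $d$-regularity and even $V$.

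Two points in your write-up need repair. First, in the upper bound you assert that two sub-graphs which are non-adjacent in $G_{d,V}$ ``have no connecting edge''; that is false for OR products: by Definition~\ref{OR power}, the vertices $(l,\mathbf{x})$ and $(l',\mathbf{y})$ are adjacent whenever $\mathbf{x}$ and $\mathbf{y}$ are adjacent in $G^{n-1}$, even if $l$ and $l'$ are not adjacent in $G_{d,V}$. The construction survives only if every sub-graph in a palette class is colored by the \emph{same} proper coloring of $G^{n-1}$ --- which your coordinatewise-tuple reformulation does automatically --- so that reformulation should be the primary argument rather than an afterthought. Second, the standing hypothesis you propose to adopt, $\chi(G_{d,V})=d$, does not close your lower bound: you need $\chi_f(G_{d,V})=d$ (or $\omega(G_{d,V})=d$, as the paper implicitly uses, which together with the Brooks upper bound implies it), since in general $\chi_f$ can be strictly smaller than $\chi$, in which case $(\chi_f(G))^n$ falls short of $d^n$. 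You flag this as ``the delicate point'' but leave it open; as written, your argument proves the proposition only under the stronger assumption $\chi_f(G_{d,V})=\chi(G_{d,V})=d$.
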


%%%%%%%%%%%%%%%%%%%%%%
\begin{proof}
See Appendix~\ref{app:proof-theo-chrom-d-regular}.
\end{proof}
%%%%%%%%%%%%%%%%%%%
%%%%%%%%%%%%%%%%%%%%
%Example For G_{3,6} 
%%%%%%%%%%%%%%%%%%%%%%%
For example, the $3$-regular graph $G_{3,6}$ (see Figure~\ref{fig:3.regularpower}) has $\chi(G_{3,6})=3$. For $n=2$, there are $6$ sub-graphs, namely $\{G_{3,6}^2(l)\}_{l\in[6]}$, where $\chi(G_{3,6}^2(l))=3$ for all $l\in[6]$. Let us choose a set of vertices in $G_{3,6}^2$ belonging to $\{G_{3,6}^2(5), G_{3,6}^2(6), G_{3,6}^2(1)\}$. We observe that this chosen subset is a complete graph, indicating that $\chi({G_{3,6}^2})\geq 9$. Reusing the same colors for the vertices of the remaining sub-graphs ($G_{3,6}^2(2), G_{3,6}^2(3), G_{3,6}^2(4)$), we deduce that $\chi({G_{3,6}^2})=3^2=9$.

 \begin{figure}[htbp]
 \centerline{\includegraphics%[scale=0.17]
 [width=0.36\linewidth,  height= 3.5cm]{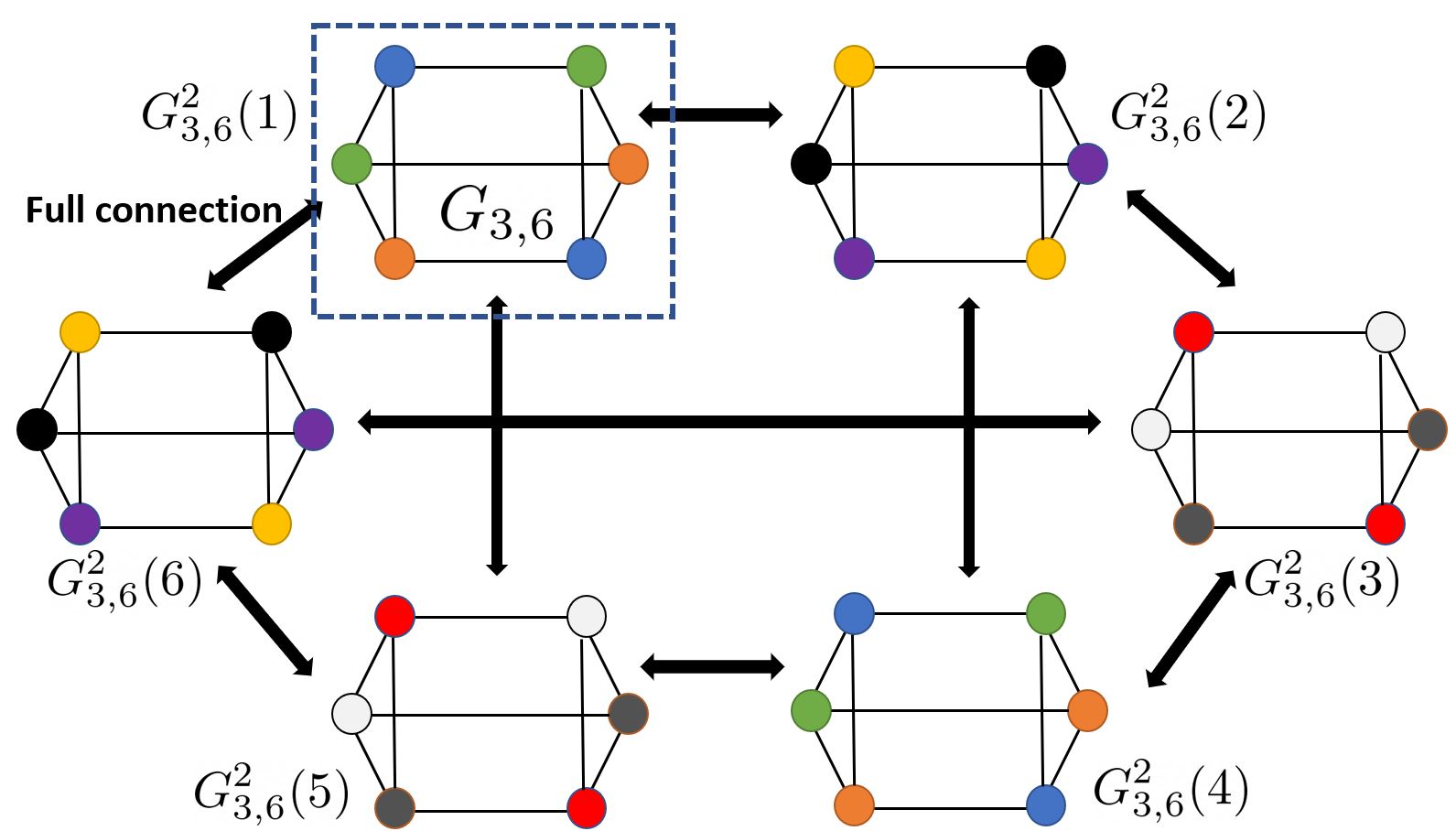}}
 \caption{A $3$-regular graph, $G_{3,6}$, is distinguished by a dashed square and $\chi({G_{3,6}^2})=9$.}
 \label{fig:3.regularpower}
 \end{figure}
%%%%%%%%%%

Next, given $n$-fold OR products of $d$-regular graphs, $G_{d, V}^n$, we examine their expansion properties and present bounds on their \emph{expansion rates}.

%%%%%%%%%%%%%%%%%%%%%%%%%%%%%%%%%%%%%%%%%%%%%%%%%

\subsection{\texorpdfstring{$d$}{}-Regular Graphs and Graph Expansion}
\label{sec:graph-expansion}                          
Graph expansion quantifies how well-connected a graph is by measuring how easily subsets of vertices are connected to the remaining vertices of the graph. Graph expansion has applications in fields such as parallel computation \cite{ajtai1983sorting, arora1990line}, complexity theory, and cryptography \cite{bellare1993randomness, ajtai1987deterministic} due to its strong vertex connectivity and robustness properties%\footnote{The ability to withstand node or edge removal.}
~\cite{kahale1992second}. In expander graphs, for any pair of distinct vertices $u, v\in [V]$, a path from $u$ to $v$ exists (see Definition~\ref{def-path}). We next examine expansion rates for several classes of expander graphs, including $G_{d, V}$, $C_i$, and $K_i$.

Given $G_{d, V}$, the second largest eigenvalue of its adjacency matrix ${\bf A}_f$, namely $\lambda_2({\bf A}_f)$, contributes to the linear expansion of $G_{d, V}$, where the number of edges grows linearly with the total number of vertices. 
Since the $n$-fold OR product of $G_{d, V}$ yields a $\deg(x^n)$-regular graph, it exhibits the connectivity properties of expander graphs. 

%Next, using \cite{tanner1984explicit}, we establish lower and upper bounds on the expansion rate $\expan$ for $G_{d, V}^n$. 

\begin{prop}[A lower bound on expansion rates of $G_{d, V}^n$]
\label{prop-expansion-G_dv}
The expansion rate of $G_{d, V}^n$, given its adjacency matrix ${\bf A}_f^n$, is lower bounded as follows:
\begin{align}
\label{eq:expansion}
\expan(G_{d,V}^n)\ge \frac{(d\cdot\frac{V^n-1}{V-1})^2}{\Lambda^2(G^n_{d,V})+\left((d\cdot\frac{V^n-1}{V-1})^2-\Lambda^2(G^n_{d,V})\right)\cdot{\frac{|Y|}{V^n}}}  \ ,
\end{align}
where 
\begin{align}
\label{lambda_Gn_dV}
\Lambda(G^n_{d,V})=\max \left(\lambda_2({\bf A}_f^n),\  |\lambda_{V^n}({\bf A}_f^n)|\right) \ ,
\end{align}
with $\lambda_{2}({\bf A}_f^n)$ and $\lambda_{V^n}({\bf A}_f^n)$ being the second and the smallest eigenvalues of ${\bf A}_f^n$, respectively.
\end{prop}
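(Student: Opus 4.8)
The plan is to recognize the claimed inequality as an instance of Tanner's spectral vertex-expansion bound, specialized to the regular graph $G_{d,V}^n$, so that the entire argument collapses to a single orthogonal decomposition followed by a Cauchy--Schwarz estimate. First I would fix the three parameters that feed into the bound. By Proposition~\ref{prop-deg-regular} together with Corollary~\ref{cor-total-edge-d-regular}, the graph $G_{d,V}^n$ is regular of degree $D:=d\cdot\frac{V^n-1}{V-1}$ on $N:=V^n$ vertices, and its adjacency matrix ${\bf A}_f^n$ is symmetric with the all-ones vector $\mathbf{1}$ as the Perron eigenvector, so $\lambda_1({\bf A}_f^n)=D$. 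By Perron--Frobenius every other eigenvalue has absolute value at most $D$, and, restricted to the orthogonal complement $\mathbf{1}^{\perp}$, the operator norm of ${\bf A}_f^n$ equals $\max\big(|\lambda_2({\bf A}_f^n)|,|\lambda_{V^n}({\bf A}_f^n)|\big)$, which is exactly $\Lambda(G^n_{d,V})$ from (\ref{lambda_Gn_dV}).

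Second, I would run the core estimate. Let $\mathbf{x}\in\{0,1\}^{N}$ be the indicator vector of $Y$, so that $\|\mathbf{x}\|^2=|Y|$ and $\mathbf{x}\cdot\mathbf{1}=|Y|$, and split it along $\mathbf{1}$ as $\mathbf{x}=\tfrac{|Y|}{N}\mathbf{1}+\mathbf{w}$ with $\mathbf{w}\perp\mathbf{1}$ and $\|\mathbf{w}\|^2=|Y|\big(1-\tfrac{|Y|}{N}\big)$. Applying ${\bf A}_f^n$ and using ${\bf A}_f^n\mathbf{1}=D\mathbf{1}$, ${\bf A}_f^n\mathbf{w}\perp\mathbf{1}$, and $\|{\bf A}_f^n\mathbf{w}\|\le\Lambda(G^n_{d,V})\|\mathbf{w}\|$ gives, by Pythagoras,
\begin{align*}
\|{\bf A}_f^n\mathbf{x}\|^2\le \frac{D^2|Y|^2}{N}+\Lambda^2(G^n_{d,V})\,|Y|\Big(1-\tfrac{|Y|}{N}\Big).
\end{align*}
A vertex $v$ satisfies $({\bf A}_f^n\mathbf{x})_v>0$ precisely when $v$ has a neighbor in $Y$, so the support $S$ of ${\bf A}_f^n\mathbf{x}$ is the set of vertices adjacent to $Y$. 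Cauchy--Schwarz on this support, $({\bf A}_f^n\mathbf{x}\cdot\mathbf{1})^2\le|S|\,\|{\bf A}_f^n\mathbf{x}\|^2$, combined with the identity ${\bf A}_f^n\mathbf{x}\cdot\mathbf{1}=\mathbf{x}\cdot{\bf A}_f^n\mathbf{1}=D|Y|$, yields $|S|\ge D^2|Y|^2/\|{\bf A}_f^n\mathbf{x}\|^2$. Substituting the displayed bound and collecting the denominator terms $\tfrac{D^2|Y|}{N}+\Lambda^2\big(1-\tfrac{|Y|}{N}\big)$ into $\Lambda^2+(D^2-\Lambda^2)\tfrac{|Y|}{N}$ produces, after dividing by $|Y|$, exactly the right-hand side of (\ref{eq:expansion}).

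I expect the main obstacle to be the bookkeeping around the neighborhood, in two respects. First, I must justify that the operator norm of ${\bf A}_f^n$ on $\mathbf{1}^{\perp}$ is genuinely $\Lambda(G^n_{d,V})$; this uses the symmetry of ${\bf A}_f^n$ and the simplicity of the Perron eigenvalue $D$, which holds because $G_{d,V}^n$ is connected whenever $G_{d,V}$ has an edge (the OR product is highly connected, as already reflected by the rapidly growing degree $D$), so that the eigenvalues spanning $\mathbf{1}^{\perp}$ are precisely $\lambda_2,\dots,\lambda_{V^n}$ and their extreme values give (\ref{lambda_Gn_dV}). Second, I must reconcile the support $S$ with Definition~\ref{defini-expansion}, where $N_G(Y)$ in (\ref{eq-graph_expansion_formula}) excludes $Y$: since the estimate controls the full adjacency support $S=\{u:\exists v\in Y,(v,u)\in\mathcal{E}^n\}$, the clean bound (\ref{eq:expansion}) is obtained by reading $\expan(G^n_{d,V})$ as $|S|/|Y|$, and this coincides with $|N_G(Y)|/|Y|$ exactly when $Y$ is an independent set (e.g., a color class, the case of interest for compression). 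Everything else is routine algebra.
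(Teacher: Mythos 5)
Your proposal is correct and rests on the same pillar as the paper's proof, namely Tanner's spectral bound $|N_G(Y)|\ge d^2|Y|/\bigl(\Lambda^2(G)+(d^2-\Lambda^2(G))\tfrac{|Y|}{V}\bigr)$ applied to the regular graph $G_{d,V}^n$ of degree $D=d\cdot\frac{V^n-1}{V-1}$ on $V^n$ vertices; the difference is that the paper simply cites Tanner and substitutes parameters, whereas you re-derive the bound from scratch via the orthogonal decomposition $\mathbf{x}=\tfrac{|Y|}{N}\mathbf{1}+\mathbf{w}$ and Cauchy--Schwarz on the support of ${\bf A}_f^n\mathbf{x}$. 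Your derivation is sound (the Pythagoras step, the identity ${\bf A}_f^n\mathbf{x}\cdot\mathbf{1}=D|Y|$, and the algebraic collection of the denominator into $\Lambda^2+(D^2-\Lambda^2)\tfrac{|Y|}{N}$ all check out), so your version buys self-containedness at the cost of length. The caveat you raise in your last paragraph is genuine and, notably, applies equally to the paper's own proof: Definition~\ref{defini-expansion} defines $N_G(Y)$ to \emph{exclude} $Y$, while the support $S$ controlled by the Cauchy--Schwarz argument (and by Tanner's bound in its usual non-bipartite form) is the full set of vertices adjacent to $Y$, which can intersect $Y$. The paper silently identifies the two; your observation that the stated inequality is only guaranteed for $|S|/|Y|$, or for $|N_G(Y)|/|Y|$ when $Y$ is an independent set (the relevant case for color classes), is a more careful reading than the one the authors give. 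The only minor point you leave implicit is the simplicity of the Perron eigenvalue, which you correctly tie to connectivity of $G_{d,V}^n$; this is needed so that $\mathbf{1}^{\perp}$ is spanned by eigenvectors of $\lambda_2,\dots,\lambda_{V^n}$ and hence that the operator norm on $\mathbf{1}^{\perp}$ equals $\Lambda(G_{d,V}^n)$.
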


\begin{proof}
\label{pr: the expansion} 
%See Appendix~\ref{app:proof-expansion}.
As noted in Section~\ref{sec:model}, for a given $G_{d,V}$ with ${\bf A}_f$, $\lambda_1({\bf A}_f) = d$, and the eigenvalues of ${\bf A}_f^n$ for %the $n$-fold OR product of 
any connected $G_{d,V}^n $ %$d$-regular graph 
are ordered as $\{d \geq \lambda_1 > \lambda_2 > \cdots > \lambda_{V^n}\}$. For any subset $Y$ of $G_{d,V}$, \cite{tanner1984explicit} shows that the size of $|N_G(Y)|$ satisfies the following: 
\begin{align}
\label{eq:exp 2}    
|N_G(Y)|\ge \frac{d^2\cdot|Y|}{\Lambda^2(G)+(d^2-\Lambda^2(G))\cdot{\frac{|Y|}{V}}} \ .
\end{align}

In (\ref{eq:exp 2}), $\Lambda(G)=\max (\lambda_2, |\lambda_V|)\leq d$, equality holds if and only if $G$ is bipartite, or is disconnected (composed of singleton vertices) \cite{kahale1992second}. 
Dividing both sides of (\ref{eq:exp 2}) by $|Y|$, letting $\expan(G^n)=|N_{G^n}(Y)|/{|Y|}$, and taking in the total number of vertices, $V^n$, in (\ref{eq:exp 2}) yields (\ref{eq:expansion}).  
\end{proof}

In Proposition~\ref{prop-expansion-G_dv}, given a graph $G_{d, V}$, $\Lambda(G_{d, V})$ captures its connectivity and structural balance. We infer that as $|Y|$ increases, the expansion rate is small, indicating less connectivity between vertices. Given a connected $G_{d, V}$, as evident from (\ref{eq-graph_expansion_formula}), the maximum expansion rate, denoted by $\expan^{ub}$, is achieved when $G_{d, V}^n$ is a complete graph, $K_i^n$, where $i\in[V^n]$, as follows:
\begin{align}
\label{eq-exapnd-complete-up-bound}
\expan^{ub}\geq\frac{({V^n-1})^2}{1+(({V^n-1})^2-1)\cdot{\frac{|Y|}{V^{n}}}} \ .
\end{align}

The minimum expansion rate, denoted by $\expan^{lb}$, is achieved when $G_{d, V}^n=C_i^n$ because $C_i^n$ has the minimum number of edges to ensure %the connectivity of 
a Hamiltonian path for each $x\in[V]$. This yields
\begin{align}
\label{eq-cycle-expan-low-bound}
\expan^{lb}\geq\frac{(2\cdot\frac{V^n-1}{V-1})^2}{\lambda_{V^n}^2(C_i^n)+((2\cdot\frac{V^n-1}{V-1})^2-\lambda_{V^n}^2(C_i^n))\cdot{\frac{|Y|}{V^n}}} \ .
\end{align}

We next examine the relationship between $\lambda_1({\bf A}_f)$ and $\lambda_2({\bf A}_f)$ for sub-graphs of an expander graph, such as $G_{d,V}$. To this end, we apply~\cite[Lemma~3]{kahale1991better}, which states that for any subset of vertices, $Y \subseteq [V]$ in $G_{d, V}$, the induced sub-graph\footnote{An induced sub-graph is formed by selecting a subset of vertices $\subg \subseteq [V]$ includes all edges in $\mathcal{E}$ whose endpoints lie in $\subg$.} $\subg$ satisfies the following property:            
\begin{align}
 \label{eq:lam1-lam2}
 \lambda_1(\subg)\leq \lambda_2({\bf A}_f)+(d-\lambda_2({\bf A}_f))\cdot({|Y|}/{V}) \ .
\end{align}
For a connected $2$-regular graph ($C_i$), adapting (\ref{eq:lam1-lam2}) and selecting vertex sets $\subg$ and $Y$ such that $Y \cup \subg = C_i^n$, we establish a simplified relation between $\lambda_1({\bf A}_f^{n-1})$ and $\lambda_2({\bf A}_f^{n})$, as follows.
\begin{cor}[$\lambda_1({\bf A}_f^{n-1})$ versus $\lambda_2({\bf A}_f^{n})$ in $2$-regular graphs] 
\label{cor-relation-l1-l2-cycles}
The relationship between $\lambda_2({\bf A}_f^{n})$ of $C_i^n$ and $\lambda_1({\bf A}_f^{n-1})$ of $C_i^{n-1}$ is detailed as follows:
\begin{align}
\label{eq:remark lam1-2}  
\lambda_1({\bf A}_f^{n-1})\leq \lambda_2 ({\bf A}_f^{n})+\Big(2\cdot\frac{V^n-1}{V-1}-\lambda_2 ({\bf A}_f^{n})\Big)\cdot{\frac{1}{V}} \ .
\end{align}
\end{cor}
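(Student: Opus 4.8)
The plan is to invoke the induced-subgraph eigenvalue inequality (\ref{eq:lam1-lam2}) from~\cite{kahale1991better}, applied not to $C_i$ itself but to its $n$-fold OR product $C_i^n$ regarded as the ambient regular graph. By Corollary~\ref{cor-total-edge-d-regular}, $C_i^n$ is regular, and by Proposition~\ref{prop-deg-cycles} each of its $V^n$ vertices has degree $deg({x}^n)=2\cdot\frac{V^n-1}{V-1}$. Instantiating (\ref{eq:lam1-lam2}) with $G=C_i^n$ therefore reads, for any vertex subset $Y\subseteq[V^n]$ with induced subgraph $\subg$,
\begin{align}
\label{eq-plan-kahale-instantiated}
\lambda_1(\subg)\leq \lambda_2({\bf A}_f^n)+\Big(2\cdot\tfrac{V^n-1}{V-1}-\lambda_2({\bf A}_f^n)\Big)\cdot\frac{|Y|}{V^n}\ .
\end{align}

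First I would specialize the subset $Y$ to be the vertex set of a single sub-graph $C_i^n(l)$ for some $l\in[V]$ (Definition~\ref{def-sub-graph}). Since $C_i^n$ partitions into $V$ such sub-graphs, each containing $V^{n-1}$ vertices, this choice gives $|Y|=V^{n-1}$, hence $|Y|/V^n=1/V$, which already matches the coefficient appearing in the target inequality (\ref{eq:remark lam1-2}). Next I would identify the induced subgraph: fixing the first coordinate to $l$ forces any edge inside $Y$ to arise from an adjacency in one of the remaining $n-1$ coordinates---the first coordinate contributes no edge since $C_i$ has no self-loops---which is exactly the adjacency rule of the $(n-1)$-fold OR product in Definition~\ref{OR power}. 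Thus $\subg=C_i^{n-1}$, so that $\lambda_1(\subg)=\lambda_1({\bf A}_f^{n-1})$. Substituting these two facts into (\ref{eq-plan-kahale-instantiated}) yields (\ref{eq:remark lam1-2}) directly.

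The one step requiring care---and the one I would single out as the main obstacle---is verifying that the subgraph induced on a single sub-graph's vertex set is \emph{exactly} $C_i^{n-1}$, rather than a proper supergraph of it. This amounts to checking that fixing one coordinate neither introduces spurious edges (ruled out by the absence of a self-loop at the fixed coordinate) nor deletes any of the $(n-1)$-fold OR edges (which persist, since the OR rule requires adjacency in only one of the free coordinates). Once this identification is secured, the eigenvalue equality $\lambda_1(\subg)=\lambda_1({\bf A}_f^{n-1})$ is immediate, and all that remains is the elementary substitution $|Y|/V^n=1/V$ into the instantiated Kahale bound.
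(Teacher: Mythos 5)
Your proposal is correct and follows essentially the same route as the paper: apply the Kahale induced-subgraph bound (\ref{eq:lam1-lam2}) with the ambient regular graph taken to be $C_i^n$ of degree $2\cdot\frac{V^n-1}{V-1}$, take one sub-graph $C_i^n(l)\cong C_i^{n-1}$ as the subset so that $|Y|/V^n=1/V$, and substitute. Your explicit verification that the induced subgraph on $C_i^n(l)$ is exactly $C_i^{n-1}$ (and your identification of $Y$ as the sub-graph's own vertex set rather than its complement) is in fact a cleaner application of the lemma than the paper's somewhat loosely worded choice of $Y$ and $S$.
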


\begin{proof}
\label{proof-prop-relation-l1-l2-cycles} 
Given the relation between $\lambda_1(S)$ and $\lambda_2({\bf A}_f)$ for $G_{d,V}$ in (\ref{eq:lam1-lam2}), we consider that $G^n_{d,V}=C^n_i$, and we choose sub-graphs $Y$ and $S$ accordingly, as sub-graphs of $C_i^n$, where 
\begin{align}
\label{eq-proof-prop-relation-l1-l2-cycles} 
Y\subseteq \bigcup_{t\neq l ,\ t\in [V]} C_i^n(t)=\{C_{i}^{n}(1), C_{i}^{n}(2), \dots, C_{i}^{n}(V)\} \setminus \{C_i^n(l)\}  \ .
\end{align}
Using (\ref{eq:lam1-lam2}), by selecting $\subg\subset \{C_{i}^{n}(1), C_{i}^{n}(2), \dots,$ $C_{i}^{n}(V)\}$, namely $C_i^n(l)$, for the LHS of (\ref{eq:lam1-lam2}), in which each sub-graph is isomorphic to $C_i^{n-1}$, i.e., $\subg=\big(\bigcup_{t} C_i^n(t)\big)^{\mathsf{c}}$. Consequently, using (\ref{eq-proof-prop-relation-l1-l2-cycles}), $\subg$, and substituting them in (\ref{eq:lam1-lam2}), we reach the statement in (\ref{eq:remark lam1-2}).
\end{proof}

Next, we investigate general functions with arbitrary characteristic graphs. Building on Section~\ref{sec:results}, we present bounds on the chromatic number of their $n$-fold OR products.
 
%%%%%%%%%%%%%%%%%%%%%%%%%%%%%%%%%%%%%%
%%%%%%%%%%%%%%%%%%%%%%%%%%%%%%%%

%%%%%%%%%%%%%%%%%%%%%%%%%%%%%%%%
\section{Bounds For General Characteristic Graphs}
\label{sec:sub-general-graph}

Here, we focus on general characteristic graphs $G(\mathcal{V},\mathcal{E})$, their $n$-fold OR products, $G^n$. Given a graph $G$ with an adjacency matrix ${\bf A}_f$, we first evaluate $deg(x_k)$ for $x_k\in [V]$, and derive bounds on $\lambda_1({\bf A}_f)$ and $\chi({G})$ in Section~\ref{sec:degrees_chrom_no_general_graphs}. We also present bounds on expansion rates and establish lower and upper bounds on entropies of $G^n$ in Sections~\ref{sec:expansion_general_graphs} and~\ref{sec:entropy_general_graphs}, respectively. Finally, for the $n$-fold OR product $G^n$, we introduce an approach that decomposes ${\bf A}_f^n$ to two symmetric block matrices and leverages GCT to investigate the spectrum of $G^n$ in Section~\ref{sec:gresh-bound}.

%%%%
\subsection{Degrees and Chromatic Numbers of General Graphs}
\label{sec:degrees_chrom_no_general_graphs}

Given a general graph $G$, we next derive a recursive relation for $deg(x_k^n)$ for $x_k^n\in[V^n]$, where $deg(x^n_k)$ may vary across vertices, providing a generalization of Propositions~\ref{prop-deg-cycles}~and~\ref{prop-deg-regular}.

\begin{cor}[Degrees of vertices in $G^n$]
\label{cor-degree-or-power-general-graphs}
Given a general graph $G(\mathcal{V}, \mathcal{E})$, the degrees of vertices of $G^n$ are calculated as follows:
\begin{align}
\label{eq-theo-degree-or-power-general-graphs}
deg({x}_k^n)=deg({x}_k)+\sum_{j=1}^{n-1}deg({x}_k) \cdot V^j \ , \quad \forall x_k\in [V^n] \ . 
\end{align}
\end{cor}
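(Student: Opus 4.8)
The plan is to proceed by induction on $n$, exploiting the recursive block structure of the adjacency matrix $\mathbf{A}_f^n$ in~\eqref{eq-AdjacencyMatrix} together with the full-connection property of Definition~\ref{def-full-conec}. This mirrors the arguments behind Propositions~\ref{prop-deg-cycles} and~\ref{prop-deg-regular}, the only new ingredient being that the number of fully connected neighboring sub-graphs is now vertex-dependent rather than the constant $d$. Concretely, I would track the recursion $deg(x_k^n)=deg(x_k^{n-1})+deg(x_k)\cdot V^{n-1}$ and unroll it.

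For the base case $n=1$ the claim reduces to $deg(x_k^1)=deg(x_k)$, which holds by definition. For the inductive step I would invoke Definition~\ref{def-sub-graph}: $G^n$ decomposes into $V$ sub-graphs $\{G^n(l)\}_{l\in[V]}$, each isomorphic to $G^{n-1}$ and indexed by the coordinate appended at level $n$. A vertex $x_k^n$ lies inside one such sub-graph and retains all of its within-sub-graph neighbors, of which there are $deg(x_k^{n-1})$ by the induction hypothesis. The remaining neighbors arise from edges to the other sub-graphs, and these must be counted.

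The key step is this cross-sub-graph count. By the generalization of~\eqref{eq-AdjacencyMatrix} to an arbitrary base graph, the off-diagonal block in position $(l,t)$ equals the all-ones matrix $\mathbf{J}_{V^{n-1}}$ precisely when $(x_l,x_t)\in\mathcal{E}$, and the all-zero matrix $\mathbf{Z}_{V^{n-1}}$ otherwise. By Definition~\ref{def-full-conec}, a $\mathbf{J}_{V^{n-1}}$ block means $x_k^n$ is adjacent to every one of the $V^{n-1}$ vertices of that neighboring sub-graph. Since the outer coordinate $x_k$ has exactly $deg(x_k)$ neighbors in $G$, there are exactly $deg(x_k)$ such fully connected sub-graphs, contributing $deg(x_k)\cdot V^{n-1}$ new neighbors. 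Summing the two contributions gives the recursion above, and telescoping from the base case yields~\eqref{eq-theo-degree-or-power-general-graphs}.

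The main obstacle is the bookkeeping of vertex dependence: unlike the $d$-regular case, the count of fully connected sub-graphs varies with $x_k$, so one must carry $deg(x_k)$ through every level of the recursion rather than a single constant. The simplification that keeps the recursion closed is precisely the full-connection property—once two sub-graphs are joined by a $\mathbf{J}_{V^{n-1}}$ block, the number of induced cross edges depends only on the sub-graph size $V^{n-1}$ and not on the internal adjacency of $G^{n-1}$—which is exactly why the per-level increment factors cleanly as $deg(x_k)\cdot V^{n-1}$ and the final expression depends on $G$ only through $deg(x_k)$.
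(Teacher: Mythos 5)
Your proposal is correct relative to the paper's framework and takes essentially the same route as the paper: the published proof of Corollary~\ref{cor-degree-or-power-general-graphs} is a terser version of exactly your argument, pointing back to the sub-graph decompositions of Propositions~\ref{prop-deg-cycles} and~\ref{prop-deg-regular} and noting that each vertex $x_k^2(l)$ connects to $deg(x_k)$ fully connected adjacent sub-graphs (each contributing $V^{n-1}$ neighbors) on top of its within-sub-graph degree, then iterates. Your explicit induction with the recursion $deg(x_k^n)=deg(x_k^{n-1})+deg(x_k)\cdot V^{n-1}$ and its telescoping is just a more carefully written rendering of the same bookkeeping.
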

\begin{proof}
Similarly to Propositions~\ref{prop-deg-cycles} and~\ref{prop-deg-regular}, we can compute $deg(x_k)$ for $x_k \in [V]$, with the distinction that each $x_k$ may have a different degree. For the $2$-fold OR product, each vertex, ${x}_k^2(l)$ for $l\in V$, connects to $deg(x_k)$ adjacent sub-graphs. Since neighboring nodes differ across $x_k$, we iteratively compute the degrees of $x_k^n$ separately for each $x_k$.
\end{proof}
%%%%%%%%%%%%%%%%%%%%%%%%%%%%%%%%%%%%%%

Corollary~\ref{cor-degree-or-power-general-graphs} immediately implies that if vertices ${x}_t,\: {x}_k\in [V]$ for $t\neq k$ have equal degrees in $G^1$, i.e., $deg({x}_t)=\:deg({x}_k)$, then $deg({x}_t^n)=deg({x}_k^n)$ in the $n$-fold OR product, $G^n$, $n\geq 2$.  

%%%%%%%%%%%%%%%%%%%%%%%%%%%%%%%%%%%%%%%%%%

Given a general graph $G$, we next devise lower and upper bounds on $\chi({G^n})$. To that end, we exploit the block matrix representation of ${\bf A}_f^n$ (see (\ref{eq-AdjacencyMatrix})) and use the maximum number of sub-matrices ${\bf J}_{V^{n-1}}$ in the rows of ${\bf A}_f^n$.

\begin{cor}[Bounds on $\chi({G^n})$ for a general $G$]
\label{cor-general_graph-approximation}
Given a general characteristic graph $G(\mathcal{V}, \mathcal{E})$, the chromatic number of $G^n$, $\chi({G^n})$, is bounded as follows:
\begin{align}
\label{eq-prop-general_graph-approximation}
1-\frac{\lambda_1({\bf A}_f)+d_{\max}\cdot\sum_{j=1}^{n-1}V^j} {\lambda_{V^n}({\bf A}_f^n)} \leq \chi({G^{n}}) \leq \left\lfloor {\lambda_1({\bf A}_f)+d_{\max}\cdot\sum_{j=1}^{n-1}V^j} \right\rfloor  \ ,
\end{align}
\end{cor}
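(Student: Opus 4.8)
The plan is to apply the spectral bounds (\ref{eq-chro_up-low_eig})---Hoffman's lower bound and Wilf's upper bound, both of which hold for an arbitrary graph and not only for cycles---directly to the product $G^n$. Instantiating them for $G^n$ reduces the whole statement to producing the closed-form estimate $\lambda_1({\bf A}_f^n)\le \lambda_1({\bf A}_f)+d_{\max}\sum_{j=1}^{n-1}V^j$, which I abbreviate by $B$, and then substituting $B$ for $\lambda_1({\bf A}_f^n)$ on both sides; the denominator $\lambda_{V^n}({\bf A}_f^n)$ is kept as is (or, if a fully explicit form is wanted, lower-bounded through (\ref{birgham-low-bound-small-eigen})--(\ref{hong-low-bound-small-eigen})).

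First I would record the block pattern of ${\bf A}_f^n$ for a general $G$, the natural extension of (\ref{eq-AdjacencyMatrix}): the diagonal blocks are ${\bf A}_f^{n-1}$, and the block in position $(k,t)$ is ${\bf J}_{V^{n-1}}$ exactly when $(x_k,x_t)\in\mathcal{E}$ and ${\bf Z}_{V^{n-1}}$ otherwise, so block row $k$ contains $deg(x_k)\le d_{\max}$ off-diagonal ${\bf J}_{V^{n-1}}$ blocks. Applying the block Gershgorin theorem (Definition~\ref{def-gresh-block}), every eigenvalue of ${\bf A}_f^n$ lies in a disc centered at an eigenvalue $\lambda_l({\bf A}_f^{n-1})$ of a diagonal block, with radius $\sum_{t\ne k}|{\bf A}_{kt}|$. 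Since each nonzero off-diagonal block is ${\bf J}_{V^{n-1}}$, whose largest eigenvalue is $V^{n-1}$ by Lemma~\ref{eigenvalue_dependency}, and at most $d_{\max}$ of them occur per row, this radius is at most $d_{\max}V^{n-1}$. The Perron root $\lambda_1({\bf A}_f^n)>0$ lies in the rightmost disc, which gives $\lambda_1({\bf A}_f^n)\le \lambda_1({\bf A}_f^{n-1})+d_{\max}V^{n-1}$; unrolling from $\lambda_1({\bf A}_f^1)=\lambda_1({\bf A}_f)$ produces the claimed estimate $B$.

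With $B$ in hand, Wilf's side of (\ref{eq-chro_up-low_eig}) applied to $G^n$ gives $\chi(G^n)\le \lfloor\lambda_1({\bf A}_f^n)\rfloor+1\le \lfloor B\rfloor+1$, whose leading floor term is the right-hand side of (\ref{eq-prop-general_graph-approximation}) (the additive unit being suppressed in the approximate statement), while Hoffman's side, together with $\lambda_{V^n}({\bf A}_f^n)<0$, yields the lower bound $1-B/\lambda_{V^n}({\bf A}_f^n)$.

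The principal obstacle is the direction mismatch in the lower bound: because $B$ over-estimates the true $\lambda_1({\bf A}_f^n)$ and $\lambda_{V^n}({\bf A}_f^n)$ is negative, the quantity $1-B/\lambda_{V^n}({\bf A}_f^n)$ is \emph{larger} than Hoffman's genuine bound $1-\lambda_1({\bf A}_f^n)/\lambda_{V^n}({\bf A}_f^n)$, so it is a rigorous lower bound on $\chi(G^n)$ only when $B=\lambda_1({\bf A}_f^n)$. This equality holds exactly in the regular case, where $\lambda_1({\bf A}_f)=d_{\max}=d$ and $G^n$ is $deg(x^n)$-regular by Corollary~\ref{cor-total-edge-d-regular}, and is otherwise an approximation---consistent with how the statement is labeled. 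The remaining care lies entirely in the block-Gershgorin step: verifying that the appropriate norm of ${\bf J}_{V^{n-1}}$ entering the disc radius is its spectral radius $V^{n-1}$, and that bounding $deg(x_k)$ uniformly by $d_{\max}$ (rather than tracking the exact per-vertex degree of Corollary~\ref{cor-degree-or-power-general-graphs}) is precisely what yields the uniform increment $d_{\max}V^{n-1}$ at each level of the recursion.
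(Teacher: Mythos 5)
Your proposal follows essentially the same route as the paper: instantiate the spectral bounds (\ref{eq-chro_up-low_eig}) on $G^n$ and replace $\lambda_1({\bf A}_f^n)$ by the closed-form estimate $B=\lambda_1({\bf A}_f)+d_{\max}\sum_{j=1}^{n-1}V^j$ extracted from the block structure of ${\bf A}_f^n$; the paper's own proof argues this step informally (``the row with the most ${\bf J}_V$ matrices provides an upper bound for the largest eigenvalue''), whereas your block-Gershgorin recursion $\lambda_1({\bf A}_f^n)\le\lambda_1({\bf A}_f^{n-1})+d_{\max}V^{n-1}$, unrolled from $\lambda_1({\bf A}_f)$, makes it precise, so your version is if anything the more rigorous of the two. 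The two caveats you flag are genuine and are not resolved by the paper's proof either: since $B\ge\lambda_1({\bf A}_f^n)$ and $\lambda_{V^n}({\bf A}_f^n)<0$, the quantity $1-B/\lambda_{V^n}({\bf A}_f^n)$ \emph{dominates} Hoffman's bound $1-\lambda_1({\bf A}_f^n)/\lambda_{V^n}({\bf A}_f^n)$ and is a valid lower bound on $\chi(G^n)$ only when $B$ coincides with $\lambda_1({\bf A}_f^n)$ (e.g., the regular case), and likewise Wilf's bound only delivers $\lfloor B\rfloor+1$, not the stated $\lfloor B\rfloor$; the paper substitutes the approximation into (\ref{eq-chro_up-low_eig}) without addressing either point, so your explicit identification of where the argument is exact versus approximate is a strength of your write-up rather than a gap in it.
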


\begin{proof}
%See Appendix~\ref{app:proof-prop-general_graph-approximation}.
The non-sparsity of an adjacency matrix, i.e., more $1$s in its entries, is directly related to its largest eigenvalue. The full connection between sub-graphs in the OR product is represented by ${\bf J}_V$, with $\lambda_1({\bf J}_V)=V$. Thus, the row with the most ${\bf J}_V$ matrices provides an upper bound for the largest eigenvalue. Additionally, the average number of ${\bf J}_{V^{n-1}}$ matrices across block rows of ${\bf A}_f^n$, multiplied by $V^{n-1}$ (which represents $\lambda_1({\bf J}_{V^{n-1}})$), approximates $\lambda_1({\bf A}_f^n)$. By modifying (\ref{eq-chro_up-low_eig}) and using $d_{\max}$ of $G$ to approximate $\lambda_1({\bf A}_f^n)$, we can establish a bound for $\chi({G^n})$.
\end{proof}

Corollary~\ref{cor-general_graph-approximation} refines the bounds in (\ref{eq-chro_up-low_eig}), Proposition~\ref{prop-eig-bound-chro-cycle} and   Corollary~\ref{cor_bound_choromatic_degree} by leveraging the exact value of $\lambda_1({\bf A}_f^n)$ and accounting for the specific structure of $G$. Given Corollary~\ref{cor-general_graph-approximation}, let us investigate the computational complexity for determining the eigenvalues $\lambda_k({\bf A}_f^n)$ and contrast it with the QR method\footnote{The QR transformation iteratively decomposes a matrix ${\bf A}_{(t)}$ where $t\in \mathbb{Z}^+$ denotes the iteration index, into an orthogonal matrix ${\bf Q}_{(t)}$ and an upper triangular matrix ${\bf R}_{(t)}$, satisfying ${\bf A}_{(t)} = {\bf Q}_{(t)} {\bf R}_{(t)}$. Then, the next iteration is given by ${\bf A}_{(t+1)} = {\bf R}_{(t)} {\bf Q}_{(t)}$. Under typical conditions (e.g., ${\bf A}$ is diagonalizable with distinct eigenvalues), it converges to an upper triangular matrix whose diagonal approximates $\lambda_k({\bf A})$. For an $m \times m$ matrix, QR requires a computational complexity of $O(m^3)$~\cite{watkins1982understanding}.} for calculating $\lambda_k({\bf A}_f^n)$, with a computation complexity of $O(V^{3n})$. However, using Corollary~\ref{cor-general_graph-approximation}, the overall complexity remains at $O(V^3)$. This is because eigenvalue computation for $G$ has a complexity of $O(V^3)$, the summations in RHS and LHS of (\ref{eq-prop-general_graph-approximation}) has $O(n)$, and the maximization step (determining $d_{\max}$) has $O(V)$. Hence, the dominant term, i.e., $O(V^3)$, dominates the final complexity.

Next, given a general $G(\mathcal{V},\mathcal{E})$ with an adjacency matrix ${\bf A}_f$, we derive lower and upper bounds on $\lambda_1({\bf A}_f^n)$ using Lemma~\ref{eigenvalue_dependency}, where the lower and upper bounds are functions of the minimum and maximum values of $deg(x_k)$ for $x_k\in[V]$. 

\begin{cor}[Bounds on $\lambda_1({\bf A}_f^n)$ for a general $G$]
\label{Cor-bound-chor-using-deg-max-min}
Given a general graph $G$, the largest eigenvalue for the adjacency matrix of $G^n$, denoted by $\lambda_1({\bf A}_f^n)$, is bounded as follows:
\begin{align}
 \label{eq-note-matrix-j-z}
 \min_{k \in [V]} (deg(x_k)) \cdot \lambda_1({\bf J}_{V^{n-1}}) \leq \lambda_1({\bf A}_f^n) \leq \max_{k \in [V]} (deg(x_k)) \cdot \lambda_1({\bf J}_{V^{n-1}}) \ ,
\end{align}
where we infer that
\begin{align}
\label{eq-Corr-matrix-j-z-avrage}
\lambda_1({\bf A}_f^n) \approx d_{\rm avg}(x_k) \cdot \lambda_1({\bf J}_{V^{n-1}}) \ . 
\end{align}
\end{cor}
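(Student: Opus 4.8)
The plan is to exploit the recursive block decomposition of ${\bf A}_f^n$ in (\ref{eq-AdjacencyMatrix}), generalized to an arbitrary $G$, together with Lemma~\ref{eigenvalue_dependency}. First I would observe that, for a general characteristic graph, the block row of ${\bf A}_f^n$ indexed by a vertex $x_k\in[V]$ (the last-coordinate sub-graph) contains exactly $deg(x_k)$ off-diagonal all-ones blocks ${\bf J}_{V^{n-1}}$---one per neighbor of $x_k$ in $G$---together with a single diagonal block ${\bf A}_f^{n-1}$, all remaining blocks being ${\bf Z}_{V^{n-1}}$. By Lemma~\ref{eigenvalue_dependency}, each ${\bf J}_{V^{n-1}}$ has spectral radius $\lambda_1({\bf J}_{V^{n-1}})=V^{n-1}$, so the off-diagonal ``mass'' contributed by block row $k$ is $deg(x_k)\cdot V^{n-1}$, and the block count per row is controlled by the degree sequence of $G$.

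Next I would sandwich $\lambda_1({\bf A}_f^n)$ between the minimal and maximal off-diagonal block-row masses. Since ${\bf A}_f^n$ is a nonnegative symmetric matrix, its Perron--Frobenius eigenvalue lies between the smallest and largest (block-)row weights; equivalently, I can invoke the block GCT of Definition~\ref{def-gresh-block}, whose discs are centered at the eigenvalues $\lambda_l({\bf A}_f^{n-1})$ of the diagonal blocks and have radius $\sum_{t\neq k}|{\bf A}_{kt}|=deg(x_k)\cdot V^{n-1}$. The row carrying the fewest ${\bf J}_{V^{n-1}}$ blocks yields the lower bound $\min_{k}deg(x_k)\cdot V^{n-1}$ and the row carrying the most yields the upper bound $\max_{k}deg(x_k)\cdot V^{n-1}$, giving (\ref{eq-note-matrix-j-z}). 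Replacing the extremal block counts by their mean over the $V$ block rows substitutes $d_{\rm avg}(x_k)$ for $\min_k$/$\max_k$ and produces the approximation (\ref{eq-Corr-matrix-j-z-avrage}).

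The main obstacle is that the GCT discs are not centered at the origin but at the eigenvalues of the diagonal blocks ${\bf A}_f^{n-1}$, so a naive application picks up the lower-order contribution of ${\bf A}_f^{n-1}$, which by Corollary~\ref{cor-degree-or-power-general-graphs} scales as $deg(x_k)\cdot(V^{n-2}+\cdots+1)$. I would argue that this diagonal contribution is subdominant relative to the $V^{n-1}$ term carried by the ${\bf J}_{V^{n-1}}$ blocks: the rigorous nonnegative row-sum bound combined with Corollary~\ref{cor-degree-or-power-general-graphs} gives the exact sandwich $\min_{k}deg(x_k)\cdot\frac{V^n-1}{V-1}\le\lambda_1({\bf A}_f^n)\le\max_{k}deg(x_k)\cdot\frac{V^n-1}{V-1}$, and the cleaner $V^{n-1}$ form retains only the leading power of $V$. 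Since the ratio $\frac{V^{n-1}(V-1)}{V^n-1}\to 1$ as $V\to\infty$, the stated bounds capture the dominant behavior and become asymptotically tight in $V$, which is the sense consistent with the approximation symbol in (\ref{eq-Corr-matrix-j-z-avrage}); isolating this leading power and absorbing the diagonal-block correction is the delicate step I expect to require the most care.
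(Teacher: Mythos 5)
Your route is essentially the one the paper intends: the corollary is stated without a proof, and the surrounding text only gestures at Lemma~\ref{eigenvalue_dependency} together with the block structure of ${\bf A}_f^n$, which is exactly what you formalize by counting the ${\bf J}_{V^{n-1}}$ blocks per block row and invoking the Perron--Frobenius row-sum sandwich (equivalently, block GCT). Your counting is correct, and more importantly you have put your finger on a real defect that lies in the statement rather than in your argument. The exact block-row sums of ${\bf A}_f^n$ are $deg(x_k)\cdot\frac{V^n-1}{V-1}$ by Corollary~\ref{cor-degree-or-power-general-graphs}, so the rigorous sandwich is $\min_{k}deg(x_k)\cdot\frac{V^n-1}{V-1}\le\lambda_1({\bf A}_f^n)\le\max_{k}deg(x_k)\cdot\frac{V^n-1}{V-1}$. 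The stated lower bound with $\lambda_1({\bf J}_{V^{n-1}})=V^{n-1}$ then follows a fortiori, since $V^{n-1}\le\frac{V^n-1}{V-1}$; but the stated upper bound does not follow, because it discards the diagonal-block contribution $deg(x_k)\cdot\frac{V^{n-1}-1}{V-1}$, which is of the same order in $n$. Indeed, for cycles (a special case of a general $G$) Proposition~\ref{prop-largest-eigenvalue} gives $\lambda_1({\bf A}_f^n)=2\cdot\frac{V^n-1}{V-1}>2V^{n-1}=\max_k deg(x_k)\cdot\lambda_1({\bf J}_{V^{n-1}})$, so (\ref{eq-note-matrix-j-z}) cannot hold as an exact upper bound. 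The ``delicate step'' you flag therefore cannot be closed as stated: the honest conclusion is your corrected sandwich with the factor $\frac{V^n-1}{V-1}$, with the $V^{n-1}$ form valid only as a leading-order approximation in $V$ --- which is consistent with the spirit of (\ref{eq-Corr-matrix-j-z-avrage}) but not with the inequality claimed in (\ref{eq-note-matrix-j-z}). I would present your exact sandwich as the result and relegate the $V^{n-1}$ expression to the approximation.
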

%%%%%%%%%%%%%%%%
%%%%%%%%%%%%%%%%%%%%%%%%%%%%%%%%%%%%%%%%%

\subsection{Bounds on Expansion Rates of General Graphs}
\label{sec:expansion_general_graphs}
Here, given a general graph $G$, we investigate the expansion of $G^n$. We exploit (\ref{eq-exapnd-complete-up-bound}), derived from the characteristics of the $n$-fold OR products of complete graphs, to obtain the upper bound, $\expan^{ub}$, and (\ref{eq-cycle-expan-low-bound}), derived from the $n$-fold OR products of cycles, to obtain the lower bound, i.e., $\expan^{lb}$. We next establish lower and upper bounds on $\expan(G^n)$.

\begin{cor}[Bounds on $\expan(G^n)$]
\label{cor-expansion_upper_bound_and _lower_bound}
The expansion rate for the $n$-fold OR product of a general characteristic graph $G(\mathcal{V}, \mathcal{E})$ is lower and upper bounded as follows:
\begin{align}
\label{eq_expansion_upper_bound_and _lower_bound} 
\expan^{lb}\leq \expan(G^n) \leq \expan^{ub} \ ,
\end{align}
where $\expan^{ub}$, derived from $K_i^n$ (representing a fully connected characteristic graph), 
and $\expan^{lb}$, from $C_i^n$ (representing a connected graph with the minimum number of edges), for $i=V^n$.
\end{cor}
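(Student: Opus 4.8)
The plan is to sandwich $\expan(G^n)$ between the two extremal topologies on $V^n$ vertices---the complete graph $K_{V^n}$ and the cycle $C_{V^n}$---and to invoke the Tanner-type bounds already established in~(\ref{eq-exapnd-complete-up-bound}) and~(\ref{eq-cycle-expan-low-bound}). The single structural fact driving both inequalities is the \emph{monotonicity of the neighborhood under edge addition}: for a fixed vertex set $[V^n]$ and a fixed subset $Y\subseteq[V^n]$, if $\mathcal{E}_1\subseteq\mathcal{E}_2$ then $N_{G_1}(Y)\subseteq N_{G_2}(Y)$, so by Definition~\ref{defini-expansion} the map $G\mapsto\expan(G)$ is non-decreasing in the edge set for every fixed $Y$.

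First I would treat the upper bound. Any $n$-fold OR product $G^n$ has vertex set $\mathcal{V}^n$ of size $V^n$, and is therefore an edge-subgraph of the complete graph $K_{V^n}$ on the same vertices. By the monotonicity principle, $\expan(G^n)\leq\expan(K_{V^n})$, and $\expan(K_{V^n})$ is exactly $\expan^{ub}$ as computed in~(\ref{eq-exapnd-complete-up-bound}) by substituting $d=V^n-1$ and $\Lambda=1$ (the nontrivial eigenvalues of $K_{V^n}$ all equal $-1$) into the Tanner bound~(\ref{eq:exp 2}). This direction is essentially immediate, since full connectivity maximizes $|N_G(Y)|$ for every $Y$.

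Next I would handle the lower bound. Here I would exploit that a useful characteristic graph $G$ is connected and hence carries a Hamiltonian-type spanning structure, which the OR product preserves and amplifies (cf.\ Corollary~\ref{cor-total-edge-d-regular}, where OR products of regular graphs stay regular). The cycle $C_{V^n}$ realizes the sparsest connected topology on $V^n$ vertices consistent with a Hamiltonian path, so that its neighborhoods are dominated by those of $G^n$, yielding $\expan^{lb}=\expan(C_{V^n})\leq\expan(G^n)$, with $\expan^{lb}$ given explicitly by~(\ref{eq-cycle-expan-low-bound}) via $d=2$ and the smallest eigenvalue $\lambda_{V^n}(C_i^n)$. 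Combining the two directions gives~(\ref{eq_expansion_upper_bound_and _lower_bound}).

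The hard part will be the lower bound, because the cycle does \emph{not} minimize expansion over all connected graphs---a path, for instance, has a pendant vertex whose single neighbor gives expansion strictly below that of a cycle vertex. The rigorous justification must therefore lean on the special combinatorics of OR products rather than on a naive edge-count argument: one has to show that $G^n$ dominates $C_{V^n}$ in the relevant neighborhood sense, e.g.\ that every $G^n$ with $G$ connected contains the cyclic spanning structure of $C_{V^n}$, or equivalently that the spectral quantity $\Lambda(G^n)$ appearing in Proposition~\ref{prop-expansion-G_dv} is no larger than $\Lambda(C_{V^n})$. Making this domination precise---and confirming that the Tanner lower bound for the cycle underestimates $\expan(G^n)$ \emph{uniformly} in $|Y|$---is the crux of the argument.
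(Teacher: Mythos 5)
Your route is the same one the paper takes: evaluate the Tanner-type bound (\ref{eq:exp 2}) at the two extremal topologies on $V^n$ vertices, obtaining (\ref{eq-exapnd-complete-up-bound}) for $K_{V^n}$ and (\ref{eq-cycle-expan-low-bound}) for $C_{V^n}$, and declare these to be $\expan^{ub}$ and $\expan^{lb}$. For the upper bound your edge-monotonicity observation (for fixed $Y$, $\mathcal{E}_1\subseteq\mathcal{E}_2$ implies $N_{G_1}(Y)\subseteq N_{G_2}(Y)$, hence $\expan(G^n)\leq\expan(K_{V^n})$) is a cleaner justification than the paper supplies; the paper simply notes that the spectrum of $K_i$ gives $\Lambda(K_i^n)=\max(\lambda_2({\bf A}_f^n),|\lambda_{V^n}({\bf A}_f^n)|)=1$ and substitutes into (\ref{eq:exp 2}). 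One caveat even here: the quantity in (\ref{eq-exapnd-complete-up-bound}) is itself only a \emph{lower} bound on the expansion of $K_{V^n}$, so "$\expan(G^n)\leq\expan(K_{V^n})$" does not by itself place $\expan(G^n)$ below that expression; this looseness is inherited from how the paper defines $\expan^{ub}$.

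The genuine gap is exactly where you locate it, and you do not close it. Your observation that the cycle does not minimize expansion over all connected graphs (a pendant vertex of a path already does worse) defeats any naive "fewest edges" argument, so the lower bound needs either (i) a proof that $C_{V^n}$ is a spanning subgraph of $G^n$ (a Hamiltonian cycle in $G^n$), so that edge-monotonicity applies in the other direction, or (ii) a proof that $\Lambda(G^n)\leq\Lambda(C_{V^n})$ together with monotonicity of the Tanner expression in $\Lambda$, uniformly in $|Y|$. Neither is carried out in your proposal, and neither is carried out in the paper: the paper's own proof of this corollary merely argues via $trace({\bf A}_f^n)=0$ and the regularity of $C_i^n$ that $\Lambda(C_i^n)=|\lambda_{V^n}({\bf A}_f^n)|$, substitutes this into the denominator of (\ref{eq:expansion}), and calls the result $\expan^{lb}$, without ever relating that quantity to $\expan(G^n)$ for a general $G$. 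So your submission matches the paper's approach, improves the justification of the upper bound, and honestly flags---but leaves open---the same domination step that the paper's proof also omits; as written, the lower half of (\ref{eq_expansion_upper_bound_and _lower_bound}) remains unproved.
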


\begin{proof}
See Appendix~\ref{app:proof_expansion_upper_bound_and_lower_bound}.
\end{proof}

Recall that the lower bound for $\expan(G_{d,V}^n)$ in (\ref{eq:expansion}) is given in terms of $\Lambda(G^n)=\max(\lambda_2({\bf A}_f^n),$ $ |\lambda_{V^n}({\bf A}_f^n)|)$. Whereas in Corollary~\ref{cor-expansion_upper_bound_and _lower_bound}, we use exact values of $\Lambda(K_i^n)$ and $\Lambda(C_i^n)$ for upper and lower bounds, respectively, with ${\bf A}_f^n$ denoting each graph’s adjacency matrix. Given $G^n$, $\expan (G^n)$ reflects its connectivity, with higher values leading to limited savings in source compression.

%%%%%%%%%%%%%%%%%%%%%%%%%%%%%%%%%%%%%%%%%%%%%%%%%%%%
%%%%%%%%%%%%%%%%%%%%%%%%%%%%%%%%%%%%%%%%%%%%%%%%%%%%%
\subsection{Bounds on Entropies of General Graphs}
\label{sec:entropy_general_graphs}
Here, we derive upper and lower bounds on the graph entropies for general characteristic graphs. For the upper bound, we use a similar achievability approach as in the case of $C_i^n$ (see Proposition~\ref{prop-upper_entrop-cycle}), which relies on coloring the MISs of sub-graphs of $G^n$, i.e. $G^j$ for $j\in [n]$. For the lower bound, we employ fractional coloring applied to the $n$-fold OR products of general graphs to establish a bound on $H_G(X_1)$.  
We next derive an upper bound on $H_G(X_1)$.

\begin{prop}[An upper bound on $H_{G}({X}_1)$]
\label{prop_entropy_odd_cycles_MIS}
Given a characteristic graph $G(\mathcal{V},\mathcal{E})$, the entropy of $G^n$ is upper bounded as follows:
\begin{align}
\label{eq_Prop_MIS_based_entropy_general_graph}
H_{G}({X}_1) \leq \frac{1}{n} H\Big(\alpha_{n}\cdot \Big(\frac{|\text{MIS}_{G^n}|}{V^n}\Big), \alpha_{n-1}\cdot \Big(\frac{|\text{MIS}_{G^{(n-1})}|}{V^{n}}\Big), \cdots, \alpha_0\cdot\frac{1}{V^n}\Big) \ ,
\end{align}
where $\alpha_t \ , \ t\in [n]\cup \{0\}$ represents the number of MISs of $G^t$ with a cardinality of $|\text{MIS}_{G^{t}}|$.
\end{prop}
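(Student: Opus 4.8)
The plan is to mirror the achievability argument used for odd cycles in Proposition~\ref{prop-upper_entrop-cycle}, now applied to an arbitrary characteristic graph $G$. The starting point is the general-graph version of the identity relating the characteristic graph entropy to the chromatic entropy, namely $H_G(X_1)=\lim_{n\to\infty}\frac{1}{n}H^\chi_{G^n}(\mathbf{X}_1)$ with $H^\chi_{G^n}(\mathbf{X}_1)=\min_{\mathcal{C}_{G^n}}H(\mathcal{C}_{G^n})$, as in (\ref{eq-chor-ent}) and (\ref{eq-chorm_korner-cycle}) and \cite{AO96}. Since $H^\chi_{G^n}$ is subadditive in $n$, Fekete's lemma gives $H_G(X_1)=\inf_n \frac{1}{n}H^\chi_{G^n}(\mathbf{X}_1)$, so for every fixed $n$ we have $H_G(X_1)\leq \frac{1}{n}H^\chi_{G^n}(\mathbf{X}_1)\leq \frac{1}{n}H(\mathcal{C}^\star)$ for \emph{any} valid coloring $\mathcal{C}^\star$ of $G^n$. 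It therefore suffices to exhibit one coloring $\mathcal{C}^\star$ whose entropy equals the right-hand side of (\ref{eq_Prop_MIS_based_entropy_general_graph}).

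First I would construct $\mathcal{C}^\star$ recursively, building the coloring of $G^n$ from that of $G^{n-1}$ exactly as in the cycle proofs. The key structural fact is that, under the OR (disjunctive) product, two vertices of $G^n$ are non-adjacent if and only if they are non-adjacent in every coordinate; hence each color class is an independent set, the maximum independent sets of $G^n$ are Cartesian products of those of $G$, and $|\text{MIS}_{G^t}|=|\text{MIS}_G|^t$ for each level $t\in[n]\cup\{0\}$. When passing from $G^{n-1}$ to $G^n$, which consists of $V$ copies of $G^{n-1}$ with copies $l$ and $l'$ fully connected precisely when $(l,l')\in\mathcal{E}$ in $G$ (the block form of ${\bf A}_f^n$ in (\ref{eq-AdjacencyMatrix}) has diagonal blocks ${\bf A}_f^{n-1}$ and off-diagonal blocks ${\bf J}_{V^{n-1}}$ or ${\bf Z}_{V^{n-1}}$ according to adjacency in $G$), colors are reused across non-adjacent sub-graph copies whenever possible and extended to span maximal independent sets; the residual vertices that cannot be merged into a full-size class are colored using smaller independent sets inherited from lower-level sub-graphs $G^t$, $t<n$. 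This yields $\alpha_t$ color classes whose support is an $\text{MIS}$ of $G^t$.

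Next I would compute the coloring PMF. Assuming $X_1$ uniform over $\mathcal{X}_1$, every vertex of $G^n$ carries probability $V^{-n}$, so a class covering an $\text{MIS}$ of $G^t$ has probability $|\text{MIS}_{G^t}|/V^n$, and there are $\alpha_t$ such classes. The normalization $\sum_{t=0}^{n}\alpha_t\,|\text{MIS}_{G^t}|/V^n=1$ follows because the color classes partition all $V^n$ vertices (the general-graph analogue of (\ref{eq-exm-eq-c_5})), and the recursion enforces $\alpha_t\ge\alpha_{t-1}$ with $\sum_t\alpha_t=\chi(G^n)$. Feeding this PMF into $H(\mathcal{C}^\star)$ and dividing by $n$ returns exactly the bound in (\ref{eq_Prop_MIS_based_entropy_general_graph}).

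I expect the main obstacle to be the second step: verifying that the recursive scheme genuinely produces a \emph{valid} coloring whose classes are independent sets of the stated cardinalities, for a general $G$ lacking the clean cyclic symmetry exploited in Proposition~\ref{prop-upper_entrop-cycle}. For odd cycles the color-reuse pattern can be tracked explicitly via the Hamiltonian structure of each sub-graph; for arbitrary $G$ one must argue abstractly that a maximal color-reuse (a greedy merging of compatible independent sets across the sub-graph copies dictated by $\mathcal{E}$) always leaves the leftover vertices decomposable into independent sets drawn from the lower-level MIS hierarchy, so that the counts $\alpha_t$ are well defined and satisfy the monotonicity and summation constraints above. Establishing this decomposition—rather than the subsequent entropy computation, which is then routine—is the crux of the argument.
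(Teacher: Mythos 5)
Your proposal follows essentially the same route as the paper's proof: both rely on the multiplicativity $|\text{MIS}_{G^n}|=|\text{MIS}_G|^n$ under the OR product, the constraints $\sum_t \alpha_t|\text{MIS}_{G^t}|/V^n=1$, $\alpha_t\ge|\text{MIS}_G|\cdot\alpha_{t-1}$, $\alpha_0=1$ on the class counts, and the resulting coloring PMF fed into the chromatic entropy with $\tfrac{1}{n}$ normalization. The step you flag as the crux---verifying that the recursive color-reuse actually yields a valid coloring with classes of the stated cardinalities for an arbitrary $G$---is likewise asserted rather than proved in the paper's own argument, so your treatment is, if anything, more explicit (via Fekete's lemma for the fixed-$n$ bound) while matching the paper's level of rigor elsewhere.
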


\begin{proof}
 See Appendix~\ref{App:Proof-of-Proposition_prop_entropy_odd_cycles_MIS}. 
\end{proof}

Despite the upper bound in Proposition~\ref{prop_entropy_odd_cycles_MIS}, to the best of our knowledge, with {\emph{traditional coloring schemes for $G(\mathcal{V},\mathcal{E})$}} where the total number of vertices is odd, i.e., $V = 2k+1$ for $k\in \mathbb{Z}^{\geq2}$, there is no established lower bound for $H_{G}({X}_1)$. To that end, in Corollary~\ref{cor-conv-graph-fraction}, we derive a lower bound on $H_{G}({X}_1)$ by employing the concept of {\emph{fractional coloring}} (see Definition~\ref{def-fractional-coloring}).
%%%%%%%%%%%%%%%%%%%%

\begin{cor}[A lower bound on $H_{G}(X_1)$]
\label{cor-conv-graph-fraction}
The entropy of $H_{G}({X}_1)$ for a general connected characteristic graph $G(\mathcal{V},\mathcal{E})$ with $V=2k+1$ and Hamiltonian path, where $k\in\mathbb{Z}^{\geq2}$, and under uniform distribution of $X_1$, is lower bounded by 
    \begin{align}
    \label{eq-theo-conv-graph-fraction}
      \log_2 \left( \frac{2k+1}{k} \right)  \leq H_{G}({X}_1) \ .
    \end{align} 
\end{cor}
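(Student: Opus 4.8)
The plan is to reduce the statement to a limiting chromatic-entropy computation and then bound that entropy from below through the \emph{sizes} of the color classes, each of which is an independent set. Under the uniform law on $X_1$, a valid coloring of the $n$-fold OR product $G^n$ partitions the $V^n$ vertices into independent sets $I_1,\dots,I_m$, and the induced coloring random variable takes value $j$ with probability $|I_j|/V^n$. Starting from the characterization $H_G(X_1)=\lim_{n\to\infty}\frac1n H^{\chi}_{G^n}(X_1)$ in (\ref{eq-chor-ent})--(\ref{eq-chorm_korner-cycle}), I would therefore work with $H^{\chi}_{G^n}(X_1)=\min_{\mathcal{C}_{G^n}}H(\mathcal{C}_{G^n})$ and lower-bound the entropy of \emph{every} partition into independent sets, hence of the minimizer.

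First I would write, for any valid coloring with classes of sizes $|I_1|,\dots,|I_m|$,
\begin{align}
\label{eq-plan-class}
H(\mathcal{C}_{G^n})=\log_2 V^n-\frac{1}{V^n}\sum_{j}|I_j|\log_2|I_j| \ .
\end{align}
Since each class is an independent set, $|I_j|\le \alpha(G^n)$, where $\alpha(\cdot)$ denotes the maximum independent-set size (not to be confused with the counts $\alpha_t$). Substituting $\log_2|I_j|\le\log_2\alpha(G^n)$ into (\ref{eq-plan-class}) and using $\sum_j|I_j|=V^n$ yields $H(\mathcal{C}_{G^n})\ge\log_2\bigl(V^n/\alpha(G^n)\bigr)$ uniformly over colorings, so $H^{\chi}_{G^n}(X_1)\ge \log_2\bigl(V^n/\alpha(G^n)\bigr)$. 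Next I would invoke the multiplicativity of the independence number under the OR product, $\alpha(G^n)=\alpha(G)^n$---the same fact that gives $|\text{MIS}_{C_5^j}|=2^j$ in Example~\ref{example_entropy_bound}---so that dividing by $n$ and letting $n\to\infty$ gives the clean bound $H_G(X_1)\ge\log_2\bigl(V/\alpha(G)\bigr)$. This is where fractional coloring enters: $V/\alpha(G)$ is exactly the fractional chromatic number of a vertex-transitive graph, and for the odd cycle $\chi_f(C_{2k+1})=(2k+1)/k$.

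It then remains to show $\alpha(G)\le k$, i.e.\ $V/\alpha(G)\ge (2k+1)/k$. The connected, Hamiltonian structure on $V=2k+1$ vertices forces $G$ to contain a spanning odd cycle $C_{2k+1}$; adding the remaining edges of $G$ only shrinks independent sets, so $\alpha(G)\le\alpha(C_{2k+1})=k$. Combining this with the previous paragraph gives $H_G(X_1)\ge\log_2\bigl(V/\alpha(G)\bigr)\ge\log_2\bigl((2k+1)/k\bigr)$, as claimed.

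The main obstacle is precisely this last step: a bare Hamiltonian \emph{path} on $2k+1$ vertices only yields $\alpha(G)\le k+1$ (the path $P_{2k+1}$ has independence number $k+1$), which would give the strictly weaker bound $\log_2\bigl((2k+1)/(k+1)\bigr)$. The hypothesis must therefore be read, consistently with the cyclic-graph setting of Definition~\ref{d-regular graph, and Cycle graphs}, as guaranteeing a spanning \emph{odd cycle} (equivalently $\alpha(G)\le k$); pinning this down is the crux of the argument. A secondary point requiring care is a clean justification that $\alpha(G^n)=\alpha(G)^n$ for the OR product and that the minimum-entropy coloring cannot beat the class-size bound (\ref{eq-plan-class}); both I would isolate as short lemmas before assembling the final chain of inequalities.
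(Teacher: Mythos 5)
Your proposal is correct, but it takes a genuinely different route from the paper. The paper's proof works through fractional coloring: it shows $\chi_b(C_{2k+1})=2b+1$ by a contradiction argument on $a:b$ colorings, concludes $\chi_f(C_{2k+1})=(2k+1)/k$ via the Lov\'asz bound $\chi_f(G)\geq V/|\text{MIS}_G|$, and then invokes the external inequality $H^f_G(X_1)\leq H_G(X_1)$ from~\cite{malak2022fractional} together with the evaluation $\frac{1}{n}H^{\chi}_{C^n_{2k+1}}({\bf X}_1)=\log_2\left(\frac{2b+1}{b}\right)$ at $b=k$ under uniformity. You instead stay entirely within the chromatic-entropy framework of (\ref{eq-chor-ent})--(\ref{eq-chorm_korner-cycle}): every color class of a valid coloring of $G^n$ is an independent set of size at most $\alpha(G^n)=\alpha(G)^n$, so the entropy of any coloring is at least $\log_2\bigl(V^n/\alpha(G)^n\bigr)$, giving $H_G(X_1)\geq\log_2\bigl(V/\alpha(G)\bigr)$ directly; this is the information-theoretic form of the same $V/\alpha$ quantity, reached without the $a:b$-coloring machinery or the fractional-entropy lemma, and the two short facts you defer (multiplicativity of $\alpha$ under the OR product and the class-size bound) are standard and easy to supply. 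Your approach is more elementary and self-contained; the paper's makes the link to $\chi_f$ and the Shannon-capacity remark explicit. You are also right to flag the crux both arguments share: a Hamiltonian \emph{path} alone gives only $\alpha(G)\leq k+1$ (indeed for $G=P_{2k+1}$ one has $H_G(X_1)\leq\log_2\chi_f(P_{2k+1})=1<\log_2\frac{2k+1}{k}$, so the stated bound would fail), and the hypothesis must be read, as in Definition~\ref{d-regular graph, and Cycle graphs}, as guaranteeing a spanning odd cycle so that $\alpha(G)\leq\alpha(C_{2k+1})=k$; the paper's proof asserts the reduction to $C_{2k+1}$ without isolating this point, so your explicit treatment of it is an improvement rather than a gap.
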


\begin{proof}
See Appendix~\ref{App:proof-theo-conv-graph-fraction}.
\end{proof}

From Corollary~\ref{cor-conv-graph-fraction}, we infer that employing fractional coloring, and (\ref{eq-n-fold-fractional}), yields a lower bound on $H_G(X_1)$. For $k=2$, the lower bound using $C_5$ for the graph $G(\mathcal{V}, \mathcal{E})$ with $V=5$ is given by $1.32\leq H_G(X_1)$, which matches the Shannon capacity of the pentagon~\cite{shannon1956zero}.

%%%%%%%%%%%%%%%%%%%%%%%%%%%%%%%%%%%%%%%%%%

%%%%%%%%%%%%%%%%%%%%%%%%%%%%%%%%%%%%%%%%%%

\subsection{Spectra of General Graphs} 
\label{sec:gresh-bound} 
Given a general graph $G$, we analyze the spectrum of ${\bf A}_f$ using the concept of GCT, as detailed in Definition~\ref{rem-gersh-circle}. We then exploit this spectrum to establish bounds on $\chi({G^n})$. 
For $G^n$, using the symmetry of ${\bf A}_f^n=({\bf A}_f^{n-1} \otimes {\bf I}_V + {\bf J}_{V}\otimes {\bf A}_f^{n-1} )\in \mathbb{F}_2^{V^n\times V^n}$, where $\otimes$ denotes the Kronecker product, we infer that circle $D_k$, in which an eigenvalue  $\lambda_k({\bf A}_f^n)$ is contained, simplifies to an interval:
\begin{align}
\label{eq-interval-gresh}
\delta_k=\{\lambda_k({\bf A}_f^n) \in \mathbb{R} : \left|\lambda_k ({\bf A}_f^n)-a^n_{kk}\right|\leq \sum_{t\neq k}  |a^n_{kt}| \}, \quad {k,t \in [V^n]} \ ,
\end{align}
$a^n_{kk}$ and $\sum_{t\neq k} |a^n_{kt}|$ denote the diagonal elements and the sums of non-diagonal elements in the $k$-th row of ${\bf A}_f^n$, respectively, then for $k =Vk'+i'$, and $t=Vt'+j'$, the elements are defined as $a^n_{kt} = (a^{n-1}_{k', t'} \cdot \Delta_{i'j'}+a_{i'j'}^{n-1}\mod 2)$ where $\Delta_{i'j'}$ denotes the Kronecker delta function (i.e., $\Delta_{i'j'}=1$ if $i'= j'$, and $\Delta_{i'j'}=0$ otherwise).
%$a^n_{kt} = \sum_{m=0}^{n-1} \left( \left[ \prod_{\substack{r=0 \\ r \ne m}}^{n-1} \Delta_{k_r t_r} \right] \cdot a_{k_m t_m} \right) \mod 2$ 
% This is the relation with a_{kt}.
%%%%%%%%%%%%%%%%%%%%%%%%%%%%%%%%%%%%%%%%%%%%%

We next refine $\delta_k$ by exploiting the concept of block GCT, where using Definition~\ref{def-gresh-block} helps enclose each given $\lambda_k({\bf A}_f^n)$ within an interval denoted by $\delta_k^b$. Interval $k$ is characterized by the diagonal sub-matrices ${\bf A}_{kk}$, corresponding to ${\bf A}_f^{n-1}$, and the non-diagonal sub-matrices ${\bf A}_{kt}$, consisting of ${\bf Z}_{V^{n-1}}$ and ${\bf J}_{V^{n-1}}$, which represent disconnected and connected components of $G^n$, respectively. However, the block GCT representation leads to loose bounds on $\lambda_k({\bf A}_f^n)$ and subsequently on $\chi({G^n})$ via (\ref{eq-chro_up-low_eig}). To tighten these bounds, in Theorem~\ref{theo-eig-any-func} and Corollary~\ref{cor:iter-gresh-eigen}, we derive $\lambda_k({\bf A}_f^n)$ by leveraging GCT intervals from (\ref{eq-def-gersh-circle}) and (\ref{eq-block-gresh-disc-union}), and using a decomposition-based approach that splits ${\bf A}_f^n$ into two symmetric matrices.

\begin{theo}[$\lambda_k({\bf A}_f^n)$ via splitting ${\bf A}_f^n$ into two symmetric matrices]
\label{theo-eig-any-func}
The eigenvalues of ${\bf A}_f^n$, denoted by $\lambda_k({\bf A}_f^n)$, are given as follows:
   \begin{align}
    \label{eig-any-func-eq}
    \lambda_k({\bf A}_f^n) = \lambda_k ({\bf A}_{Gr}^n)+ \lambda_k ({\bf A}_{f^{\mathsf{c}}}^n) \ , \quad k\in [V^n],
   \end{align}
   where ${\bf A}^n_{Gr}$ is a block diagonal matrix with diagonal blocks formed from ${\bf A}_f^{n-1}$, and ${\bf A}_{f^{\mathsf{c}}}^n={\bf A}_f^n - {\bf A}^n_{Gr}$ captures the non-diagonal elements of ${\bf A}^n_{Gr}$.
\end{theo}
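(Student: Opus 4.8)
The plan is to present ${\bf A}_{Gr}^n$ and ${\bf A}_{f^{\mathsf c}}^n$ as two \emph{commuting} real symmetric matrices, so that they admit a common orthonormal eigenbasis; the additive identity~(\ref{eig-any-func-eq}) then follows by simply applying ${\bf A}_f^n = {\bf A}_{Gr}^n + {\bf A}_{f^{\mathsf c}}^n$ to each shared eigenvector. This is the rigorous form of the additivity principle already invoked in the proof of Proposition~\ref{prop-largest-eigenvalue} and Theorem~\ref{cyclic_adjacency}, namely that eigenvalues of a sum of adjacency matrices add when the summands share an eigenbasis.

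First I would make the decomposition explicit through the Kronecker structure underlying~(\ref{eq-AdjacencyMatrix}). Writing ${\bf A}_G$ for the adjacency matrix of the base graph, the block-diagonal part is ${\bf A}_{Gr}^n = {\bf I}_V \otimes {\bf A}_f^{n-1}$ and the off-block-diagonal part is ${\bf A}_{f^{\mathsf c}}^n = {\bf A}_G \otimes {\bf J}_{V^{n-1}}$, whose sum reproduces the block form of ${\bf A}_f^n$ in~(\ref{eq-AdjacencyMatrix}). Both summands are symmetric, since ${\bf A}_f^{n-1}$ is symmetric by induction (its base case ${\bf A}_f$ is a symmetric $(0,1)$-matrix), while ${\bf I}_V$, ${\bf A}_G$, and ${\bf J}_{V^{n-1}}$ are symmetric, and a Kronecker product of symmetric matrices is symmetric.

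The crux is the commutativity ${\bf A}_{Gr}^n {\bf A}_{f^{\mathsf c}}^n = {\bf A}_{f^{\mathsf c}}^n {\bf A}_{Gr}^n$. Applying the mixed-product rule $({\bf P}\otimes {\bf Q})({\bf R}\otimes {\bf S}) = ({\bf P}{\bf R})\otimes({\bf Q}{\bf S})$, the commutator reduces to ${\bf A}_G \otimes \big({\bf A}_f^{n-1}{\bf J}_{V^{n-1}} - {\bf J}_{V^{n-1}}{\bf A}_f^{n-1}\big)$. This vanishes precisely because ${\bf A}_f^{n-1}$ commutes with the all-ones matrix ${\bf J}_{V^{n-1}}$: left- or right-multiplication by ${\bf J}_{V^{n-1}}$ broadcasts the row (resp.\ column) sums of ${\bf A}_f^{n-1}$, and by Proposition~\ref{prop-deg-regular} these sums are all equal to the common degree $deg({x}^{n-1})$, so ${\bf A}_f^{n-1}{\bf J}_{V^{n-1}} = deg({x}^{n-1})\,{\bf J}_{V^{n-1}} = {\bf J}_{V^{n-1}}{\bf A}_f^{n-1}$. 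I expect this degree-regularity step to be the main obstacle, because it is exactly what forces the two pieces to share an eigenbasis; without a constant degree the off-diagonal and block-diagonal parts need not commute. I would therefore isolate it as a short lemma, noting that for cycles and $d$-regular graphs it is immediate from Propositions~\ref{prop-deg-cycles} and~\ref{prop-deg-regular} (and from Lemma~\ref{eigenvalue_dependency} for the $\lambda_1({\bf J}_{V^{n-1}})=V^{n-1}$ bookkeeping).

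Finally I would invoke the spectral theorem for commuting symmetric matrices: two commuting real symmetric matrices are simultaneously orthogonally diagonalizable, so there is an orthonormal basis $\{{\bf w}_k\}_{k\in[V^n]}$ with ${\bf A}_{Gr}^n {\bf w}_k = \lambda_k({\bf A}_{Gr}^n){\bf w}_k$ and ${\bf A}_{f^{\mathsf c}}^n {\bf w}_k = \lambda_k({\bf A}_{f^{\mathsf c}}^n){\bf w}_k$. Evaluating ${\bf A}_f^n = {\bf A}_{Gr}^n + {\bf A}_{f^{\mathsf c}}^n$ on ${\bf w}_k$ yields ${\bf A}_f^n {\bf w}_k = \big(\lambda_k({\bf A}_{Gr}^n)+\lambda_k({\bf A}_{f^{\mathsf c}}^n)\big){\bf w}_k$, which is~(\ref{eig-any-func-eq}) once the eigenvalues of all three matrices are indexed along this shared basis. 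I would close by observing that this common-eigenbasis bookkeeping recovers the spectra of $C_i^n$ and $G_{d,V}^n$ as special cases, consistent with Theorem~\ref{cyclic_adjacency} and Proposition~\ref{prop-largest-eigenvalue}.
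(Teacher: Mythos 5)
Your decomposition ${\bf A}_{Gr}^n = {\bf I}_V \otimes {\bf A}_f^{n-1}$, ${\bf A}_{f^{\mathsf c}}^n = {\bf A}_f \otimes {\bf J}_{V^{n-1}}$ matches the block structure in (\ref{eq-AdjacencyMatrix}), and your route through commutativity and simultaneous orthogonal diagonalization is the correct way to justify the additive identity. The paper's own proof instead asserts outright that the eigenvalues of a sum of two symmetric matrices equal the sums of their respective eigenvalues, invoking Lemma~\ref{eigenvalue_dependency} and an induction on the block decomposition; that assertion is false for general symmetric matrices (e.g.\ $\mathrm{diag}(1,0)$ plus the off-diagonal all-ones $2\times2$ matrix has spectrum $\{(1\pm\sqrt{5})/2\}$, not $\{2,-1\}$). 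Your commutator computation ${\bf A}_f\otimes\bigl({\bf A}_f^{n-1}{\bf J}_{V^{n-1}}-{\bf J}_{V^{n-1}}{\bf A}_f^{n-1}\bigr)$ supplies exactly the missing hypothesis: a common eigenbasis exists precisely because ${\bf A}_f^{n-1}$ has constant row sums, i.e.\ because $G^{n-1}$ (equivalently $G$) is regular. In that regime your argument is complete and strictly more rigorous than the paper's.

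What you should push further is the consequence of your own observation. Theorem~\ref{theo-eig-any-func} is stated in Section~\ref{sec:sub-general-graph} for \emph{general} characteristic graphs and is applied in Example~\ref{exm_geresh} to a graph $G_1$ with degree sequence $(2,3,2,3,2)$; there your commutator does not vanish, the two summands need not share an eigenbasis, and only the Weyl inequalities $\lambda_k({\bf A}_{Gr}^n)+\lambda_{V^n}({\bf A}_{f^{\mathsf c}}^n)\le\lambda_k({\bf A}_f^n)\le\lambda_k({\bf A}_{Gr}^n)+\lambda_1({\bf A}_{f^{\mathsf c}}^n)$ survive. So your proof establishes (\ref{eig-any-func-eq}) for cycles and $d$-regular graphs (consistent with Theorem~\ref{cyclic_adjacency} and Proposition~\ref{prop-largest-eigenvalue}) but does not, and cannot as written, establish the exact equality in the generality claimed; you flag regularity as ``the main obstacle'' but then invoke Proposition~\ref{prop-deg-regular} as if it applied to arbitrary $G$. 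The statement should either be restricted to regular base graphs or the equality downgraded to two-sided bounds --- that is a defect of the theorem rather than of your argument, but a complete proof must state the restriction rather than leave it as an anticipated difficulty.
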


\begin{proof}
See Appendix~\ref{app:proof-eig-any-func}.
\end{proof}
%%%%%%%%%%%%%%%%%%%%%%%%%%%%%%%

%%%%%%%%%%%%%%%%%%%%%%%%%%%%%%
Theorem~\ref{theo-eig-any-func} demonstrates that by decomposing ${\bf A}_f^n$ into ${\bf A}_{Gr}^n$ and ${\bf A}_{f^{\mathsf{c}}}^n$, we can capture the connections between ${\bf A}_f^{n-1}(l)$, $l\in[V]$, corresponding to the sub-graphs of $G^n$.

Next, we describe an iterative technique to determine $\lambda_k({\bf A}_f^n)$ from $\lambda_k({\bf A}_f)$. 

%%%%%%%%%%%%%%%%%%%%%%%%%%%
\begin{cor}
\label{cor:iter-gresh-eigen}
The eigenvalues of ${\bf A}_f^n$ can be iteratively calculated as follows:
\begin{align}
\label{cor-iter-gresh-eig-general-nth}
 \lambda_k ({\bf A}_f^n)= \lambda_k({\bf A}_f)+\sum_{j=2}^n \lambda_k ({\bf A}_{f^{\mathsf{c}}}^j) \:, \quad k \in [V^n] \ .
\end{align}
\end{cor}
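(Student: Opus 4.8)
The plan is to prove the identity by induction on $n$, using Theorem~\ref{theo-eig-any-func} as the single-step decomposition and iterating it down to the base matrix ${\bf A}_f={\bf A}_f^1$. Theorem~\ref{theo-eig-any-func} already supplies, for each fixed $n\geq 2$, the additive splitting $\lambda_k({\bf A}_f^n)=\lambda_k({\bf A}_{Gr}^n)+\lambda_k({\bf A}_{f^{\mathsf{c}}}^n)$, read off in a single basis that simultaneously diagonalizes ${\bf A}_{Gr}^n$ and ${\bf A}_{f^{\mathsf{c}}}^n$. The only extra ingredient I need is the observation that the diagonal-block term ${\bf A}_{Gr}^n$ carries exactly the spectrum of the lower-order product ${\bf A}_f^{n-1}$, which then lets me fold Theorem~\ref{theo-eig-any-func} into itself.

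First I would record that, by the block structure in (\ref{eq-AdjacencyMatrix}) and the definition in Theorem~\ref{theo-eig-any-func}, ${\bf A}_{Gr}^n$ is block diagonal with $V$ identical diagonal blocks equal to ${\bf A}_f^{n-1}$, i.e.\ ${\bf A}_{Gr}^n={\bf I}_V\otimes {\bf A}_f^{n-1}$. Consequently the multiset of eigenvalues of ${\bf A}_{Gr}^n$ is precisely $V$ copies of the multiset of eigenvalues of ${\bf A}_f^{n-1}$, and an eigenbasis of ${\bf A}_{Gr}^n$ may be taken of the form $\{{\bf e}_i\otimes {\bf u}_m\}_{i\in[V],\,m\in[V^{n-1}]}$, where $\{{\bf u}_m\}$ is an eigenbasis of ${\bf A}_f^{n-1}$ and the eigenvalue attached to ${\bf e}_i\otimes{\bf u}_m$ is $\lambda_m({\bf A}_f^{n-1})$. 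Fixing a labelling $k\leftrightarrow(i,m)$ of $[V^n]$ compatible with the basis furnished by Theorem~\ref{theo-eig-any-func}, this gives $\lambda_k({\bf A}_{Gr}^n)=\lambda_k({\bf A}_f^{n-1})$, and substitution into the splitting yields the one-step recursion $\lambda_k({\bf A}_f^n)=\lambda_k({\bf A}_f^{n-1})+\lambda_k({\bf A}_{f^{\mathsf{c}}}^n)$. The base case $n=2$ is immediate from Theorem~\ref{theo-eig-any-func}, since the diagonal blocks of ${\bf A}_{Gr}^2$ are ${\bf A}_f^1={\bf A}_f$, so $\lambda_k({\bf A}_f^2)=\lambda_k({\bf A}_f)+\lambda_k({\bf A}_{f^{\mathsf{c}}}^2)$. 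Telescoping the one-step recursion from $n$ down to $1$ then produces $\lambda_k({\bf A}_f^n)=\lambda_k({\bf A}_f)+\sum_{j=2}^n\lambda_k({\bf A}_{f^{\mathsf{c}}}^j)$, which is exactly (\ref{cor-iter-gresh-eig-general-nth}).

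The delicate point, and the one I expect to be the real obstacle, is the consistency of the eigenvalue index $k$ across successive levels of the recursion. Each ${\bf A}_{Gr}^n$ replicates every eigenvalue of ${\bf A}_f^{n-1}$ with multiplicity $V$, so the index set $[V^n]$ attached to ${\bf A}_f^n$ does not literally coincide with the range $[V^{n-1}]$ of ${\bf A}_f^{n-1}$; the identification $\lambda_k({\bf A}_{Gr}^n)=\lambda_k({\bf A}_f^{n-1})$ is meaningful only once a single orthonormal basis is fixed in which every matrix ${\bf A}_{Gr}^j$ and ${\bf A}_{f^{\mathsf{c}}}^j$ for $2\leq j\leq n$ is simultaneously diagonal, and the labels are propagated coherently through the Kronecker factorization. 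I would therefore strengthen the induction so that it carries this shared basis as part of its hypothesis: at each step the basis provided by Theorem~\ref{theo-eig-any-func} at level $n$ must be shown to restrict, on each diagonal block, to the simultaneously-diagonalizing basis of level $n-1$, ensuring that all summands $\lambda_k({\bf A}_{f^{\mathsf{c}}}^j)$ are evaluated in one common frame. Once this bookkeeping is secured, the telescoping is purely formal; making the index identification rigorous is where essentially all the effort lies, and it is precisely the simultaneous-diagonalization content of Theorem~\ref{theo-eig-any-func} that must be invoked to supply it.
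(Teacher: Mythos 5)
Your proposal is correct and follows essentially the same route as the paper's proof: apply the decomposition of Theorem~\ref{theo-eig-any-func} at each level, observe that the block-diagonal part ${\bf A}_{Gr}^j$ carries the spectrum of ${\bf A}_f^{j-1}$ (with multiplicities scaled by $V$), and telescope the one-step recursion down to $\lambda_k({\bf A}_f)$. The index-consistency issue you flag as the real obstacle is a genuine subtlety, but the paper's own proof handles it no more explicitly than you do --- it simply asserts $\lambda_k({\bf A}^j_{Gr})=\lambda_k({\bf A}^{j-1}_f)$ and substitutes iteratively --- so your treatment is, if anything, the more careful of the two.
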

\begin{proof}
See Appendix~\ref{App:cor-iter-gresh-eig-general-nth}. 
\end{proof}
%%%%%%%%%%%%%%%%%%%%%%%%%%%%%%%%%
%In Theorem~\ref{theo-eig-any-func} and Corollary~\ref{cor:iter-gresh-eigen}, we demonstrate that exploiting the computation structure through the block representation for ${\bf A}_f^n$ helps us achieve a lower complexity compared to~\cite{echeverria2018block}. Next in Proposition~\ref{prop: chromatic-gresh}, we use Theorem~\ref{theo-eig-any-func} to provide a tighter bound for $\chi({G^n})$ compared to the intervals derived from the block GCT representation (see (\ref{eq-block-gresh-disc-union})).
Theorem~\ref{theo-eig-any-func} and Corollary~\ref{cor:iter-gresh-eigen} illustrate that leveraging the block structure of ${\bf A}_f^n$ reduces complexity compared to~\cite{echeverria2018block}. In Proposition~\ref{prop: chromatic-gresh}, we use Theorem~\ref{theo-eig-any-func} to derive a tighter bound on $\chi(G^n)$ than those from the block GCT representation intervals in (\ref{eq-block-gresh-disc-union}).

\begin{prop}[Bounds on $\chi({G^n})$ using GCT] 
\label{prop: chromatic-gresh} 
Given a general graph $G(\mathcal{V}, \mathcal{E})$, the chromatic number $\chi({G^n})$ is bounded as follows:
\begin{align}
\label{eq: prop:chromatic-gresh}    
1-\frac{\lambda_1 ({\bf A}_{Gr}^n)+ \lambda_1 ({\bf A}_{f^{\mathsf{c}}}^n)}{ -\sqrt{(V^{n}/2)\cdot[(V^{n}+1)/2]}}\leq \chi({G^n})
\leq \lfloor \lambda_1 ({\bf A}_{Gr}^n)+ \lambda_1 ({\bf A}_{f^{\mathsf{c}}}^n) \rfloor +1\ .
 \end{align}
\end{prop}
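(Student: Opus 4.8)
The plan is to instantiate the general eigenvalue bound on the chromatic number, namely (\ref{eq-chro_up-low_eig}), for the specific adjacency matrix ${\bf A}_f^n$, and then to substitute the two pieces---the largest and the smallest eigenvalue of ${\bf A}_f^n$---by the quantities supplied by the earlier results. First I would recall (\ref{eq-chro_up-low_eig}), which gives $1-\lambda_1({\bf A}_f^n)/\lambda_{V^n}({\bf A}_f^n)\leq \chi({G^n})\leq \lfloor \lambda_1({\bf A}_f^n)\rfloor+1$ when applied to $G^n$ with its adjacency matrix ${\bf A}_f^n$ and $V^n$ vertices. The whole proposition is then a matter of replacing $\lambda_1({\bf A}_f^n)$ and $\lambda_{V^n}({\bf A}_f^n)$ by expressions that are already available.

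For the largest eigenvalue, I would invoke Theorem~\ref{theo-eig-any-func}, which decomposes ${\bf A}_f^n={\bf A}_{Gr}^n+{\bf A}_{f^{\mathsf{c}}}^n$ and gives $\lambda_k({\bf A}_f^n)=\lambda_k({\bf A}_{Gr}^n)+\lambda_k({\bf A}_{f^{\mathsf{c}}}^n)$ for each $k\in[V^n]$; taking $k=1$ yields $\lambda_1({\bf A}_f^n)=\lambda_1({\bf A}_{Gr}^n)+\lambda_1({\bf A}_{f^{\mathsf{c}}}^n)$. Substituting this into both the numerator of the lower bound and the floor of the upper bound in (\ref{eq-chro_up-low_eig}) produces exactly the right-hand expressions appearing in (\ref{eq: prop:chromatic-gresh}). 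For the smallest eigenvalue $\lambda_{V^n}({\bf A}_f^n)$, which has no exact characterization, I would use the Hong-type lower bound (\ref{hong-low-bound-small-eigen}) applied to ${\bf A}_f^n$ (a graph on $V^n$ vertices), giving $\lambda_{V^n}({\bf A}_f^n)\geq -\sqrt{(V^n/2)\cdot((V^n+1)/2)}$; since the lower bound on $\chi$ in (\ref{eq-chro_up-low_eig}) involves dividing by the negative quantity $\lambda_{V^n}$, replacing $\lambda_{V^n}$ by this more negative surrogate preserves the inequality direction and yields the denominator $-\sqrt{(V^{n}/2)\cdot[(V^{n}+1)/2]}$ in (\ref{eq: prop:chromatic-gresh}).

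The main obstacle---and the step warranting care rather than heavy computation---is justifying that substituting the Hong lower bound for $\lambda_{V^n}$ keeps the inequality valid in the correct direction. Because $\lambda_{V^n}({\bf A}_f^n)<0$ and we form $1-\lambda_1/\lambda_{V^n}$, one must check that replacing $\lambda_{V^n}$ by the smaller (more negative) value $-\sqrt{(V^n/2)((V^n+1)/2)}$ does not reverse the bound: concretely, since $\lambda_1>0$ and both denominators are negative, $-\lambda_1/\lambda_{V^n}>0$ decreases when $|\lambda_{V^n}|$ increases, so the surrogate gives a valid (possibly looser) lower bound on $\chi$. I would also note that the floor on $\lambda_1({\bf A}_f^n)$ in the upper bound is retained directly, and that Theorem~\ref{theo-eig-any-func} guarantees the decomposition is exact so no approximation error enters the upper bound. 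Collecting the substituted numerator, denominator, and floored upper bound gives (\ref{eq: prop:chromatic-gresh}), completing the proof.
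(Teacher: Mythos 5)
Your proposal is correct and follows essentially the same route as the paper: apply the spectral bound (\ref{eq-chro_up-low_eig}) to ${\bf A}_f^n$, replace $\lambda_1({\bf A}_f^n)$ by $\lambda_1({\bf A}_{Gr}^n)+\lambda_1({\bf A}_{f^{\mathsf{c}}}^n)$ via Theorem~\ref{theo-eig-any-func}, and replace $\lambda_{V^n}({\bf A}_f^n)$ by the Hong-type lower bound (\ref{hong-low-bound-small-eigen}). Your explicit monotonicity check that substituting a more negative surrogate for $\lambda_{V^n}$ only loosens (and does not reverse) the lower bound is a worthwhile detail the paper leaves implicit.
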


\begin{proof}
\label{proof-prop: chromatic-gresh}
To prove this result, we use the bounds for $\chi({G^n})$ in (\ref{eq-chro_up-low_eig}) using the eigenvalues of ${\bf A}_f^n$, adjust $\lambda_1({\bf A}_f^n)$  and $\lambda_{V^n}({\bf A}_f^n)$, and by utilizing (\ref{eig-any-func-eq}) from Theorem~\ref{theo-eig-any-func}, and (\ref{hong-low-bound-small-eigen}), respectively. 
\end{proof}

For the RHS and LHS in (\ref{eq: prop:chromatic-gresh}), (\ref{eq-Corr-matrix-j-z-avrage}) provides a tighter approximation on $\lambda_1({\bf A}_f^n)$ compared to (\ref{eq-note-matrix-j-z}), which stems from employing~\cite[Lemma~5]{mehlhorngreat}, and ${d_{\rm avg}}\cdot V^{n-1}\leq \left\vert\delta^b(\lambda_1({\bf A}_f^n))/{2}\right\vert$.

%Via an example, we next illustrate the utility of Theorem~\ref{theo-eig-any-func} and Corollary~\ref{cor:iter-gresh-eigen}.

\begin{exm}
\label{exm_geresh}   
Consider a characteristic graph $G_1$ with an adjacency matrix 
\begin{align}
\label{matrix-example-gresh}
    {\bf A}_{f_1}=\begin{bmatrix}
        0 & 1 & 0 & 0 & 1\\
        1 & 0 & 1 & 1 & 0\\
        0 & 1 & 0 & 1 & 0\\
        0 & 1 & 1& 0 & 1\\
        1& 0 & 0 & 1 & 0 
    \end{bmatrix} \ ,
\end{align}
with a set of distinct eigenvalues  $\vartheta(G_1) = \{2.4812, 0.6889, 0, -1.1701, -2\}$. Using GCT, we derive five intervals $\{\delta_k\}_{k\in[5]}=\{[-2,2],\ [-2,2],\ [-2,2],\ [-3,3], [-3,3]\}$ for ${\bf A}_{f_1}$, one for each eigenvalue, where each $\delta_k$ is centered at $0$ since $trace({\bf A}_{f_1}^2) = 0$. Two unique intervals with the largest lengths are $\delta_1=[-2,2]$ and $\delta_2 =[-3, 3]$, which are used to determine $\lambda_1({\bf A}_{f_1})$.

From Theorem~\ref{theo-eig-any-func} and Corollary~\ref{cor:iter-gresh-eigen}, we have $\lambda_1({\bf A}_{f_1}^2) \in [12, 18]$. Applying the GCT for block matrices (see (\ref{eq-block-gresh-disc-union}) in Section~\ref{prelem}), we obtain $\Xi({\bf A}_{f_1}^2)\in\cup_{k=1}^5 \delta^b_{k}$, where $\delta^b = [-18, 18]$ includes all possible eigenvalues but provides a less precise estimate than (\ref{eig-any-func-eq}). Refining the bounds for $\lambda_1({\bf A}_{f_1}^2)$ using the ${d}_{\rm avg}(x_k)$ in (\ref{eq-note-matrix-j-z}) and the upper bound in (\ref{eq: prop:chromatic-gresh}) gives a more precise interval of $[12, 15]$. This range shows the upper bound is $3$ units tighter than the maximum degree method.
\begin{figure}[h]
\centering
\includegraphics[width=0.65\linewidth,  height= 4.5cm]{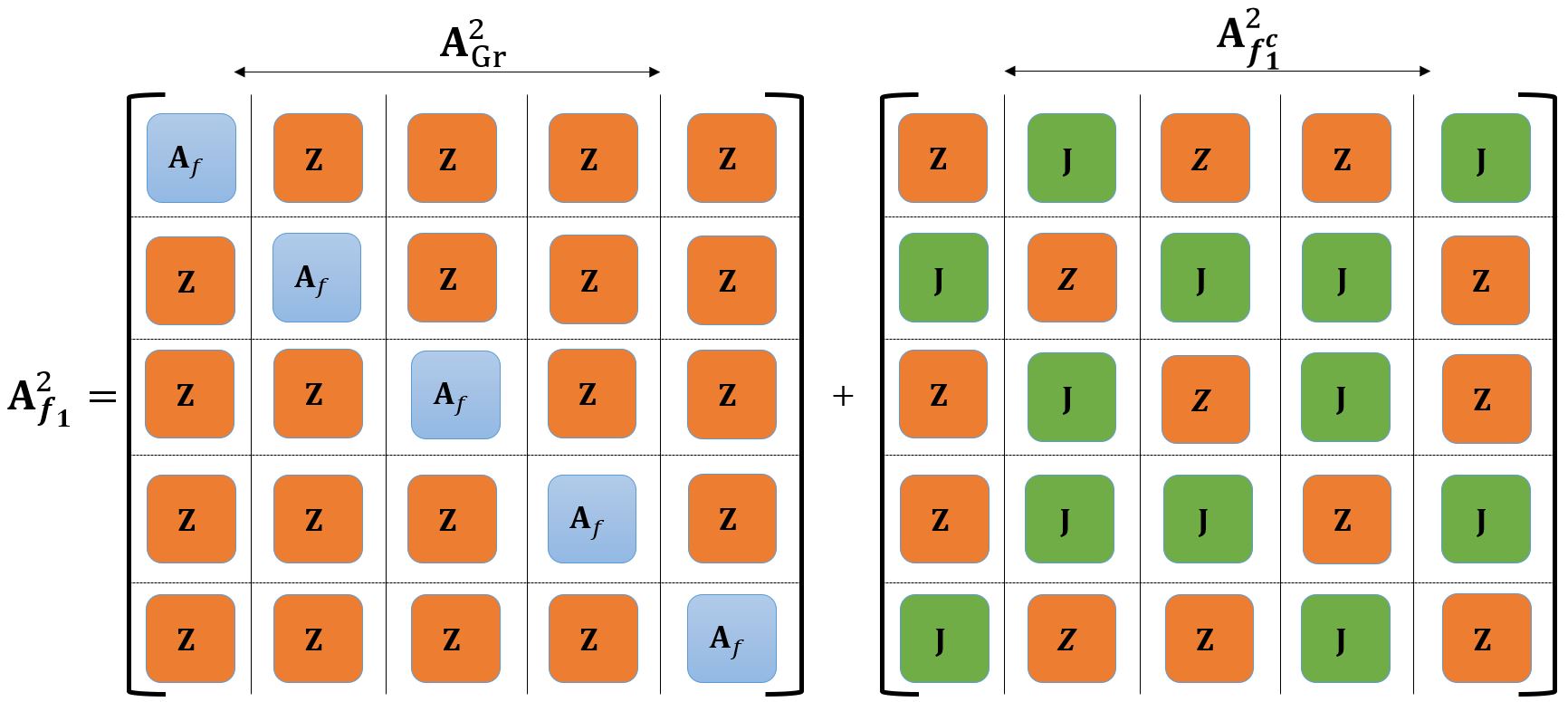}
\caption{Splitting the adjacency matrix ${\bf A}_{f_1}^2$ into two symmetric matrices ${\bf A}_{{Gr}}^2$ and ${\bf A}^2_{f_1^c}$.}
\label{fig:split-gresh}
\end{figure}
\end{exm}
%%%%%%%%%%%%%%%%%%%%%%%%%%%%

%%%%%%%%%%%%%%%%%%

%%%%%%%%%%%%%%%%%%%%%%%%
\section{Conclusion }
\label{sec:conclusion}

In this paper, we addressed the problem of distributed functional compression by introducing novel coloring-based encoding schemes for source characteristic graphs. We analyzed various graph topologies—cycles ($C_i$), $d$-regular graphs ($G_{d, V}$), and general graphs ($G$)—and their $n$-fold OR product realizations ($C_i^n$, $G_{d, V}^n$, and $G^n$), exploring the interplay between adjacency matrix eigenvalues and chromatic numbers to develop low-entropy colorings schemes and derive bounds on the compression rate for asymptotically lossless function compression.

For cycles, we derived bounds on the degrees of $C_i^n$ and proposed a recursive coloring scheme for $C^n_{2k+1}$, which computes valid colorings in polynomial time by leveraging their structural properties. We also investigated the relationship between the spectra of $C_i^n$ and the chromatic numbers $\chi({C_i^n})$ to establish bounds on the minimum entropy colorings. 

For $d$-regular graphs, we analyzed the degrees and eigenvalues for $G_{d, V}^n$. We also investigated the connection between the OR products of $G_{d, V}$ and graph expansion described by the spectral properties of $G_{d, V}$. This enabled us to derive upper and lower bounds on the expansion rates of general characteristic graphs, where the expansion rate reflects the connectivity of sub-graphs, where higher connectivity implies an increase in $H^{\chi}_{G^n}({\bf X}_1)$.

For general characteristic graph topologies $G$, we derived bounds on the degrees and eigenvalues of $G^n$ using the block matrix representation of the adjacency matrix of $G^n$, denoted by ${\bf A}_f^n$, in order to achieve a reduced computational complexity for deriving the eigenvalues $\lambda_k({\bf A}_f^n)$ for $k\in [V^n]$ compared to the QR method~\cite{francis1961qr, watkins1982understanding}. By leveraging the GCT approach, we provided upper and lower bounds on $\lambda({\bf A}_f^n)$ and compared these with the iterative ${\bf A}_f^n$ decomposition method (see Theorem~\ref{theo-eig-any-func} and Corollary~\ref{cor:iter-gresh-eigen}), which exploits the properties of eigenvalues of symmetric matrices to produce tighter bounds for $\lambda({\bf A}_f^n)$.

Future directions include examining connections between functional compression and other graph properties, such as diameter, graph decomposition, and graph complement, and devising lossy compression schemes. Our compression approach captures various characteristic graph topologies, which can be exploited to represent unions of source characteristic graphs and user demands in multicast and broadcast computation settings to establish fundamental lower bounds on communication costs. 
%%%%%%%%%%%%%%%%%%%%%%%%%%%%%%%

\section*{Acknowledgment} 
The authors acknowledge the constructive comments of Dr. Kizhakke and Mr. Tanha. 
%%%%%%%%%%%%%%%%%%%%%%%%%%%%%%%

%%%%%%%%%%%%%%%%%%%%%%%%%%%%%%

%%%%%%%%%%%%%%%%%%%%%%%%%%%%
\appendix
%%%%%%%%%%%%%%%%%%%%%%%%%%%%%%%%%%%%

\subsection{Proof of Proposition~\ref{prop-deg-cycles}}
\label{App:theo-deg-even-cycles}

Consider $C_{i}=G(\mathcal{V}, \mathcal{E})$, where for each $x\in[V]$, $deg(x)=2$ (see Definition~\ref{d-regular graph, and Cycle graphs}). The $2$-fold OR product is denoted by $C_{i}^{2}$, and it has $V$ sub-graphs $\{C^2_{i}(1),C_{i}^2(2),\dots,C_{i}^2(V)\}$. 
For a given sub-graph, say $C_i^2(l)$, $l\in[V]$, any ${x}^2(l)\in C_{i}^2(l)$, where ${x}^2(l)$ denotes a vertex in the $l$-th replica of $C^2_i$, is connected to any vertex of the adjacent sub-graphs, namely $\in \{C_{i}^2((l-1)\mod(V)), \: C^2_{i}((l+1)\mod(V))\}$. Therefore, each vertex in $C^2_i(l)$ has $V$ edges to $C^2_{i}((l-1)\mod(V))$ and $C^2_{i}((l+1) \mod(V))$ each. Accounting for the set of edges between adjacent sub-graphs, and each vertex's degree in the sub-graph itself, the degree of each vertex in $C_i^2$ is $deg({x}^2)=2+2V$. For $n=3$, this results in $deg({x}^3)=2+2V+2{V}^2$.

For a given $C_{i}^{n-1}(l)$, each vertex ${x}^{n-1}(l)$ is connected to all vertices in the adjacent sub-graphs, $C_{i}^{n-1}((l-1) \mod(V))$ and $C_{i}^{n-1}((l+1)\mod(V))$. Similarly, the degree for the $(n-1)$-fold OR product, by considering the previous OR product's degree and using induction, is going to be $deg({x}^{n-1}) = 2+2{V}+2{V}^2+\dots+2{V}^{n-2}=2+\sum_{j=1}^{n-2}2 ({V}^j)$. Therefore, for building $C_i^n$, using the same procedure, there are ${V}$ sub-graphs $\{C_{i}^{n}(1), C_{i}^{n}(2)$ $,\dots, C_{i}^{n}({V})\}$. Like $C_i^2$, the sub-graph $C_{i}^{n}(l)$, $l\in[{V}]$, is fully connected to the adjacent sub-graphs $\{C_{i}^{n}((l-1)\mod(V)),\: C_{i}^{n}((l+1)\mod(V))\}$. Therefore, for $C_i^n$, $deg({x}^{n})$ is calculated using (\ref{eq-degree_function}).

%%%%%%%%%%%%%%%%%%%%%%%%%%%%%%%%%%%%%%%%%%%%%%%%

%%%%%%%%%%%%%%%%%%%%%%%%%%%%%%%%%%%%%%%%%%%%%%%%%

%%%%%%%%%%%%%%%%%%%%%%%%%%%%%%%%%%%%%%%%%%%%%%%%%%

%%%%%%%%%%%%%%%%%%%%%%%%%%%%%%%%%%%%%%%%%%%%%
\subsection{Proof of Proposition \ref{prop-chromatic_number_cycles-odd}}
\label{App:chromatic_number_cycles}
Given a cycle $C_5$ with $\chi(C_5)=3$, for coloring $C_{5}^2$, $\chi(C_{5}^2)=8$, as shown in Figure~\ref{fig:2nd_power_C5}, saving one color compared to the greedy algorithm's $(\chi(C_5))^2=9$ colors. Instead of coloring the entire graph, our approach assigns coloring sets to sub-graphs, for finding an optimal coloring. The coloring sets for valid coloring of sub-graphs, namely $\mathcal{C}_i(l)$ for sub-graph $l$, ensure that no two neighboring sets share colors. To achieve valid colorings, vertices are divided into sub-graphs, each assigned a set of colors. For $C_5$, a valid coloring requires three distinct colors $\{c_1,c_2,c_3\}$.

The cardinality of the coloring set $\mathcal{C}^{j}_i(l)$ changes based on the chromatic number of the previous OR product $\chi({C_i^{j-1}})$. 
However, the number of coloring sets is always constant and equal to the number of vertices in $C_i$, for $i=V$. Given $C_5$, $5$ coloring sets are assigned to the sub-graphs. To cover the sub-graphs assigned to the first two color sets, the adjacent coloring sets cannot share the same colors.
Thus, $\{c_1,c_2,\dots,c_6\}$ are required in the first two coloring sets, namely $\mathcal{C}_{5}^{2}(1)$ and $\mathcal{C}_{5}^{2}(2)$.
Consequently, we can reuse the colors from the first set, since there is no edge between sub-graphs $C_{5}^{2}(1)$ and $C_{5}^{2}(3)$. Hence, by adding only $2$ colors to the third color set and using them in a cyclic manner, all vertices are colored with only $8$ colors, allowing us to express $\mathcal{C}_5^2(l)$ for $l\in[5]$ as: $\mathcal{C}_{5}^{2}(1)=\{{c_1,c_2,c_3}\}, \: \mathcal{C}_{5}^{2}(2)=\{{c_4,c_5,c_6}\}, \: \mathcal{C}_{5}^{2}(3)=\{ {c_7,c_8,c_1}\}$, $ \mathcal{C}_{5}^{2}(4)=\{ {c_2,c_3,c_4}\}, \: 
\mathcal{C}_{5}^{2}(5)=\{{c_5,c_6,c_7}\}$.    
For $C_5^3$, we have $|\mathcal{C}_5^3(l)|= 8$, and applying the same method yields $\chi(C_5^3) = 20$. Thus, $\mathcal{C}_5^3(l)$ for $l \in [5]$ can be expressed as:
\begin{align*}
 \mathcal{C}_{5}^{3}(1)&=\{{c_1,c_2,c_3,c_4,c_5,c_6,c_7,c_8}\}, \quad      
 \mathcal{C}_{5}^{3}(2)=\{{c_{9},c_{10},c_{11},c_{12},c_{13},c_{14},c_{15},c_{16}}\},              \\
\mathcal{C}_{5}^{3}(3)&=\{{c_{17},c_{18},c_{19},c_{20},c_1,c_2,c_3,c_4}\},     \quad
 \mathcal{C}_{5}^{3}(4)=\{{c_5,c_6,c_7,c_8,c_9,c_{10},c_{11},c_{12}}\}, \\
\mathcal{C}_{5}^{3}(5)&=\{{c_{13}},c_{14},c_{15},c_{16},c_{17},c_{18},c_{19},c_{20}\}.     
\end{align*} 
Using induction we show that $\chi({C_{5}^{1}})=3$, $\chi({C_{5}^{2}})=8$, $\chi({C_{5}^{3}})=20$, $\chi({C_{5}^{4}})=50$, $\chi({C_{5}^{5}})=125$,  $\chi({C_{5}^{6}})=313$, and so on. Following this pattern for $n+1$, we reach $\chi({C_{i}^{n+1}})=2\chi({C_{i}^{n}})  +\left\lceil{\frac{\chi({C_{i}^{n}})}{2}}\right\rceil$.

%%%%%%%%%%%%%%%%%%%%%%%%%%%%%%%%%%%%%%%%%%%%%%%%

%%%%%%%%%%%%%%%%%%%%%%%%%%%%%%%%%%%%%%%%%%%%%%%%
\subsection{Proof of Proposition~\ref{prop_gain_over_greedy}}
\label{App:prop_gain_over_greedy}
From (\ref{eq-chromatic_number_cycles}) in Proposition~\ref{prop-chromatic_number_cycles-odd}, for $C_{2k+1}^n$ with $k\geq 2$, we infer that
\begin{align}
\label{eq-choromatic-cyclic-lower-upper-bound}
2\chi({C_{2k+1}^{n-1}})+\frac{\chi({C_{2k+1}^{n-1}})}{2}\leq\chi({C_{2k+1}^n}) \leq 2\chi({C_{2k+1}^{n-1}})+\frac{\chi({C_{2k+1}^{n-1}})}{2}+1 \: .\
\end{align}

Exploiting the number of colors needed for coloring $C_{2k+1}^n$ using greedy algorithm, which is given as $\big(\chi({C_{2k+1}})\big)^n=3^n$, and (\ref{eq-choromatic-cyclic-lower-upper-bound}), the gain, $\eta_n$, lies in the following interval:
\begin{align}
\label{eq-gain-chormatic-recuersive}
    \frac{3^n}{2\chi({C_{2k+1}^{n-1})}+\frac{\chi({C_{2k+1}^{n-1}})}{2}+1}\leq\eta_n\leq \frac{3^n}{2\chi({C_{2k+1}^{n-1}})+\frac{\chi({C_{2k+1}^{n-1}})}{2}}  \:\:.\ 
\end{align}

For $n=1$, we have $\chi({C_{2k+1}}) =\chi_{\text{Greedy}}({C_{2k+1}})=3=\chi_1$. 
For $n\geq 2$, $\chi_{\text{Greedy}}({C_{2k+1}^n})=3^{n-1} \chi_1 $, whereas our cyclic approach yields
\begin{align}
\label{eq-UB-chromatic-ours}
\chi({C_{2k+1}^n})\geq\Big(\frac{5}{2}\Big)^{n-1}\chi_1 \ , \quad n\geq 2\: .\    
\end{align}

Substituting $\chi({C_{2k+1}^{n-1}})\geq \left(\frac{5}{2}\right)^{n-2}\chi_1$ in the denominator of the upper bound in (\ref{eq-gain-chormatic-recuersive}), where the ratio of the nominator to the denominator leads to $\eta_n \leq \frac{3^{n-1}}{(\frac{5}{2})^{n-1}}$. Furthermore, using (\ref{eq-UB-chromatic-ours}), the denominator of the lower bound in (\ref{eq-gain-chormatic-recuersive}) is lower bounded as: 
\begin{align}
\label{Eq-LB-proofchromatic}
    2\chi({C_{2k+1}^{n-1}})+\frac{\chi({C_{2k+1}^{n-1}})}{2}+1 &\geq 2\Big(\frac{5}{2}\Big)^{n-2}\chi_1+ \frac{1}{2} \Big(\frac{5}{2}\Big)^{n-2}\chi_1+1=\Big(\frac{5}{2}\Big)^{n-1}\chi_1+1 \ ,
\end{align}
and from (\ref{Eq-LB-proofchromatic}),
$\eta_n\geq \frac{3^{n-1} \chi_1}{\left(\frac{5}{2}\right)^{n-1} \chi_1 + 1}$, and as $n \to \infty$, the lower and upper bounds match.   
%%%%%%%%%%%%%%%%%%%%%%%%%%%%%%%%%%%%%%% 
%%%%%%%%%%%%%%%%%%%%%%%%%%%%%%%%%%%%%%% 

%%%%%%%%%%%%%%%%%%%%%%%%%%%%%%%%%%%%%%%%
%%%%%%%%%%%%%%%%%%%%%%%%%%%%%%%%%%%%%% 
%Connected to the knapsack problem for covering vertex sets with different size groups
%%%%%%%%%%%%%%%%%%%%%%%%%%%%%%%%%%%%%%%%%%%%%%%%%%%%%
\subsection{Proof of Proposition~\ref{prop-upper_entrop-cycle}}
\label{App:corr_Entropy_bound_odd}

By using the recursive coloring approach for $\chi({C_{2k+1}^n})$ in (\ref{eq-chromatic_number_cycles}) and calculating the coloring PMFs assigned to MISs, we derive the upper bound for $H_{C_{2k+1}}({X}_1)$. The cardinality of MISs for $C_{2k+1}$ is $|\text{MIS}_{C_{2k+1}}|=k$. Given the full connection of sub-graphs in the $n$-fold OR product, we deduce that in $C_{2k+1}^n$, $|\text{MIS}_{C_{2k+1}^n}|=k^n$. To derive the upper bound, we apply constraints on $\alpha_t$, representing the coefficients for coloring MISs and calculating $\min_{\mathcal{C}_{C_i^n}} H(\mathcal{C}_{C_i^n})$. The chromatic entropy is bounded by determining $|\text{MIS}_{C_{2k+1}^n}|$ for each sub-graph in $C_{2k+1}^j(l)$, where $j \in [n]$ and $l \in [V]$, while satisfying the valid coloring constraints described as follows.
\begin{align}
\label{eq-proof-prop-ent-constraints-cycles}
    (a)&\quad \Big(\sum_{t=1}^{n} \alpha_t\cdot\frac{k^t}{V^n}\Big)+\alpha_0\cdot\frac{1}{V^n}=1,  \:\:   
 (b)\quad \alpha_t\geq k\cdot \alpha_{t-1} \ , \quad (c)\quad  \alpha_{0}=1, 
\end{align}
$(a)$ illustrates the PMF %distribution
of colors assigned to MISs. In each sub-graph, $|\text{MIS}_{C_{2k+1}^j}|$ represents the maximum independent set size of $C_{2k+1}^j(l)$ for $l \in [V]$. $(b)$ ensures larger MISs are assigned more colors to minimize the total number of colors used. $(c)$ indicates $|\text{MIS}_{C_{2k+1}^0}|$ refers to the MIS of a singleton, representing a color assigned to a single node. Since, when $V=2k+1$ is odd, the sum of cardinalities of independent sets in $C_{2k+1}^j$ requires an odd object. Then, by incorporating constraints described in (\ref{eq-proof-prop-ent-constraints-cycles})-$(a)$, and $(b)$, we have 
\begin{align}
\label{eq_proof_upper_bound_polynomial}
\alpha_{n} \cdot \Big(\frac{k^n}{(2k+1)^n}\Big) + \alpha_{n-1}\cdot \Big(\frac{k^{(n-1)}}{(2k+1)^{n}}\Big)+ \cdots + \alpha_0 \cdot\Big( \frac{1}{(2k+1)^n}\Big) \ge 1 \ .   
\end{align}

Using (\ref{eq_proof_upper_bound_polynomial}), and properties of geometric series, we obtain
\begin{align}
\label{eq-proof-cor-up}    
1=\sum_{t=0}^n \frac{\alpha_t k^t}{(2k+1)^n}< \frac{\alpha_n}{(2k+1)^n}\cdot \sum_{t=0}^{n} \frac{k^t}{k^{n-t}} = \frac{\alpha_n}{(2k+1)^n}\cdot\frac{1}{k^n} \cdot \frac{(k^2)^{(n+1)}-1}{k^2-1} \ ,
\end{align} 
where the final equality follows from computing the summation, %the summation in the RHS of (\ref{eq-proof-cor-up}) can be further simplified to
% \begin{align}
% \label{eq-proof-cor-up_*}    
%  \frac{1}{k^n}\sum_{t=0}^n k^{2t}= \frac{1}{k^n} \cdot \frac{(k^2)^{(n+1)}-1}{k^2-1} \: .\ 
% \end{align}
which leads to $\alpha_n> \frac{(2k+1)^n k^n(k^2-1)}{k^{2(n+1)}-1}$. Next, to derive the upper bound on $\alpha_n$, we assign $\alpha_1=0$ and maximize $\alpha_n$. From (\ref{eq-proof-cor-up}), we have 
\begin{align}
\label{eq-proof-cor-up_up_alpha_n}    
1=\sum_{t=0}^n \frac{\alpha_t k^t}{(2k+1)^n}>\frac{\alpha_n}{(2k+1)^n} \cdot \sum_{t=0}^{n} \frac{k^t}{k^{n-t}}  \ , 
\end{align}
where reordering (\ref{eq-proof-cor-up_up_alpha_n}), we have the following  upper bound on $\alpha_n$:
\begin{align}
\label{proof_upper_bound_alpha_up}
 \alpha_n< \frac{(k^{2(n+1)}-1)\cdot(2k+1)^n \cdot k^{(n-1)}}{(2k+1)^{2n}\cdot(k^2-1)\cdot k^{(n-1)}-(k^{2(n+1)}-1)} \ .
\end{align}

Similarly, using (\ref{proof_upper_bound_alpha_up}) we can bound $\alpha_t$ where $t< n$, which concludes our proof.
%%%%%%%%%%%%%%%%%%%%%%%%%%%%%%%
%for large $n$, (\ref{proof_upper_bound_alpha_up}) can be simplified to $\left(\frac{k^2}{2k+1}\right)^n$.
%%%%%%%%%%%%%%%%%%%%%%%%%%%

%%%%%%%%%%%%%%%%%%%%%%%%%%%%%%%%%%%%%%%%%%%%%%%%%%%%%%%

%%%%%%%%%%%%%%%%%%%%%%%%%%%%%%%%%%%%%%%%%%%%%%%%%%%%%%%

%%%%%%%%%%%%%%%%%%%%%%%%%%%%%%%%%%%%%%%%%%%%%%%%%%%%%%%
 \subsection{Proof of Theorem \ref{cyclic_adjacency}}
\label{App:cyclic_adjacency}
Assume ${\bf A}_f$ has $V$ eigenvalues. To find the eigenvalues of ${\bf A}_f^2$, one must solve (\ref{eq-row-cycle-matrix-id}), derived from evaluating (\ref{Af2_eigenvalues}), yielding $V$ equations per block row and $V^2$ equations in total. Hence, there are $V^2$ equations, from which the remaining $(V-1)\times V$ equations are just replicas of the eigenvalues of any given block row, due to cyclic symmetry. In (\ref{eq-row-cycle-matrix-id}), the terms with indices $k+1$ and $k-1$ are in modulo $V$. Two additional equations are necessary to calculate $\lambda_k({\bf A}_f^j)$ where $k\in [V^j]$, as the power $j$ increases from $n-1$ to $n$. Both ${\bf A}_f$ and ${\bf J}_V$ are diagonalizable, i.e., there exists an invertible matrix ${\bf P}$ such that ${\bf A}_f={\bf P}^{-1}{\bf H}_{{\bf A}_f}{\bf P}$ and ${\bf J}_V={\bf P}^{-1}{\bf H}_{{\bf J}_V}{\bf P}$, where ${\bf H}_{{\bf A}_f}$ and ${\bf H}_{{\bf J}_V}$ are diagonal matrices. Hence, the sum ${\bf A}_f+{\bf J}_V$ satisfies 
\begin{align}
{\bf A}_f+{\bf J}_V={\bf P}^{-1}{\bf H}_{{\bf A}_f}{\bf P}+{\bf P}^{-1}{\bf H}_{{\bf J}_V}{\bf P}={\bf P}^{-1}({\bf H}_{{\bf A}_f}+{\bf H}_{{\bf J}_V}){\bf P} \ .\nonumber
\end{align}

The eigenvalues of ${\bf A}_f+{\bf J}_V$ can now be computed using the diagonal of ${\bf H}_{{\bf A}_f}+{\bf H}_{{\bf J}_V}$, where $\lambda_k({{\bf A}_f})$ and $\lambda_k({\bf J}_V)$ represent the eigenvalue of ${\bf A}_f$ and ${\bf J}_V$ matrices for $k\in [V]$, respectively.
\begin{align}
\label{eq:eigen_larg}
\lambda_k ({\bf A}_{f}+{\bf J}_V)=\lambda_k({{\bf A}_f})+\lambda_k({\bf J}_V), \quad k\in [V]  \ .
\end{align}

From Lemma \ref{eigenvalue_dependency}, the number of distinct eigenvalues of ${\bf A}_{f}^{2}$ differs from the eigenvalues of ${\bf A}_f$ at most by two, and similarly for the eigenvalues of ${\bf A}_f^j$, $j\in \mathbb{Z}^{+\ge 2}$ derived from ${\bf A}_f^{j-1}$ of the $(j-1)$-fold OR product graph. 
From (\ref{eq-AdjacencyMatrix}), the sub-matrices ${\bf A}_{f}^{j-1}$ in ${\bf A}_f^j$ take the following form 
\begin{align}
\label{adjacency_cyclic_relation}
{\bf A}_f^{j-1} {\bf v}_k+ {\bf J}_{V^{j-1}} {\bf v}_{k+1}+ {\bf J}_{V^{j-1}} {\bf v}_{k-1}=\nu {\bf v}_k,\quad k\in[V]\ , 
\end{align}
where $j \in {\mathbb{Z}^{+\ge2}}$, and the column vector ${\bf v}_k$ has dimensions $V^{j-1} \times 1$. %The adjacency matrix 
For the $n$-fold OR product, ${\bf A}_f^n$, is constructed by combining ${\bf A}_f^{n-1}$, ${\bf J}_{V^{n-1}}$ from the $(n-1)$-fold OR product, and ${\bf Z}_{V^{n-1}}$ matrices, as in (\ref{eq-AdjacencyMatrix}). 
The first block row of ${\bf A}_f^n$ includes ${\bf A}_f^{n-1}$, two ${\bf J}_{V^{n-1}}$ matrices representing full connections to adjacent sub-graphs, and ${\bf Z}_{V^{n-1}}$ matrices indicating no connections, all of which affect eigenvalue computation. Thus, $C_i^n$ has two more distinct eigenvalues than $C_i^{n-1}$. 

%As a result, using Lemma \ref{eigenvalue_dependency}, we can prove Theorem \ref{cyclic_adjacency}.
   
%%%%%%%%%%%%%%%%%%%%%%%%%%%%%%%%%%%%%

%%%%%%%%%%%%%%%%%%%%%%%%%%%%%%%%%%%%%%%%%

%%%%%%%%%%%%%%%%%%%%%%%%%%%%%%%%%%%%%%%%%
%\subsection{Proof of Theorem \ref{theo-largest-eigenvalue}}
%\label{App:largest-eigencalue}
%%%%%%%%%%%%%%%%%%%%%%%%%%%%%%%%%%%%%%%

%%%%%%%%%%%%%%%%%%%%%%%%%%%%%%

%%%%%%%%%%%%%%%%%%%%%%%%%%%%%%.
\subsection{Proof of Proposition~\ref{prop-deg-regular}}
\label{App:prop-deg-regular}

Given $G_{d,V}=G(\mathcal{V},\mathcal{E})$, its $2$-fold OR product is denoted by $G_{d,V}^2$, consisting of $V$ sub-graphs, $G_{d,V}^2=\{G^2_{d,V}(1),$ $G^2_{d,V}(2),$ $\dots, G_{d,V}^2(V)\}$. For a given sub-graph $G^2_{d,V}(l)$, where $l\in[V]$, any vertex ${x}^2(l)$ is connected to vertices in $d$ adjacent sub-graphs between following indices $\left( l-\lceil\frac{d}{2}\rceil,  l+\lceil\frac{d}{2}\rceil\right)$, i.e., $\in\{G^2_{d,V}((l-\lceil\frac{d}{2}\rceil)\mod(V)), \dots,$ $G^2_{d,V}((l-1)\mod(V)), G^2_{d,V}((l+1) \mod(V)), \dots, G^2_{d,V}((l+\lceil\frac{d}{2}\rceil) \mod(V))\}$. Thus, each vertex in $G^2_{d, V}(l)$ has $d \times V$ edges to these adjacent sub-graphs. For $n=2$, $deg({x}^2)=d +dV$, accounting for both edges within sub-graphs and edges to adjacent sub-graphs. For $n=3$, accounting for edges to adjacent sub-graphs and those within each sub-graph (i.e., $deg({x}^2)$) yields $deg({x}^3)= d+dV+dV^2$. 
For a given $G_{d,V}^{n-1}(l)$, each vertex ${x}^{n-1}(l)$ is connected to all vertices in the adjacent sub-graphs $\{G^{n-1}_{d,V}((l-\lceil\frac{d}{2}\rceil)\mod(V)), \dots,G^{n-1}_{d,V}((l-1)\mod(V)), G^{n-1}_{d,V}((l+1)\mod(V)), \dots$ $, G^{n-1}_{d,V}((l+\lceil\frac{d}{2}\rceil)\mod(V))\}$. Similarly, for the $(n-1)$-fold OR product, $deg({x}^{n-1})$ is calculated by induction, starting from the degree in the previous product, which yields $deg({x}^{n-1})= d+dV+d{V}^2+\dots+d{V}^{n-2}=d\cdot\sum_{j=0}^{n-2} ({V}^j)$. Therefore, for $G_{d, V}^{n}$, there are ${V}$ sub-graphs, where $G_{d,V}^{n}(l)$, $l\in[{V}]$, is fully connected to the adjacent sub-graphs, and for ${x}^{n}\in [V^n]$, $deg({x}^n)= d+d\cdot \sum_{j=1}^{n-1} V^j$.
Thus, $deg({x}^n)$ is given in closed form by (\ref{eq-degree_function-regular}).
%%%%%%%%%%%%%%%%%%%%%%%%%%%%%%%%%%%%%%%%%%%%%% 

%%%%%%%%%%%%%%%%%%%%%%%%%%%%%%%%%%%%%
\subsection{Proof of Proposition~\ref{prop-chrom-d-regular}}
\label{app:proof-theo-chrom-d-regular}
 For any graph $G(\mathcal{V}, \mathcal{E})$ where $\max_{k \in [V]} deg(x_k) \geq 3$ and no complete sub-graph exists of size $\max_{k \in [V]} deg(x_k) + 1$, $G$ can be colored with at most $\max_{k \in [V]} deg(x_k)$ colors, making it $\max_{k \in [V]} deg(x_k)$-colorable~\cite{lovasz1975three}. For any $G_{d, V}$, $\chi({G_{d, V}})\geq\omega(G_{d, V})$, where $\omega(G_{d, V})$ represents the clique number, i.e., the size of the largest complete sub-graph (clique) in $G_{d, V}$~\cite{lovasz1975ratio}.

Upper and lower bounds on $\chi(G_{d, V})$ hold for ${d, V} \in \mathbb{Z}^+$ with $V\geq d+1$. From Proposition~\ref{prop-deg-regular}, $G_{d, V}^n$ is a regular graph. In $G_{d,V}^2$, a clique of size $d$ exists among $d$ adjacent sub-graphs, leading to $\chi({G_{d, V}^2})=d^2$, as at least $d$ colors are needed per sub-graph. Similarly, in the $j$-fold OR product, each sub-graph connects to $d$ others, forming a $d^j$ clique, resulting in $\chi({G_{d,V}^n})= d^n$. 
%%%%%%%%%%%%%%%%%%%%%%%%%%%%%%%%%%%%%%%%%%%%%%%%
% There is also a long version of this proof in my other file
%%%%%%%%%%%%%%%%%%%%%%%%%%%%%%%%

%%%%%%%%%%%%%%%%%%%%%%%%%%%%%%%%%%%%%%
%%%%%%%%%%%%%%%%%%%%%%%%%%%%%%%%%%%%%%%%%%%%%%%%
\begin{comment}
  \subsection{Proof of Proposition~\ref{prop-expansion-G_dv}}
\label{app:proof-expansion} 
\end{comment}

%%%%%%%%%%%%%%%%%%%%%%%%%%%%%%%%%%%%%%%%%%%%

%%%%%%%%%%%%%%%%%%%%%%%%%%%%%%%%%%%%%%%%%%%%%%%%%%

%%%%%%%%%%%%%%%%%%%%%%%%%%%%%%%%%%%%%%%%%%%%%%%%%

%%%%%%%%%%%%%%%%%
%%%%%%%%%
\subsection{Proof of Corollary~\ref{cor-expansion_upper_bound_and _lower_bound}}
\label{app:proof_expansion_upper_bound_and_lower_bound}  
For the upper bound in (\ref{eq_expansion_upper_bound_and _lower_bound}), note that for $K_i$ where $i=V$, the largest eigenvalue is $V-1$ with multiplicity 1, and $-1$ has multiplicity $V-1$ \cite{qi2020characterization}. Thus, $\Lambda(K_i^n) = \max (\lambda_2({\bf A}_f^n), |\lambda_{V^n}({\bf A}_f^n)|) = 1$, and using (\ref{eq:expansion}), we obtain the RHS of (\ref{eq_expansion_upper_bound_and _lower_bound}). Subsequently, for the lower bound in (\ref{eq_expansion_upper_bound_and _lower_bound}), we adjust (\ref{eq:expansion}) for $C_i^n$, and determine $\expan(C_i^n)$. For computing $\Lambda(C_i^n)$, we note that $C_i^n$ is a $d$-regular graph as specified in Proposition~\ref{prop-deg-cycles}. Therefore, the eigenvalues satisfy $\{\lambda_1 > \lambda_2 > \cdots > \lambda_{V^n}$\}. Noting that the diagonal entries of ${\bf A}_f^n$ are zero allows us to deduce that $trace({\bf A}_f^n) = 0$. Given that for $C_i^n$, $\lambda_1({\bf A}_f^n)$ is positive and has the largest cardinality, $\lambda_{V^n}({\bf A}_f^n)$ must be approximately equal in magnitude (negative value) and greater in value than $\lambda_2({\bf A}_f^n)$ so that the trace becomes zero, as numerically demonstrated for $C_i^n$ in Section~\ref{sec:choromatic-using-eigenvalue-cycles}, which indicates the cardinality of $\lambda_k$ follows the order $\{d \geq |\lambda_1| > |\lambda_{V^n}|>|\lambda_2|>\cdots\}$. Therefore, by replacing $\Lambda(C_i^n)$ where $\max (\lambda_2({\bf A}_f^n), |\lambda_{V^n}({\bf A}_f^n)|)=|\lambda_{V^n}({\bf A}_f^n)|$, in the denominator of (\ref{eq:expansion}), we reach (\ref{eq_expansion_upper_bound_and _lower_bound}).
%%%%%%%%%%%%%%%%%%%%%%%%%%%%%%%%%%%%%%%%%%%%%

%%%%%%%%%%%%%%%%%%%%%%%%%%%%%%%%%%%%%%%%%%%%
\subsection{Proof of Proposition~\ref{prop_entropy_odd_cycles_MIS}}
\label{App:Proof-of-Proposition_prop_entropy_odd_cycles_MIS}
We define the cardinality of MISs for the $n$-fold OR product graph $G^n$, $|\text{MIS}_{G^n}|$, as follows:
\begin{align}
\label{eq-proof-prop-general-entropy}
  |\text{MIS}_{G^n}|=(|\text{MIS}_{G}|)^n \ .
\end{align}

(\ref{eq-proof-prop-general-entropy}) follows from, given $G$ with MISs of size $|\text{MIS}_{G}|$, the $2$-fold OR product contains ISs of size $|\text{MIS}_{G}|$ in each sub-graph, giving $|\text{MIS}_G|^2$ in total, and similarly for higher-order products. Using (\ref{eq-proof-prop-general-entropy}) and (\ref{eq-graph_entropy-korner}), we derive an upper bound on $H_G(X_1)$ given $V^n$ vertices. As in Proposition~\ref{prop_entropy_odd_cycles_MIS}, we apply constraints on $\alpha_t$ for coloring MISs and compute $H(\mathcal{C}_{G^n})$, which is bounded by evaluating $|\text{MIS}_{G^j}|$ for each $j$-fold OR product, $j \in [n]$, under valid coloring constraints. 
 \begin{align}
 \label{eq-proof-prop-ent-constraints}
(a)\: \left(\sum_{t=1}^{n} \alpha_t\cdot\frac{|\text{MIS}_{G}|^t}{V^n}\right)+\alpha_0\cdot\frac{|\text{MIS}_{G^0}|}{V^n}=1, \: \: 
 (b)\: \: \alpha_t\geq |\text{MIS}_{G}|\cdot \alpha_{t-1} \ ,  \: (c)\: \:  \alpha_{0}=1, 
\end{align}
$(a)$ illustrates the coloring PMFs assigned to MISs, where $|\text{MIS}_{G^n}|=|\text{MIS}_{G}|^n$ denotes the size of MISs within sub-graphs of $G^n$. In each sub-graph, $|\text{MIS}_{G^j}|$ represents the maximum independent set size of $G^j(l)$ for $l\in [V]$. Constraints $(b)$ and $(c)$ are identical to those of Proposition~\ref{prop_entropy_odd_cycles_MIS}. By assigning colors to MISs and covering all nodes using $\alpha_t$, we derive a coloring PMF for MISs with cardinality $|\text{MIS}_{G^n}|$, yielding an achievable chromatic entropy bound for any $G$. As $G^j$ changes with the power $j$, the MIS size also changes. For the upper bound, we identify a set of $\alpha_t$ satisfying the minimum entropy coloring \cite{korner1979encode} under the constraint $\alpha_t > \alpha_{t-1}$,
\begin{align}   
\label{eq_MIS_based_entropy_odd_cycle}
H\Big(\alpha_{n}\cdot \big(\frac{|\text{MIS}_{G^n}|}{V^n}\big), \alpha_{n-1}\cdot \big(\frac{|\text{MIS}_{G^{(n-1)}}|}{V^{n}}\big), \cdots, \alpha_0 \cdot\frac{1}{V^n}\Big)  \ .
\end{align}
 
Subsequently, we use (\ref{eq-chorm_korner-cycle}) and normalize the entropy calculated in (\ref{eq_MIS_based_entropy_odd_cycle}) for the $n$-fold OR product by a factor of $\frac{1}{n}$, and use (\ref{eq-proof-prop-general-entropy}), which proves the statement of the proposition.
%%%%%%%%%%%%%%%%%%%%%%%%%%%%%%%%%%%%%%%%%%%%%%%%%%%%

%%%%%%%%%%%%%%%%%%%%%%%%%%%%%%%%%%%%%%%%%%%%%%%%

%%%%%%%%%%%%%%%%%%%%%%%%%%%%%%%%%%%%%%%%%%%%%%%%%

%%%%%%%%%%%%%%%%%%%%%%%%%%%%%%%%%%%%%%%%%%%%%%%%%%
\subsection{Proof of Corollary~\ref{cor-conv-graph-fraction}}
\label{App:proof-theo-conv-graph-fraction}
For any connected $G(\mathcal{V}, \mathcal{E})$ with $|\mathcal{V}| = 2k+1$, we have $\chi(C_{2k+1}) \leq \chi(G)$, so $\chi(C_{2k+1}^n)$ gives a lower bound on $\chi(G^n)$.
Therefore, we first calculate $\chi_f({C_{2k+1}^n})$, %(see Definition~\ref{def-fractional-coloring})
then use it to derive a lower bound on $H_{C_{2k+1}}({X}_1)$ and subsequently on $H_{G}({X}_1)$. Using $a:b$ coloring for $C_{2k+1}$, we show that at least 
$2b+1$ colors are needed and claim that the infimum of $\chi_f({C_{2k+1}^n})$ occurs at $b=k$. We prove $\chi_b({C_{2k+1}})=2b+1$, by contradiction: Assume that $\chi_b(C_{2k+1})\leq 2b$ and assign each vertex $i\in [2k+1]$ a coloring set $\mathcal{C}_i$ of size $b$. For $i\in [2k]$, the coloring sets are defined as $\mathcal{C}_1 = [b]$, $\mathcal{C}_2=[b+1, 2b]$, $\dots$, and $\mathcal{C}_{2k} =[b]$. Using $2b$ colors recursively over the vertices, disjoint color sets are assigned to each node up to the last node $x_{2k+1}$. However, reusing the previous $2b$ colors for $\mathcal{C}_{2k+1}$ is impossible, leading to a contradiction of the initial assumption. Next from (\ref{eq-proof-theorem-converse-fractional-number}), we show that $\frac{2k+1}{k} \leq \chi_f({C_{2k+1}})$ holds. From \cite{lovasz1975ratio}, the lower bound on $\chi_f({G})$ is given by $\chi_f({G})\geq \frac{V}{|\text{MIS}_G|}$, which by applying $G=C_{2k+1}$, proves the assumption that $b=k$. 

To compute $H^{\chi}_{{C^n_{2k+1}}}({\bf X}_1)$, we use the bound $H_G^f(X_1) \leq H_G(X_1)$ from~\cite[Lemma 1]{malak2022fractional}, and leveraging their coloring PMF in~\cite[Proposition 2]{malak2022fractional}, yielding as $n \to \infty$:
\begin{align}
\label{eq-coloring-frac-derya}
H^f_{G}(X_1)=\lim_{n\to \infty} \frac{1}{n} \inf_b \frac{1}{b} \min_{\mathcal{C}^f_{C_{i}^n}} H(\mathcal{C}^f_{C_{i}^n}) \ .
\end{align}

Given a uniform ${C_{i}^n}$, the entropy is simplified and calculated as $H^{\chi_f}_{{C^n_{2k+1}}}({\bf X}_1)= \log_2 |\chi_f({C_{2k+1}^n})|$. From (\ref{eq-proof-theorem-converse-fractional-number}), the graph entropy of $C_{2k+1}^n$, incorporating $\frac{1}{n}$ normalization for the $n$-th realization, is:
\begin{align}
 \label{proof-theorem-frac-lower}
\frac{1}{n} H^{\chi}_{{C^n_{2k+1}}}({\bf X}_1)=\frac{1}{n}\cdot\log_2 \left(\frac{2b+1}{b} \right)^n=\log_2 \left( \frac{2b+1}{b} \right) \ ,
\end{align}
where the infimum of $b$ is $k$ in (\ref{proof-theorem-frac-lower}), this captures $H_{C_{2k+1}}({X}_1)$ and validating (\ref{eq-theo-conv-graph-fraction}) for any $G$.
%%%%%%%%%%%%%%%%%%%%%%%%%%%%%%%%%%%%%%%

%%%%%%%%%%%%%%%%%%%%%%%%%%%%%%%%%%%%%%%%%%%%
\subsection{Proof of Theorem~\ref{theo-eig-any-func}}
\label{app:proof-eig-any-func}

By Lemma~\ref{eigenvalue_dependency}, the eigenvalues of the sum of two symmetric matrices equal the sums of their respective eigenvalues.
To calculate $\lambda_k({\bf A}_f^n)$ for $k\in [V^n]$, we express ${\bf A}_f^n$ as the sum of two symmetric matrices and determine the eigenvalues by summing those of the individual matrices. For $n=1$, the eigenvalues $\lambda_k({\bf A}_f)$ can be calculated using GCT or QR decomposition methods. For $n=2$, the adjacency matrix ${\bf A}_f^2$ is partitioned into ${\bf A}_f^2={\bf A}_{Gr}^2 + {\bf A}_{f^{\mathsf{c}}}^2$, where ${\bf A}^2_{Gr}$ is a diagonal block with ${\bf A}_f$ on the diagonals and ${\bf Z}_V$ elsewhere. The term ${\bf A}_{f^{\mathsf{c}}}^2 = {\bf A}_f^2 - {\bf A}_{Gr}^2$ is constructed with all-zero matrices ${\bf Z}_V$ on the diagonal and ${\bf J}_V$ or ${\bf Z}_V$ on the off-diagonal blocks based on the sub-graphs connections. This decomposition allows:
\begin{align}
\label{eq-Geresh_proof-sec-eig}
   \lambda_k({\bf A}_f^2)= \lambda_k({\bf A}_{Gr}^2) + \lambda_k({\bf A}_{f^{\mathsf{c}}}^2), \quad k\in [V^2].
\end{align}

Assume that the decomposition holds for $n=j$, ${\bf A}_f^j={\bf A}_{Gr}^j+{\bf A}_{f^{\mathsf{c}}}^j$ and $\lambda_k({\bf A}_f^j) = \lambda_k({\bf A}_{Gr}^j) +\lambda_k({\bf A}_{f^{\mathsf{c}}}^j)$. For $n = j+1$, decompose ${\bf A}_f^{j+1}$ as: ${\bf A}_f^{j+1} = {\bf A}_{Gr}^{j+1} + {\bf A}_{f^{\mathsf{c}}}^{j+1}$. Here, ${\bf A}_{Gr}^{j+1}$ is block diagonal with ${\bf A}_f^j$ on the diagonal blocks, and ${\bf A}_{f^{\mathsf{c}}}^{j+1}$ contains off-diagonal terms derived from ${\bf J}_{V^{j+1}}$ and ${\bf Z}_{V^{j+1}}$. Thus, by induction, $\lambda_k({\bf A}_f^{n-1})$ is computed using $\delta_k$ for using GCT according to (\ref{eq-interval-gresh}), and the decomposition and eigenvalue sum hold for all $n \geq 2$ as shown in (\ref{eig-any-func-eq}).

%%%%%%%%%%%%%%%%%%%%%%%%%%%%%%%%%%%%%%%%%%%%%%%%%%%%%%%%%%%%%%%%%%%%%%%%%%%%%%%%%%%%%%%%%%%%%%%%%%%%%%%%%%%%%%%%%%%%%%%%%%%%%%%%%%%%%%%%%%%%%%%%%%%%%%%%%%%%%%%%

\subsection{Proof of Corollary~\ref{cor:iter-gresh-eigen}}
\label{App:cor-iter-gresh-eig-general-nth} 
Consider a block diagonal matrix (e.g., ${\bf A}^j_{Gr}\in \mathbb{F}_2^{V^j\times V^j}$, $j\in[n]$), where each diagonal block is identical to ${\bf A}^{j-1}_{f}\in \mathbb{F}_2^{V^{j-1}\times V^{j-1}}$, and all off-diagonal entries are zero. The eigenvalues of ${\bf A}^j_{Gr}$, $j\in[n]$ is equal to $\lambda_k({\bf A}^{j-1}_{f})$, $k\in[V^{j-1}]$, but the algebraic multiplicity of each eigenvalue is scaled by the number of times the block ${\bf A}^{j-1}_{f}$ appears on the diagonal.  
Thus, for the $2$-fold OR product, we have $\lambda_k({\bf A}_f^2) = \lambda_k({\bf A}^2_{Gr}) + \lambda_k({\bf A}_{f^{\mathsf{c}}}^2)$.
Furthermore, for $n=3$, we can substitute $\lambda_k({\bf A}^3_{Gr})$ with $\lambda_k({\bf A}^2_{Gr})+\lambda_k ({\bf A}_{f^{\mathsf{c}}}^2)$, as follows $\lambda_k({\bf A}_f^3)=\lambda_k({\bf A}^2_{Gr})+\lambda_k ({\bf A}_{f^{\mathsf{c}}}^2)+\lambda_k ({\bf A}_{f^{\mathsf{c}}}^3)$. 
Similarly, for $j$-fold OR product, $\lambda_k({\bf A}_f^j)=\lambda_k({\bf A}^2_{Gr})+\lambda_k ({\bf A}_{f^{\mathsf{c}}}^2)+\cdots + \lambda_k ({\bf A}_{f^{\mathsf{c}}}^{j-1})+\lambda_k ({\bf A}_{f^{\mathsf{c}}}^j)$, and for $n=j+1$, the relation is $\lambda_k({\bf A}_f^{j+1})= \lambda_k({\bf A}^2_{Gr})+\cdots + \lambda_k ({\bf A}_{f^{\mathsf{c}}}^{j+1})$. 
Similarly, for the $n$-fold OR product $\lambda_k({\bf A}_f^n)$ is calculated as follows: 
\begin{align}
\label{eq-cor-iter-gresh-eig-n} 
\lambda_k({\bf A}_f^n)=\lambda_k({\bf A}^2_{Gr})+\lambda_k ({\bf A}_{f^{\mathsf{c}}}^2)+\cdots + \lambda_k ({\bf A}_{f^{\mathsf{c}}}^{n-1})+\lambda_k ({\bf A}_{f^{\mathsf{c}}}^n), \: k\in[V^n]\ .
\end{align}

By substituting $\lambda_k({\bf A}^2_{Gr})$ in (\ref{eq-cor-iter-gresh-eig-n}) with $\lambda_k({\bf A}_f)$, we reach the statement of the corollary in (\ref{cor-iter-gresh-eig-general-nth}).
%%%%%%%%%%%%%%%%%%%%%%%%%%%%%%%%%%%%%%%%%%%%%%%
%%%%%%%%%%%%%%%%
%%%%%%%%%%%%%%%%%

%\IEEEtriggeratref{3}
%\Urlmuskip=0mu plus 1mu\relax
\begin{spacing}{1}
\bibliographystyle{IEEEtran}
\bibliography{ref}
\end{spacing}

\end{document}